\author[1]{David Gamarnik\thanks{Email: \textit{gamarnik@mit.edu}. Supported by NSF grant DMS-2015517.}}
\author[2]{Aukosh Jagannath\thanks{Email: \textit{a.jagannath@uwaterloo.ca}.
Supported by NSERC Discovery grant RGPIN-2020-04597 and DGECR-2020-00199.}}
\author[3]{Alexander S.\ Wein\thanks{Email: \textit{awein@cims.nyu.edu}. Supported by NSF awards CCF-2007443 and CCF-2106444. Part of this work was done while with the Courant Institute of Mathematical Sciences at New York University, partially supported by NSF grant DMS-1712730 and by the Simons Collaboration on Algorithms and Geometry. Part of this work was done at the Simons Institute for the Theory of Computing at UC Berkeley, supported by a Simons-Berkeley Research Fellowship.}}
\affil[1]{Sloan School of Management and Operations Research Center, MIT}
\affil[2]{Department of Statistics and Actuarial Science and Department of Applied Mathematics, University of Waterloo}
\affil[3]{Algorithms and Randomness Center, Georgia Tech}
\date{}
\title{Hardness of Random Optimization Problems for Boolean Circuits, Low-Degree Polynomials, and Langevin Dynamics
\footnote{Conference version~\cite{gamarnik2020lowFOCS} appeared in Proceedings of Foundations of Computer Science (FOCS) 2020.}}
\begin{document}

\maketitle

\begin{abstract}
We consider the problem of finding nearly optimal solutions of optimization problems with random objective functions. Such problems arise widely in the theory of random graphs, theoretical computer science, and statistical physics. Two concrete problems we consider are (a) optimizing the Hamiltonian of a spherical or Ising $p$-spin glass model, and (b) finding a large independent set in a sparse \ER graph. The following families of algorithms are considered: 
(a) low-degree polynomials of the input---a general framework that captures many prior algorithms;  
(b) low-depth Boolean circuits;
(c) the Langevin dynamics algorithm, a canonical Monte Carlo analogue of the gradient descent algorithm.
We show that these families of algorithms fail to produce nearly optimal solutions with high probability. 
For the case of Boolean circuits, our results improve the state-of-the-art bounds known in circuit
complexity theory (although we consider the search problem as opposed to the decision problem).

Our proof uses the fact that these models are known to exhibit a variant of the overlap gap property (OGP) of near-optimal solutions. Specifically, for both models, every two solutions whose objectives are above a certain threshold are either close or far from each other. The crux of our proof is that the classes of algorithms we consider exhibit a form of stability (noise-insensitivity): a small perturbation of the input induces a small 
perturbation of the output. We show by an interpolation argument that stable algorithms cannot overcome the OGP barrier.

The stability of Langevin dynamics is an immediate consequence of the well-posedness of stochastic differential equations. The stability of low-degree polynomials and Boolean circuits  is established using tools from Gaussian and Boolean analysis---namely hypercontractivity and total influence, as well as a novel lower bound for random walks  avoiding certain subsets, which we expect to be of independent interest. In the case of Boolean circuits, the result also makes use of Linal--Mansour--Nisan's classical theorem. Our techniques apply more broadly to low influence functions and we expect that they may apply more generally.
\end{abstract}

\clearpage

\maketitle

\section{Introduction}
We study the problem of producing near-optimal solutions of random optimization problems by several classes of algorithms including polynomials of low degree, Boolean circuits of low depth, and Langevin dynamics with limited time horizon. We prove that these algorithms cannot succeed at achieving a certain objective value in two types of optimization problems: (a) optimizing the Hamiltonian of the (spherical or Ising) $p$-spin glass model (that is, finding a near ground state), 
and (b) finding a large independent set in a sparse \ER graph, with high probability (w.h.p.)\ in the realization of the problem.

For the $p$-spin model optimization problem, we rule out polynomials of degree linear in  $n$ (with appropriately small constant in front)
provided the algorithm is assumed to succeed modulo exponentially small in $n$ probability, where $n$ is the problem dimension. 
More generally, we provide a tradeoff between the degree of polynomials that we rule out and the success probability assumed.
Our main result regarding Langevin dynamics applies (for natural reasons) only to the spherical $p$-spin model,
where the search space is continuous. We establish that with high probability, Langevin dynamics fails to find a nearly optimal solution
in time which is linear in the model dimension.

For the independent set problem, we focus our presentation on Boolean circuits. We prove a superpolynomial lower bound on the size of Boolean circuits which find
better than half optimum independent sets in sparse random graphs and have depth $O(\log n/\log\log n)$. 
Prior results of this kind require depth $o(\log n/\log\log n)$~\cite{rossman2010average,li2017ac} 
(although for the decision as opposed to search problem). While we state our results for Boolean circuits, 
our arguments apply more general to algorithms with low total influence
and we expect that they will be useful for other classes of algorithms.

\subsection*{Algorithmic classes: power, limits, and our results}
We now discuss each of the three classes of algorithms considered in this paper, including the motivations for studying these algorithm, prior results, and our contributions.

\subsubsection*{Algorithms based on low-degree polynomials}
Our motivation for studying low-degree polynomials is two-fold. Firstly, from an approximation theoretic perspective, producing near-optimal solutions by a polynomial in the input is natural. Furthermore, the best known polynomial-time algorithms in many problems of interest can be placed within the family of low-degree methods. For example, in the settings we consider here, the best known polynomial-time optimization results can be captured by the approximate message passing (AMP) framework~\cite{montanari-sk,EMS-opt,sellke2021optimizing} 
(for the $p$-spin) and by the class of local algorithms on sparse graphs~\cite{LauerWormald} (for the independent set problem), respectively. Both of these families of algorithms are captured by constant-degree polynomials; see Appendix~\ref{app:low-deg-alg} for more details.
For spherical $p$-spin glass models, earlier work of \cite{subag-full-rsb} introduced an algorithm which performs as well as AMP; we expect this algorithm to also fall into the family of low-degree methods, but verifying this is less clear. 

Secondly, a recent line of work~\cite{p-cal,HS-bayesian,sos-hidden,sam-thesis} on the \emph{sum-of-squares hierarchy} has produced compelling evidence that the power of low-degree polynomials is a good proxy for the intrinsic computational complexity of a broad class of \emph{hypothesis testing} problems. This line of work has focused on high-dimensional testing problems where the goal is to determine whether a given sample (e.g., an $n$-vertex graph) was drawn from the ``null'' distribution $\QQ_n$ (e.g., the \ER model) or the ``planted'' distribution $\PP_n$ (e.g., a random graph with planted structure such as a large clique or a small cut). Through an explicit and relatively straightforward calculation, one can determine whether there exists a (multivariate) polynomial of a given degree $D = D(n)$, $f$, of the entries of the sample that can distinguish $\PP_n$ from $\QQ_n$ (in a particular sense) \cite{HS-bayesian,sos-hidden,sam-thesis}. A conjecture of Hopkins~\cite{sam-thesis} (inspired by~\cite{p-cal,HS-bayesian,sos-hidden}) postulates that for ``natural'' high-dimensional testing problems, if there is a polynomial-time algorithm to distinguish $\PP_n$ from $\QQ_n$ (with error probability $o(1)$) then there is also an $O(\log n)$-degree polynomial that can distinguish $\PP_n$ from $\QQ_n$. One justification for this conjecture is its deep connection with the \emph{sum-of-squares (SoS) hierarchy}---a powerful class of meta-algorithms---and in particular the \emph{pseudo-calibration} approach~\cite{p-cal}, which suggests that low-degree polynomials are as powerful as any SoS algorithm (see~\cite{sos-hidden,sam-thesis,sos-survey} for details). Another justification for the conjecture is that $O(\log n)$-degree polynomials can capture a very broad class of spectral methods (see~\cite[Theorem~4.4]{lowdeg-notes} for specifics), which in turn capture the best known algorithms for many high-dimensional testing problems (e.g., \cite{tensor-pca-sos,fast-sos,sos-hidden}). For many classical statistical tasks---planted clique, sparse PCA, community detection, tensor PCA, etc.---it has indeed been verified that $O(\log n)$-degree polynomials succeed (at testing) in precisely the same parameter regime as the best known polynomial-time algorithms (e.g.,~\cite{HS-bayesian,sos-hidden,sam-thesis,sk-cert,lowdeg-notes,subexp-sparse}). (Oftentimes, the hypothesis testing variants of these types of problems seem to be equally hard as the more standard task of recovering the planted signal.) Lower bounds against low-degree polynomials are one concrete form of evidence that the existing algorithms for these problems cannot be improved (at least without drastically new algorithmic techniques). For more details on the low-degree framework for hypothesis testing, we refer the reader to~\cite{sam-thesis,lowdeg-notes}.

Motivated by this, one goal of this paper is to extend the low-degree framework to the setting of random optimization problems. This includes defining what it means for a low-degree polynomial to succeed at an optimization task and giving techniques by which one can prove lower bounds against all low-degree polynomials. 

In this paper, we show a linear in $n$ lower bound on the degree required for such an algorithm to succeed modulo exponentially small probability in $n$ (the dimension of the problem) and more generally obtain a tradeoff between the degree and the success probability.
A substantial difficulty that we face in the optimization setting as compared to the testing setting is that it does not seem possible to prove lower bounds against low-degree polynomials via an explicit linear-algebraic calculation. 
To overcome this, our proofs take a more indirect route and leverage a certain structural property---the \emph{overlap gap property (OGP)}---of the optimization landscape. The core idea of this paper is that algorithms with low total influence, appropriately defined, exhibit a certain stability property that prevents them from overcoming the OGP barrier. We will discuss the OGP and refutations proved using it in detail momentarily.

For the case of low-degree polynomials, the stability property we require is Theorem~\ref{thm:hyp-stable} which states that if we have two $\rho$-correlated random instances $X$ and $Y$ of a random Gaussian tensor, and $f$ is a {vector-valued} low-degree polynomial defined on such tensors, then the distance, $\|f(X)-f(Y)\|_2$, is unlikely to exceed a certain value which depends continuously on $\rho$. In particular this distance is small when $\rho\approx 1$. This result relies on well-known consequences of hypercontractivity for low-degree polynomials
and basic properties of Hermite polynomials (the orthogonal polynomials of the Gaussian measure).

We emphasize that the class of algorithms based on low-degree polynomials which are ruled out by our lower bounds not only captures existing methods, such as AMP and local algorithms, but contains a strictly larger (in a substantial way) class of algorithms than prior work on random optimization problems. For example, the best known polynomial-time algorithms for optimizing the $p$-spin Hamiltonian are captured by the AMP framework \cite{montanari-sk,EMS-opt}. Roughly speaking, AMP algorithms combine a linear update step (tensor power iteration) with entry-wise non-linear operations. For a fairly general class of $p$-spin optimization problems (including spherical and Ising mixed $p$-spin models), it is now known precisely what objective value can be reached by the best possible AMP algorithm~\cite{EMS-opt}. While this may seem like the end of the story, we point out that for the related \emph{tensor PCA} problem---which is a variant of the $p$-spin model with a planted rank-1 signal---AMP is known to be substantially sub-optimal compared to other polynomial-time algorithms~\cite{RM-tensor}. None of the best known polynomial-time algorithms~\cite{RM-tensor,tensor-pca-sos,fast-sos,kikuchi,hastings-quantum,replicated-gradient} use the tensor power iteration step as in AMP, and there is evidence that this is fundamental~\cite{algorithmic-tensor}; instead, the optimal algorithms include spectral methods derived from different tensor operations such as \emph{tensor unfolding}~\cite{RM-tensor,tensor-pca-sos} (which can be interpreted as a higher-order ``lifting'' of AMP~\cite{kikuchi}). These spectral methods are captured by $O(\log n)$-degree polynomials. With this in mind, we should \emph{a priori} be concerned that AMP might also be sub-optimal for the (non-planted) $p$-spin optimization problem. This highlights the need for lower bounds that rule out not just AMP, but all low-degree polynomial algorithms. While the lower bounds in this paper do not achieve the precise optimal threshold for objective value (see~\cite{huang2021tight} for a recent improvement), they rule out quite a large class of algorithms compared to existing lower bounds for random optimization problems.
We refer the reader to Appendix~\ref{app:low-deg-alg} for a more detailed discussion of how various optimization algorithms can be approximated by low-degree polynomials.

\subsubsection*{Bounded depth Boolean circuits}
Next we discuss algorithms based on Boolean circuits, as well as the ensuing stability property. 
Boolean circuits constitute one of the standard models for understanding algorithmic tractability 
and hardness in combinatorial optimization problems~\cite{arora2009computational,sipser,alon2004probabilistic}. 
One potential route to proving the widely-believed
conjecture $P\ne NP$ would be to show non-existence of polynomial-size Boolean circuits
solving problems in $NP$. A large body of literature in circuit complexity is devoted
to establishing limits on the power of circuits with bounded depth, specifically
the power of constant-depth circuits known as AC[0] circuits and its immediate extension---circuits with nearly logarithmic (in the problem size) depth. Many hardness results have been
obtained by working on models with random inputs. A classical result is one by H\r{a}stad~\cite{hastad1986almost}
who established a $\log n/(\log\log n+O(1))$ lower bound on the depth of any poly-size circuit that computes the $n$-parity function. A lot of focus is on circuit complexity for computing various
natural combinatorial optimization problems.
For example Rossman~\cite{rossman2010average} has shown
that poly-size circuits detecting the presence of a $k(n)$-size clique in a graph must have 
depth at least $\Omega(\log n/(k^2(n)\log\log n))$, where $n$ is the number of graph nodes and $k(n)$
is any function growing in $n$.
This was obtained by working with the sparse random \ER  graph model with 
the edge probability tuned so that the largest ``naturally occurring'' clique is of size smaller than $k(n)$,
with high probability as $n$ increases. In fact, for combinatorial optimization problems this 
represents the state-of-the-art lower bound on circuit depth for polynomial-size circuits, though many extensions exist for subgraph homomorphism existence 
problems~\cite{rossman2018lower,li2017ac}.

In this paper, we establish a lower bound of $\log n/((1+\epsilon)\log\log n)$ on the depth of any polynomial-size Boolean circuit which solves a similar combinatorial optimization problem: largest independent set of a graph. 
Specifically, considering independent sets in sparse \ER graphs $\G(n,d/n)$ (graphs on $n$ nodes
with each edge present with probability $d/n$ independently), we show that any poly-size circuit with the aforementioned
bound on depth fails to find independent sets which are larger than half optimum in this graph, 
in the doubly asymptotic regime
$n\to\infty,d\to\infty$. Half-optimality is largely believed to be the edge of algorithmic tractability for this 
problem~\cite{gamarnik2014limits,rahman2017local,coja2011independent,wein2020optimal}.
Notably, unlike the aforementioned result for cliques, our lower bound does 
not depend on the objective value, such as $k(n)$, and thus our bound is 
$O(\log n/\log\log n)$ as opposed to $o(\log/\log\log n)$. Our result though, strictly speaking, does not improve upon
\cite{rossman2010average}, since our circuits are supposed to produce the \emph{entire solution}, as opposed to 
just solving
the associated YES/NO decision problem. Nevertheless,  our result presents 
the strongest known circuit depth lower bounds for combinatorial optimization problems, to the best of our knowledge.
An arXiv note accompanying this paper~\cite{gamarnik2021circuit} establishes a similar result on circuit complexity 
for the $p$-spin model exactly in the regime
where low-degree polynomials are ruled out per the discussion above. 

As for the case of the $p$-spin optimization problem, we employ a version of the OGP exhibited by the space of 
large independent sets in the graph $\G(n,d/n)$. In particular, 
we use a multi-overlap version of the OGP which rules out the presence of a certain $m$-tuple
of larger-than-half-optimum independent sets in a family of correlated $\G(n,d/n)$ graphs with some prescribed intersection 
properties. A precise formulation is given in Theorem~\ref{thm:multi-OGP}. 
This version of the the multi-OGP was established recently by the third author in~\cite{wein2020optimal} and used to rule out methods based on low-degree polynomials for the same problem building on the conference version of the 
current paper~\cite{gamarnik2020lowFOCS}). Here we take the approach a step further and use this multi-OGP structure to rule out poly-size Boolean circuits with the stated bound on depth.  
The  proof idea is again to pit OGP against stable algorithms. 
Poly-size circuits with the stated bound on depth are known to have bounded total influence by the celebrated Linial--Mansour--Nisan (LMN) Theorem~\cite{LMN}. 
(The version we use is found in~\cite[Thm 4.30]{o-book}.) The tightest known bound of this kind
was obtained by Tal in~\cite{tal2017tight}.
This is used to establish
the stability of circuits and ultimately its failure to overcome the OGP barrier. 

Two complications do arise however in our analysis. First, in light of the fact that our underlying model
is discrete, the approach based on correlated Gaussians used for $p$-spin models does not appear to be applicable.
Instead we consider a discrete interpolation scheme where a family of correlated random graphs are obtained
by resampling one edge at a time. For this sequence, in order to ensure stability, we need to guarantee 
that \emph{every single} resampling step results in a small change of the circuit's output. While the total influence bounds provide an upper bound on the total \emph{expected} number of resampling steps violating this condition, 
they do not immediately provide a guarantee of the \emph{for all} type. For this purpose we derive a technical lemma (Proposition~\ref{prop:no-jump-events}) with the following informal statement. 
Suppose any collection of  pairs $(u,v)$ for $u,v\in\{\pm 1\}^m$ with Hamming distance $1$
are marked. Suppose the fraction of marked edges is $x$ (so that the total number is $x2^{m-1}m$). 
Consider a random walk on $\{\pm 1\}^m$ run for {$m$ } steps starting from a uniform random point. 
We prove that with probability at least {$2^{-xm}$}, this random walk will \emph{never} encounter a marked edge.  (For our purposes, marked edges will correspond to resampling steps which result in a large change of the output, but the lemma applies for \emph{any} collection of marked edges.)  This lemma provides the tool for 
relating the likelihood of
\emph{for all} type to the total expected likelihood. We believe that this lemma is of independent interest.
An interesting open question
is whether an analogue of this statement holds for continuous-time continuous-space processes, such as Brownian motion
on a sphere.

The second complication arises because we are dealing with a biased Bernoulli distribution (with parameter
$d/n$), whereas the LMN theorem \emph{a priori} applies to the unbiased (uniform) distribution. While there are extensions
of the LMN theorem to biased cases (see for example Lemma~9 in~\cite{furst1991improved}), these extensions  
appear to be too weak for our purposes. Instead we resort to a trick which relates a biased Bernoulli distribution 
to an unbiased Bernoulli
distribution  as follows. Consider a uniformly generated random $0/1$ string
of $\log_2 n$ bits and the function of it which outputs $1$ if the value encoded by this string (using the binary
representation) is at most $d-1$, and $0$ otherwise. Note that the likelihood of obtaining value $1$ is exactly $d/n$
and the likelihood of obtaining value $0$ is $1-d/n$.
Thus we can simulate an adjacency matrix of a graph by creating these $(\log_2 n){n\choose 2}$ bits and taking a uniform
distribution on them. It is straightforward to check that the translation from bits to the adjacency matrix can be 
coded by some depth-3 poly-size circuit $\Phi$. 
Then instead of analyzing the performance of a circuit $C$ acting on the adjacency 
matrix, we analyze the performance of this circuit composed with $\Phi$, i.e., $C\circ \Phi$. We can then use the canonical version of the LMN theorem applied to the uniform distribution to obtain our required stability result.

Our result raises the question of whether a similar circuit lower bound can be established for dense \ER graphs
$\G(n,p)$ where each edge is present with a constant probability $p$ (say $1/2$) independently 
for all edges. Unfortunately, while the result  very likely holds, our proof technique comes short.
Specifically, while the multi-OGP described for the sequences of sparse \ER graphs extends to the dense
case as well (this can be easily verified, though not formally written down in any published papers), 
the high-probability likelihood of the multi-OGP is only of the form $1-\exp(-\Theta(\log^2 n))$, 
as opposed to $1-\exp(-\Theta(n))$ in the sparse case. This turns out to be too weak to rule out Boolean
circuits of any depth. The exponential tail of the probability above for the case of sparse graphs turns
out to be crucial for our proof technique to apply. We leave it as an interesting open question
to rule out Boolean circuits for finding independent sets (or equivalently cliques) in dense \ER
graphs  even at constant depth, whether using the OGP-based or any other approach.

\subsubsection*{Langevin dynamics}
Finally, we briefly discuss our results for Langevin dynamics. As mentioned earlier, this algorithm,
which is a continuous-time analogue of a Markov Chain Monte Carlo method, is applicable only to the spherical $p$-spin model.
We  use similar techniques based on the OGP to give lower bounds against Langevin dynamics, when it is run
for linear time (with small enough constant). Specifically, we controls the divergence of two paths of the dynamics
on two correlated instances via Gronwall's inequality to show that the algorithm is stable: a small change in the $p$-spin Gaussian coupling
results in a small change in the algorithm output when it is run for linear time. 
We use this stability result to show that Langevin dynamics on this time horizon cannot overcome the OGP barrier. We conjecture
that Langevin dynamics needs to be run for exponentially long time to achieve near optimality, but our current
technique is not capable of showing this. (We note here that our argument also applies to gradient flow which can be viewed as Langevin dynamics at ``zero temperature''.)

\subsection*{OGP as a Barrier to Algorithms}
We now provide some background on OGP and its variants, as well as its uses in some prior works. 
A recent survey on the topic by the first author is~\cite{gamarnik2021overlap}. 
While the OGP has already been used to rule out various classes of other algorithms in previous works 
(see below), its usage in our current setting, specifically the need to establish stability of algorithms,
 presents some substantial technical difficulties which we need to overcome.  Roughly speaking, the OGP states that for every pair of  solutions $x_1$ and $x_2$ achieving some proximity to optimality, the absolute value of their normalized  overlap (normalized inner product) measured with respect to the ambient Hilbert space must lie in a disjoint union of intervals $[0,\nu_1]\cup [\nu_2,1]$. 

An extended version of this property, which we call the ensemble-OGP corresponds to the case of families 
of instances in the sense that  if one considers a natural interpolation between two independent instances 
of the problem, then the ensemble-OGP holds if 
for every two members of the interpolated family and every pair of solutions $x_1,x_2$ which 
are near optimizers for these two members, respectively, it is still the case that the overlap of $x_1$ and $x_2$ 
belongs to $[0,\nu_1]\cup [\nu_2,1]$. The main idea of the proof from the ensemble-OGP is based on the 
following contradiction argument. 
If the result of the algorithm is known to be stable then, denoting by $x(t)$ the result of the algorithm 
corresponding to the interpolation step $t$, it should be the case that the overlap between $x(0)$ and $x(t)$ 
changes ``continuously''. At the same time we show separately that the starting solution $x(0)$ and terminal 
solution $x(1)$ have an overlap at most $\nu_1$,  and thus at some point the overlap between $x(0)$ and $x(t)$ 
belongs to $(\nu_1,\nu_2)$, which is a contradiction. This type of contradiction argument can be used
to prove that algorithms cannot achieve a certain multiplicative constant guarantee from optimality.
Unfortunately, it is usually not enough to prove that such algorithms fail beyond the known algorithmic bounds.
For this purpose a more powerful extension of the OGP is considered (both for the single and ensemble versions),
namely the multi-overlap version which considers $m$-tuples of near optimal solutions and rules out
particular patterns of intersections between them.

The OGP emerged for the first time in the context of spin glasses and random constraint satisfaction problems. It was first established implicitly in~\cite{achlioptas2006solution} and \cite{mezard2005clustering}. These papers established that the set of satisfying assignments of a random $k$-SAT formula partitions into clusters above a certain clause-to-variable density. This was postulated as evidence for algorithmic hardness of finding satisfying assignments at such densities. Implicitly, the proof of the cluster existence revealed that the overlaps of satisfying assignments exhibit the OGP, and the clustering property was obtained as an implication. 
It is worth noting that while the OGP implies the existence of clusters, the converse is not necessarily the case, and we refer to~\cite{gamarnik2021overlap} for a discussion of examples based on the binary perceptron model, as well as the distinctions between the weak and strong forms of clustering.

A provable algorithmic implication of the OGP was established for the first time in~\cite{gamarnik2014limits}, where 
OGP was proven to be a barrier for local algorithms---defined as the 
so-called \emph{factors of i.i.d.\ (FIID)}---designed to find large independent sets in sparse \ER graphs
$\G(n,d/n)$. This is the same class of graphs we consider in this paper.
The OGP was used to show that, asymptotically, these algorithms cannot find independent sets larger than a multiplicative factor $1/2+1/(2\sqrt{2}) \approx 0.85$ of the optimal one.   
This was improved to match the best known  algorithmic threshold $1/2$
by Rahman and Vir\'ag in~\cite{rahman2017local} by introducing for the first time the multi-overlap version of OGP. 
Since then, the multi-overlap versions of the OGP have turned out to be a very effective method of ruling out algorithms 
all the way to the known algorithmic thresholds, including~\cite{gamarnik2017performance, chen2019suboptimality, wein2020optimal,bresler2021algorithmic,huang2021tight}. In an earlier (conference) 
version of this paper, the result
in~\cite{gamarnik2014limits} for the $1/2+1/(2\sqrt{2})$ approximation factor for independent sets 
was generalized to the class of low-degree polynomial based algorithms.
Then in the work of the third author~\cite{wein2020optimal} this was further improved to the known algorithmic threshold $1/2$ based on a novel (asymmetric) version of the multi-OGP. It is this version that we employ
in our result for Boolean circuits. Unfortunately a (symmetric) version of the  multi-OGP considered
earlier in~\cite{rahman2017local} does not appear to be useful for either low-degree polynomial or circuits.
We discuss in Appendix~\ref{app:low-deg-alg} how local algorithms can be captured by constant-degree 
polynomials. 

Several other papers used OGP to rule out various classes of algorithms, including local algorithms 
for finding large cuts in random hypergraphs~\cite{chen2019suboptimality}, random walk--based 
algorithms (WALKSAT)~\cite{coja2017walksat}, and Lipschitz iteration schemes (such as AMP) 
for optimizing the Hamiltonian of the 
Ising $p$-spin model~\cite{gamarnik2019overlap}. Also several recent works established the presence of the
(multi)-OGP
all the way to the known algorithmic thresholds, including~\cite{bresler2021algorithmic} for low-degree polynomials 
for the random $k$-SAT model, and~\cite{huang2021tight} for general classes of algorithms exhibiting
Lipschitz type form of stability for the $p$-spin models. The latter work uses a particularly ingenious form of the multi-OGP
inspired by the ultrametricity properties exhibited by the $p$-spin models~\cite{Mez84,panchenko2013sherrington}.

\subsubsection*{Notation}

We use $\| \cdot \|_2$ and $\langle \cdot,\cdot \rangle$ to denote the standard $\ell^2$ norm and inner product of vectors. We also use the same notation to denote the Frobenius norm and inner product of tensors. We use the term \emph{polynomial} both to refer to (multivariate) polynomials $\RR^m \to \RR$ in the usual sense, and to refer to vector-valued polynomials $\RR^m \to \RR^n$ defined as in~\eqref{eq:vec-val-poly}. We abuse notation and use the term \emph{degree-$D$ polynomial} to mean a polynomial of degree \emph{at most} $D$. A \emph{random polynomial} has possibly-random coefficients, as defined in Section~\ref{sec:poly-alg}. We use $A^c$ to denote the complement of an event $A$. Unless stated otherwise, asymptotic notation such as $o(1)$ or $\Omega(n)$ refers to the limit $n \to \infty$ with all other parameters held fixed. In other words, this notation may hide constant factors depending on other parameters such as the degree $d$ of the graph in the independent set problem.
We use $\sim$ to denote equality in distribution. Thus for two random variables $X$ and $Y$, $X\sim Y$ means
$X$ and $Y$ have an identical distribution.

\section{Main Results}
\label{sec:main-results}

\subsection{Optimizing the $p$-Spin Glass Hamiltonian}
\label{sec:p-spin}

The first class of problems we consider here is optimization of the (pure) $p$-spin glass Hamiltonian, defined as follows. Fix an integer $p \geq 2$ and let $Y \in (\RR^n)^{\otimes p}$ be a $p$-tensor with real coefficients. For $x \in \RR^n$, consider the objective function
\begin{equation}\label{eq:p-spin-def}
H_n(x;Y) =  \frac{1}{n^{(p+1)/2}} \g{Y,x^\tp}.
\end{equation}
Note that all homogeneous polynomials of degree $p$ (in the variables $x$) can be written in this form for some $Y$. We focus on the case of a random coefficient tensor $Y$. In this setting, the function $H_n$ is 
sometimes called the  Hamiltonian for a $p$-spin glass model in the statistical physics literature. More precisely, for various choices of a (compact) domain $\cX_n \subset \R^n$, we are interested in approximately solving the optimization problem
\begin{equation}\label{eq:max-H}
\max_{x \in \cX_{n}} H_n(x;Y)
\end{equation}
given a random realization of the coefficient tensor $Y$ with i.i.d.\ $\mathcal{N}(0,1)$ entries. Here and in the following we let $\PP_Y$ denote the law of $Y$. (When it is clear from context we omit the subscript $Y$.)

We begin first with a simple norm constraint, namely, we will take as domain 
$\cS_n =\{x\in\R^n: \norm{x}_2 =\sqrt n\}$, the sphere in $\R^n$ of radius $\sqrt{n}$.
We then turn to understanding a binary constraint, namely where the domain is the discrete hypercube $\Sigma_n =\{+1,-1\}^n$.  Following the statistical physics literature, in the former setting, we call the objective the \emph{spherical} $p$-spin glass Hamiltonian and the latter setting the \emph{Ising} $p$-spin glass Hamiltonian.

In both settings, quite a lot is known about the maximum. It can be shown that the maximum value of $H_n$ has an almost sure limit (as $n \to \infty$ with $p$ fixed), 
called the \emph{ground state energy}, which we will denote by $E_p(\cS)$ 
for the spherical setting and $E_p(\Sigma)$ for the Ising setting, and explicit 
variational formulas are known for $E_p(\cS)$ \cite{ABC13,JagTob17,ChenSen17} and $E_p(\Sigma)$ \cite{Ton02,AuffChen18,JS17}.

Algorithmically, it is known how to find, in polynomial time, a solution of value $E_p^\infty(\cS) - \varepsilon$ or $E_p^\infty(\Sigma_n) - \varepsilon$ (respectively for the spherical and Ising settings) for any constant $\varepsilon > 0$~\cite{subag-full-rsb,montanari-sk,EMS-opt}. In both the spherical and Ising settings, these constants satisfy $E_2^\infty = E_2$ and $E_p^\infty < E_p$ for $p \ge 3$. In other words, it is known how to efficiently optimize arbitrarily close to the optimal value in the $p=2$ case, but not when $p \ge 3$.

\subsubsection{Low-Degree Polynomial Algorithms}\label{sec:poly-alg}

Our goal here is to understand how well one can optimize~\eqref{eq:max-H} via the output of a vector-valued low-degree polynomial in the coefficients $Y$. To simplify notation we will often abuse notation and refer to the space of 
$p$-tensors on $\R^n$ by $\R^m \cong (\R^n)^\tp$ where $m=n^p$.

We say that a function $f:\RR^m\to\RR^n$ is a polynomial of degree (at most) $D$ if it may be written in the form 
\begin{equation}\label{eq:vec-val-poly}
f(Y) = (f_1(Y),\ldots,f_n(Y)),
\end{equation}
where each $f_i:\RR^m\to\R$ is a polynomial of degree at most $D$. 

We will also consider the case where $f$ is allowed to have random coefficients, 
provided that these coefficients are independent of $Y$. That is, we  will assume that 
there is some probability space $(\Omega,\PP_\omega)$ and that 
$f:\RR^m\times\Omega\to\RR^n$ is such that $f(\cdot,\omega)$ is a polynomial of degree 
at most $D$ for each $\omega\in\Omega$. We will abuse notation and refer to this as a \emph{random polynomial} $f: \RR^m \to \RR^n$.

Our precise notion of what it means for a polynomial to optimize~$H_n$ will depend somewhat on the domain~$\cX_n$. This is because it is too much to ask for the polynomial's output to lie in $\cX_n$ exactly, and so we fix a canonical rounding scheme that maps the polynomial's output to $\cX_n$. We begin by defining this notion for the sphere: $\cX_n = \cS_n$.

\paragraph{The spherical case.}

We will round a polynomial's output to the sphere $\cS_n$ by normalizing it in the standard way. To this end, for a random polynomial $f: \RR^m \to \RR^n$ we define the random function $g_f: \RR^m \to \cS_n \cup \{\infty\}$ by
\[
g_f(Y,\omega) = \sqrt n \frac{f(Y,\omega)}{\norm{f(Y,\omega)}_2},
\]
with the convention $g_f(Y,\omega) = \infty$ if $f(Y,\omega)=0$.

\begin{definition}
For parameters $\mu \in \RR$, $\delta \in [0,1]$, $\gamma \in [0,1]$, and a random polynomial $f: \RR^m \to \RR^n$, we say that $f$ $(\mu,\delta,\gamma)$-optimizes the objective \eqref{eq:p-spin-def} on $\cS_n$ if the following are satisfied when $(Y,\omega)\sim\PP_Y\otimes \PP_\omega$:
\begin{itemize}
    \item $\displaystyle \Ex_{Y,\omega} \|f(Y,\omega)\|^2_2 = n$ \; (normalization).
    \item With probability at least $1-\delta$ over $Y$ and $\omega$, we have both $H_n(g_f(Y,\omega);Y) \ge \mu$ and $\|f(Y,\omega)\|_2 \ge \gamma \sqrt{n}$.
\end{itemize}
\end{definition}

\noindent Implicitly in this definition, the case $f(Y,\omega)=0$ must occur with probability at most $\delta$. The meaning of the parameters $(\mu,\delta,\gamma)$ is as follows: $\mu$ is the objective value attained after normalizing the polynomial's output to the sphere, and $\delta$ is the algorithm's failure probability. Finally, $\gamma$ is involved in the norm bound $\|f(Y,\omega)\|_2 \ge \gamma \sqrt{n}$ that we need for technical reasons. Since the domain is $\cS_n$, $f$ is ``supposed to'' output a vector of norm $\sqrt{n}$. While we do not require this to hold exactly (and have corrected for this by normalizing $f$'s output), we do need to require that $f$ usually does not output a vector of norm too much smaller than $\sqrt{n}$. This norm bound is important for our proofs because it ensures that a small change in $f(Y,\omega)$ can only induce a small change in $g_f(Y,\omega)$.

We now state our main result on low-degree hardness of the spherical $p$-spin model, with the proof deferred to Section~\ref{sec:pf-lowdeg-pspin}.

\begin{theorem}\label{thm:spherical-lowdeg}
For any even integer $p \ge 4$ there exist constants $\mu < E_p(\cS)$, $n^* \in \NN$, and $\delta^* > 0$ such that the following holds. For any $n \ge n^*$, any $D \in \NN$, any $\delta \le \min\{\delta^*,\frac{1}{4} \exp(-2D)\}$, and any $\gamma \ge (2/3)^D$, there is no random degree-$D$ polynomial that $(\mu, \delta, \gamma)$-optimizes \eqref{eq:p-spin-def} on $\cS_n$.
\end{theorem}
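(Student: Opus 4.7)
The plan is to combine the overlap gap property (OGP) for the spherical pure $p$-spin Hamiltonian with the stability of low-degree polynomials from Theorem~\ref{thm:hyp-stable}, via a discrete interpolation argument of the type introduced in~\cite{chen2019suboptimality,gamarnik2019overlap}. The first ingredient I would establish is an \emph{ensemble OGP}: for even $p\geq 4$, there exist constants $\mu < E_p(\cS)$ and $0\leq \nu_1 < \nu_2 < 1$ such that, if $Y^{(0)}$ and $Y^{(1)}$ are independent copies of the Gaussian coefficient tensor and
\[
Y^{(t)} = \cos(t\pi/2)\, Y^{(0)} + \sin(t\pi/2)\, Y^{(1)}, \qquad t\in[0,1],
\]
then with probability at least $1 - e^{-\Omega(n)}$, for every $t\in[0,1]$ and every pair $(x_0, x_1)\in \cS_n\times\cS_n$ with $H_n(x_0;Y^{(0)})\geq \mu$ and $H_n(x_1;Y^{(t)})\geq \mu$ one has $\langle x_0,x_1\rangle/n \notin (\nu_1,\nu_2)$; and moreover at $t=1$ the stronger bound $\langle x_0,x_1\rangle/n \in [-\nu_1,\nu_1]$ holds, which can be derived by a first-moment bound using the independence of $Y^{(0)}, Y^{(1)}$ together with the $x\leftrightarrow -x$ symmetry of $H_n$ for even $p$.

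Second, assuming for contradiction the existence of a random degree-$D$ polynomial $f$ that $(\mu,\delta,\gamma)$-optimizes, I would take a discretization $0=t_0<t_1<\cdots<t_K=1$ with $K$ polynomial in $n$, share a single seed $\omega$ across all steps, and set $x^{(k)} = g_f(Y^{(t_k)},\omega)$. Consecutive tensors $Y^{(t_k)}, Y^{(t_{k+1})}$ are $\rho$-correlated Gaussians with $1-\rho = O(1/K^2)$, so Theorem~\ref{thm:hyp-stable} yields a tail bound of the form
\[
\PP\big[\,\|f(Y^{(t_k)}) - f(Y^{(t_{k+1})})\|_2 > \eta\sqrt{n}\,\big] \leq e^{-\Omega(D)} + 2\delta,
\]
where $\eta$ can be made as small as I like by increasing $K$. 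Combined with the hypothesized norm lower bound $\|f(Y^{(t_k)})\|_2 \geq \gamma\sqrt n$, this controls the one-step change in the overlap $O_k := \langle x^{(0)},x^{(k)}\rangle/n$, so that $|O_{k+1} - O_k| < \nu_2 - \nu_1$ on each step with high probability.

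Finally, I would union-bound the OGP event, the $K+1$ algorithmic success events (total cost $(K+1)\delta$), and the $K$ stability events. The hypotheses $\delta \leq \tfrac14 e^{-2D}$ and $\gamma \geq (2/3)^D$ in the theorem are calibrated precisely so that this total cost stays below $1$ when $K$ is chosen polynomial in $n$; on a realization where every event holds, $O_0 = 1 \geq \nu_2$ and $O_K \leq \nu_1$ by the endpoint OGP, while consecutive overlaps never jump by more than $\nu_2 - \nu_1$, so by discrete intermediate value some $O_k$ must land in the forbidden interval $(\nu_1,\nu_2)$, contradicting OGP. The main obstacle is the quantitative balancing of the three probability scales—OGP failure $e^{-\Omega(n)}$, stability failure that degrades in both $D$ and $1/\gamma$, and algorithmic failure $\delta$—so that the regime $\delta\leq \tfrac14 e^{-2D}$, $\gamma\geq (2/3)^D$ is exactly wide enough to execute the argument; in particular the exponential-in-$D$ tolerance in $\gamma$ reflects the blow-up of $1/\gamma^2$ inside the cosine-distance computation when turning a bound on $\|f(Y^{(t_k)}) - f(Y^{(t_{k+1})})\|_2$ into a bound on the overlap increment. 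A secondary subtlety is the endpoint strengthening $O_K\leq \nu_1$, which is what forces the requirement that $p$ be even.
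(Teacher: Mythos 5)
Your high-level plan is the same as the paper's: reduce to a deterministic polynomial by fixing a good seed $\omega$, discretize the sine--cosine interpolation path between two independent disorders, apply the hypercontractive stability bound (Theorem~\ref{thm:hyp-stable}) step-by-step, and derive a contradiction with the ensemble OGP and $\nu_1$-separation from Theorem~\ref{thm:pspin-ogp} via a discrete intermediate-value argument. Most of the mechanics are sound, but the quantitative calibration has a genuine error.

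The error is in the claim that the discretization can be taken with $K$ ``polynomial in $n$.'' That choice cannot work in the regime where $D$ is a constant: if, say, $D=1$, the theorem allows $\delta$ to be a fixed positive constant (anything up to $\min\{\delta^*, \tfrac14 e^{-2}\}$), and then the union-bound cost $(K+1)\delta$ over $K = \mathrm{poly}(n)$ success events blows up. The correct scale is $K = \Theta((\sqrt{6e}/\gamma)^{\tilde D}\sqrt{\tilde D})$ for some effective degree $\tilde D \ge D$, i.e., exponential in $D$ but \emph{not} dependent on $n$; with $\gamma \ge (2/3)^D$ the lower bound on $K$ is roughly $(\tfrac32\sqrt{6e})^{\tilde D}$, and since $\tfrac32\sqrt{6e} < e^2$, a constant-probability union bound against success failures forces exactly the theorem's hypothesis $\delta \lesssim e^{-2\tilde D}$. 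If you instead try to push the stability tail down to $e^{-\Omega(n)}$ by applying Theorem~\ref{thm:hyp-stable} with a larger exponent, the required $K$ becomes exponential in $n$, not polynomial. So ``$K$ polynomial in $n$'' is the wrong regime in both directions; $K$ should be a function of $D$ (and $\gamma$, $\nu_2-\nu_1$) alone.

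A secondary point: your explanation that the endpoint strengthening $O_K \le \nu_1$ ``forces $p$ even'' is not right. The $\nu_1$-separation at the endpoint follows from the independence of the two endpoint disorders by a first-moment/chaos argument and holds for general $p$; the restriction to even $p$ comes from the OGP input itself (Theorem~\ref{thm:pspin-ogp}, which relies on the Guerra--Talagrand-style bounds in \cite{ChenSen17,AuffChen18} and is only known for even $p$). Also, the ``$+2\delta$'' in your stability tail is extraneous: once you've fixed a good $\omega^*$ (which should be chosen so that both the second-moment normalization and the failure probability are within a constant factor, via Markov), Theorem~\ref{thm:hyp-stable} applies to the resulting deterministic polynomial and gives a clean tail $\exp(-2\tilde D)$ with no dependence on $\delta$.
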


\noindent A number of remarks are in order. First, this result exhibits a tradeoff between the degree $D$ of polynomials that we can rule out and the failure probability $\delta$ that we need to assume. In order to rule out polynomials of \emph{any} constant degree, we need only the mild assumption $\delta = o(1)$. On the other hand, if we are willing to restrict to algorithms of failure probability $\delta = \exp(-cn)$ (which we believe is reasonable to expect in this setting), we can rule out all polynomials of degree $D \le c'n$ for a constant $c' = c'(c)$. It has been observed in various hypothesis testing problems that the class of degree-$n^\delta$ polynomials is at least as powerful as all known $\exp(n^{\delta-o(1)})$-time algorithms~\cite{sam-thesis,lowdeg-notes,subexp-sparse}. This suggests that optimizing arbitrarily close to the optimal value in the spherical $p$-spin (for $p \ge 4$ even) requires fully exponential time $\exp(n^{1-o(1)})$.

The best known results for polynomial-time optimization of the spherical $p$-spin were first proved by~\cite{subag-full-rsb} but can also be recovered via the AMP framework of~\cite{EMS-opt}. As discussed in Appendix~\ref{app:low-deg-alg}, these AMP algorithms can be captured by constant-degree polynomials. Furthermore, the output of such an algorithm concentrates tightly around $\sqrt{n}$ and thus easily satisfies the norm bound with $\gamma = (2/3)^D$ required by our result. We also expect that these AMP algorithms have failure probability $\delta = \exp(-\Omega(n))$; while this has not been established formally, a similar result on concentration of AMP-type algorithms has been shown by~\cite{gamarnik2019overlap}.

Our results are limited to the case where $p \ge 4$ is even and $\mu$ is a constant slightly smaller than the optimal value $E_p(\cS)$. These restrictions are in place because the OGP property used in our proof is only known to hold for these values of $p$ and $\mu$. If the OGP were proven for other values of $p$ or for a lower threshold $\mu$, our results would immediately extend to give low-degree hardness for these parameters (see Theorem~\ref{thm:spherical-ogp-lowdeg}). Note that we cannot hope for the result to hold when $p=2$ because this is a simple eigenvector problem with no computational hardness: there is a constant-degree algorithm to optimize arbitrarily close to the maximum (see Appendix~\ref{app:low-deg-alg}).

\paragraph{The Ising case.}

We now turn to low-degree hardness in the Ising setting, where the domain is the hypercube: $\cX_n = \Sigma_n$. In this case, we round a polynomial's output to the hypercube by applying the sign function. For $x \in \RR$, let
\[ \sgn(x) = \left\{\begin{array}{ll} +1 & \text{if } x \ge 0 \\ -1 & \text{if } x < 0, \end{array}\right. \]
and for a vector $x \in \RR^n$ let $\sgn(x)$ denote entry-wise application of $\sgn(\cdot)$. We now define our notion of near optimality for a low-degree polynomial.

\begin{definition}
For parameters $\mu \in \RR$, $\delta \in [0,1]$, $\gamma \in [0,1]$, $\eta \in [0,1]$, and a random polynomial $f: \RR^m \to \RR^n$, we say that $f$ $(\mu,\delta,\gamma,\eta)$-optimizes the objective \eqref{eq:p-spin-def} on $\Sigma_n$ if the following are satisfied.
\begin{itemize}
    \item $\displaystyle \Ex_{Y,\omega} \|f(Y,\omega)\|^2_2 = n$ \; (normalization).
    \item With probability at least $1-\delta$ over $Y$ and $\omega$, we have both $H_n(\sgn(f(Y,\omega));Y) \ge \mu$ and \mbox{$|\{i \in [n] \;:\; |f_i(Y,\omega)| \ge \gamma\}| \ge (1 - \eta)n$}. 
\end{itemize}
\end{definition}

\noindent The interpretation of these parameters is similar to the spherical case, with the addition of $\eta$ to take into account issues related to rounding. More precisely, as in the spherical case, $\mu$ is the objective value attained after rounding the polynomial's output to the hypercube, and $\delta$ is the failure probability. The parameters $\gamma, \eta$ are involved in an additional technical condition, which requires $f$'s output not to be too ``small'' in a particular sense. Specifically, all but an $\eta$-fraction of the coordinates of $f$'s output must exceed $\gamma$ in magnitude. The need for this condition in our proof arises in order to prevent a small change in $f(Y,\omega)$ from inducing a large change in $\sgn(f(Y,\omega))$.

We have the following result on low-degree hardness in the Ising setting. The proof is deferred to Section~\ref{sec:pf-lowdeg-pspin}.
\begin{theorem}\label{thm:ising-lowdeg}
For any even integer $p \ge 4$ there exist constants $\mu < E_p(\Sigma)$, $n^* \in \NN$, $\delta^* > 0$, and $\eta > 0$ such that the following holds. For any $n \ge n^*$, any $D \in \NN$, any $\delta \le \min\{\delta^*,\frac{1}{4} \exp(-2D)\}$, and any $\gamma \ge (2/3)^D$, there is no random degree-$D$ polynomial that $(\mu, \delta, \gamma, \eta)$-optimizes \eqref{eq:p-spin-def} on $\Sigma_n$.
\end{theorem}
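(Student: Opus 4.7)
The plan is to argue by contradiction using the interpolation strategy outlined in the introduction. Assume $f$ is a random degree-$D$ polynomial that $(\mu,\delta,\gamma,\eta)$-optimizes~\eqref{eq:p-spin-def} on $\Sigma_n$ for some $\mu$ slightly below $E_p(\Sigma)$ in the OGP regime. I would first set up the interpolation: let $Y,Y'$ be independent standard Gaussian $p$-tensors and form $Y^{(t)}=\cos(\tfrac{\pi t}{2T})Y+\sin(\tfrac{\pi t}{2T})Y'$ for $t=0,1,\ldots,T$, so each $Y^{(t)}$ is marginally standard Gaussian and consecutive pairs are $\rho$-correlated with $\rho=\cos(\tfrac{\pi}{2T})$ close to $1$. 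Write $F^{(t)}=f(Y^{(t)},\omega)$, using the same internal randomness $\omega$ across all $t$, and set $x^{(t)}=\sgn(F^{(t)})\in\Sigma_n$.

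Next I would establish a high-probability good event. By a union bound over the $T+1$ values of $t$, with probability at least $1-(T+1)\delta$ every $x^{(t)}$ is a near-optimum of $H_n(\cdot;Y^{(t)})$ and has at least $(1-\eta)n$ ``stable'' coordinates $|F^{(t)}_i|\geq\gamma$; simultaneously, Theorem~\ref{thm:hyp-stable} applied to each consecutive highly-correlated pair $(Y^{(t)},Y^{(t+1)})$ gives $\|F^{(t)}-F^{(t+1)}\|_2\leq C$ for a moderate constant $C$. The hypothesis $\delta\leq\tfrac14\exp(-2D)$ enters precisely here, since the hypercontractive tail of Theorem~\ref{thm:hyp-stable} degrades like $\exp(D)$ and must be absorbed by $\delta$ after union bounding. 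On this good event, a stable coordinate can flip sign only if $|F^{(t)}_i-F^{(t+1)}_i|\geq 2\gamma$, so at most $C^2/(4\gamma^2)$ stable coordinates flip, and the at most $2\eta n$ unstable coordinates contribute extra flips, for a total per-step Hamming distance of $O(1/\gamma^2+\eta n)$. Consequently $|\langle x^{(0)},x^{(t+1)}-x^{(t)}\rangle|/n=O(1/(n\gamma^2)+\eta)$, which for $T$ large and $\eta$ small can be made strictly less than $\nu_2-\nu_1$.

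Finally I would invoke OGP. By the interpolated Ising OGP (and, for $p$ even, the $\pm1$ symmetry), the overlap $\langle x^{(0)},x^{(t)}\rangle/n$ must lie in $[0,\nu_1]\cup[\nu_2,1]$ (up to sign) for every $t$, and at $t=0$ it equals $1\in[\nu_2,1]$. A separate argument at the endpoint, exploiting that $Y^{(0)}$ and $Y^{(T)}$ are independent together with the $p$-even sign symmetry (e.g.\ by replacing $f$ with $\xi f$ for an independent uniform $\xi\in\{\pm1\}$, which preserves $(\mu,\delta,\gamma,\eta)$-optimization when $p$ is even), shows that $\langle x^{(0)},x^{(T)}\rangle/n\leq\nu_1$ with positive probability. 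Combined with the per-step continuity just established, the overlap is forced into the forbidden gap $(\nu_1,\nu_2)$ at some intermediate $t$, contradicting OGP.

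The main obstacle is the second stage: simultaneously controlling the $\exp(D)$-type tail in the Gaussian stability theorem and the union bound over $T$ interpolation steps, while keeping $T$ large enough for the continuity estimate to beat the OGP gap $\nu_2-\nu_1$. The hypotheses $\gamma\geq(2/3)^D$ and $\delta\leq\tfrac14\exp(-2D)$ are calibrated precisely so that the stable-coordinate flip count $C^2/(4\gamma^2)$ remains $o(n)$ and the union bound over interpolation steps does not overwhelm $\delta$. A secondary difficulty is the separate endpoint argument, which requires either a multi-point variant of OGP or the $p$-even sign symmetry to ensure that two independent runs of the algorithm really do produce outputs with overlap below $\nu_1$.
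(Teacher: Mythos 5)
Your high-level strategy matches the paper's: interpolate between independent instances via rotation of the Gaussian tensor, use the hypercontractive stability bound (Theorem~\ref{thm:hyp-stable}) to control the polynomial's drift between consecutive steps, translate that into a bound on sign-flips (modulo the $\eta n$ tolerated small coordinates), and derive a contradiction with the Ising OGP. The paper instantiates this via Theorem~\ref{thm:ising-ogp-lowdeg}, which is proved by tweaking the spherical argument of Theorem~\ref{thm:spherical-ogp-lowdeg}: after reducing to a deterministic $f$ by Markov's inequality, one finds a step $\ell$ crossing the OGP gap, deduces a Hamming distance $\ge \frac14(\nu_2-\nu_1)^2 n$, and contradicts the stability event which bounds $\|f(Y_{\tau_\ell})-f(Y_{\tau_{\ell+1}})\|_2^2 < \gamma^2 c n$.

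There are two substantive issues with your write-up. First, the stability bound is misscaled: Theorem~\ref{thm:hyp-stable} applied to $f$ with $\EE\|f\|_2^2 = \Theta(n)$ gives $\|F^{(t)}-F^{(t+1)}\|_2^2 = O\bigl(n\,t\,(1-\rho^D)\bigr)$, not $O(1)$. It is only by taking the number of interpolation points $T$ large enough (proportional to $\gamma^{-1} (6e)^{D/2}$) that the paper drives this down to $\gamma^2 c n$ for a small constant $c$; the drift is still $\Theta(\sqrt{n})$, not $O(1)$. Consequently the number of flipping ``stable'' coordinates per step is $\le c n$, not $O(1/\gamma^2)$, and the per-step overlap change is $O(\sqrt{c}+\sqrt{\eta})$, not $O(1/(n\gamma^2)+\eta)$. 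The constraints $\gamma\ge(2/3)^D$ and $\delta\le\frac14 e^{-2D}$ exist precisely to guarantee that a valid integer $T$ satisfying both the stability bound and the $T\cdot\delta$ union bound exists, and getting this calibration right requires the $\gamma^2$ dependence in the target $\|F^{(t)}-F^{(t+1)}\|_2^2 < \gamma^2 c n$, which your estimates do not reproduce.

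Second, and more fundamentally, your proposed endpoint argument does not work. Replacing $f$ by $\xi f$ with a single global sign $\xi\in\{\pm1\}$ multiplies every $F^{(t)}$, hence every $x^{(t)}=\sgn(F^{(t)})$, by the same $\xi$, so $\langle x^{(0)},x^{(T)}\rangle$ is unchanged; the trick has no effect. The paper instead invokes the $\nu_1$-\emph{separation} property (Definition~\ref{definition:well-separated}), which is a geometric statement about the landscape proved as part of Theorem~\ref{thm:pspin-ogp}: any two $\mu$-near-optima of the independent Hamiltonians $H_n(\cdot;Y)$ and $H_n(\cdot;Y')$ have overlap $\le\nu_1$, regardless of how the algorithm's outputs are correlated through shared internal randomness. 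Since $x^{(0)}$ and $x^{(T)}$ are, on the success event, near-optima of $H_n(\cdot;Y_0)$ and $H_n(\cdot;Y_{\pi/2})$ respectively, separation immediately gives $|\langle x^{(0)},x^{(T)}\rangle|/n\le\nu_1$. This is the missing ingredient; you gestured at needing ``a multi-point variant of OGP,'' which is essentially what separation provides, but the sign-symmetry alternative you lean on is a dead end.
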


\noindent This result is very similar to the spherical case, and the discussion following Theorem~\ref{thm:spherical-lowdeg} also applies here. The best known algorithms for the Ising case also fall into the AMP framework~\cite{montanari-sk,EMS-opt} and are thus captured by constant-degree polynomials. These polynomials output a solution ``close'' to the hypercube in a way that satisfies our technical condition involving $\gamma, \eta$. As in the spherical case, the case $p=2$ is computationally tractable; here it is not a simple eigenvector problem but can nonetheless be solved by the AMP algorithm of~\cite{montanari-sk,EMS-opt}.

\subsubsection{Langevin Dynamics and Gradient Descent}

One natural motivation for understanding low-degree hardness is to investigate the performance of natural iterative schemes, such as power iteration or gradient descent. In the spherical $p$-spin model, the natural analogue of these algorithms (in continuous time) are \emph{Langevin dynamics} and \emph{gradient flow}. 
While these are not directly low-degree methods, the overlap gap property can still be seen to imply hardness for these results in a fairly transparent manner. 

To make this precise, let us introduce the following.
Let $B_t$ denote spherical Brownian motion. (For a textbook introduction to spherical Brownian motion see, e.g., \cite{Hsu02}.) For any variance $\sigma\geq 0$, we introduce \emph{Langevin dynamics} for $H_n$ 
to be the strong solution to the stochastic differential equation
\[
dX_t = \sigma dB_t + \nabla H_n(X_t;Y)dt,
\]
with $X_0=x$, where here $\nabla$ denotes the spherical gradient. Note that since $H_n(x;Y)$ is a polynomial in $x$, $H_n$ is (surely) smooth and consequently the solution is well-defined in the strong sense \cite{Hsu02}. The case $\sigma = 0$ is referred to as \emph{gradient flow} on the sphere. 

In this setting, it is natural to study the performance with random starts which are independent of $Y$, e.g., a uniform at random start. In this case, if the initial distribution is given by $X_0\sim\nu$ for some $\nu\in\cM_1(\cS_n)$, the space of probability measures on $\cS_n$, we will denote the law by $Q_\nu$.  In this setting we have the following result which is, again, a consequence of the overlap gap property. 

\begin{theorem}\label{thm:langevin-main}
Let $p\geq 4$ be even. There exists $\mu < E_p(\cS)$ and $c>0$ such that for any $\sigma\geq0$, $T\geq0$ fixed, 
 $n$ sufficiently large, and $\nu\in\cM_1(\cS_n)$, if $X_t$ denotes Langevin dynamics for $H_n(\cdot\,;Y)$ with variance $\sigma$ and initial data $\nu$, then
\[
\PP_Y\otimes Q_\nu(H_n(X_T;Y) \leq \mu )\geq 1-\exp(-c n).
\]
In particular, the result holds for $\nu_n = \mathrm{Unif}(\cS_n)$, the uniform measure on $\cS_n$.
\end{theorem}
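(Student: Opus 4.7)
The plan is to combine continuous dependence of the Langevin SDE on its drift with a suitable overlap gap property (OGP) for the interpolated spherical $p$-spin, in the same contradiction framework used for low-degree algorithms in Section~\ref{sec:pf-lowdeg-pspin}. Suppose for contradiction that, along a subsequence of $n$,
\[
\PP_Y\otimes Q_\nu\bigl(H_n(X_T;Y) > \mu\bigr) \;\geq\; \exp(-cn)
\]
for some fixed $c>0$ to be chosen. Let $Y^{(0)},Y^{(1)}$ be independent Gaussian $p$-tensors and form the Gaussian interpolation
\[
Y(t):=\cos(\pi t/2)\,Y^{(0)}+\sin(\pi t/2)\,Y^{(1)},\qquad t\in[0,1],
\]
so that each $Y(t)\stackrel{d}{=}Y$ while $Y(0)$ and $Y(1)$ are independent. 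For every $t$, run the Langevin SDE with drift $\nabla H_n(\cdot;Y(t))$, driven by the \emph{same} spherical Brownian motion $B$ and started from the \emph{same} $X_0\sim\nu$; write $\tilde X(t)$ for its value at SDE-time $T$, and set $\phi(t):=H_n(\tilde X(t);Y(t))$ and $R(t):=\tfrac{1}{n}\langle\tilde X(0),\tilde X(t)\rangle$.

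The stability input is continuity of $t\mapsto\tilde X(t)$, which follows from well-posedness of the SDE and standard continuous-dependence theorems for SDEs on compact manifolds with smooth drifts (cf.\ \cite{Hsu02}); consequently $R$ and $\phi$ are almost surely continuous on $[0,1]$ with $R(0)=1$. For even $p\geq 4$ and $\mu$ sufficiently close to $E_p(\cS)$, I would then invoke two OGP statements with a common exponent $c_1=c_1(p,\mu)>0$ and probability $\geq 1-\exp(-c_1 n)$: a \emph{two-instance OGP} at the independent endpoints $(Y^{(0)},Y^{(1)})$, forcing $R(1)\leq\nu_1$ whenever $\phi(0)\geq\mu$ and $\phi(1)\geq\mu$; and an \emph{ensemble OGP} along the interpolation, forbidding any $s,t\in[0,1]$ and $(x,y)\in\cS_n^2$ with $H_n(x;Y(s))\geq\mu$, $H_n(y;Y(t))\geq\mu$, and $\tfrac{1}{n}\langle x,y\rangle\in(\nu_1,\nu_2)$. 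Both OGP variants are known for even $p\geq 4$ in the relevant regime of $\mu$.

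The desired contradiction is qualitatively clear: if $\phi(t)\geq\mu$ could be maintained uniformly in $t\in[0,1]$ on an event of probability $\geq\exp(-2cn)$, then continuity of $R$ sending $R(0)=1$ into $R(1)\leq\nu_1$ (by the two-instance OGP) would force $R$ to pass through the forbidden gap $(\nu_1,\nu_2)$ at some $t^*$ while $\tilde X(0)$ and $\tilde X(t^*)$ are both $\mu$-near-optimal, directly contradicting the ensemble OGP once we intersect with its $1-\exp(-c_1 n)$ event and choose $c<c_1/2$. The principal obstacle, and the key Langevin-specific technical task, is to establish this uniform-in-$t$ lower bound on $\phi(t)$. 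My approach would be: (i) exploit the $n^{-(p+1)/2}$ normalization of $H_n$ together with Gaussian concentration of the injective norm $\|Y\|_{\mathrm{inj}}$ at its typical order $O(\sqrt n)$, and apply Gr\"onwall's lemma to the variational SDE $\partial_t\tilde X(t)$, to produce a dimension-free modulus-of-continuity estimate for $\phi(\cdot)$ on $[0,1]$; (ii) use a conditional-independence argument through the shared randomness $(B,X_0)$, combined with Cauchy--Schwarz, to boost single-instance near-optimality of probability $\exp(-cn)$ into simultaneous near-optimality at both endpoints with probability $\geq\exp(-2cn)$; (iii) propagate this to all $t\in[0,1]$ using the modulus-of-continuity estimate together with a slight slack in choosing $\mu$ below a success threshold $\mu_0$. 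Calibrating the dimension-free Lipschitz constant of $\phi$ against the available OGP gap $E_p(\cS)-\mu$ so that this propagation closes is the delicate quantitative piece, and is where the Langevin argument substantively differs from the low-degree one.
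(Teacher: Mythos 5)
Your high-level framework (Gaussian interpolation of the disorder, coupling all Langevin runs to the same Brownian motion and initialization, stability of the SDE under drift perturbations, and a contradiction with the two OGP-type properties of Theorem~\ref{thm:pspin-ogp}) matches the paper's Section~\ref{sec:pf-langevin}. Your step (ii) --- boosting a single-endpoint success probability of $e^{-cn}$ to simultaneous endpoint success of $e^{-2cn}$ by conditioning on $(B,X_0)$ and using that $Y^{(0)},Y^{(1)}$ are then independent, followed by Jensen/Cauchy--Schwarz --- is a correct observation, though not one the paper uses.

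The genuine gap is step (iii). You need \emph{uniform-in-$t$} near-optimality of $\phi(t):=H_n(\tilde X(t);Y(t))$ on the event you are working on, because the OGP contradiction is triggered at an interior time $t^*$ where $R(t^*)\in(\nu_1,\nu_2)$, and the ensemble OGP is only informative if $\phi(t^*)\ge\mu$ as well. You propose to obtain this from endpoint near-optimality plus a ``dimension-free modulus of continuity'' for $\phi$. But a dimension-free Lipschitz bound for $\phi$ on $[0,1]$ is $O(1)$, not $o(1)$: the term $\partial_t H_n(\tilde X(t);Y(t)) = H_n(\tilde X(t);\partial_t Y(t))$ is a priori of the same order as $E_p(\cS)$ itself (indeed $\|\partial_t Y(t)\|_{\op}=O(\sqrt n)$ and there is no reason $\tilde X(t)$ decorrelates from $\partial_t Y(t)$). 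Chaining Lemma~\ref{lem:langevin-stability-main} over the full interpolation gives an accumulated change in $\phi$ of order $1$, which is comparable to (not dominated by) the slack $\mu_0-\mu$ you have available. So ``both endpoints $\ge\mu_0$'' simply does not prevent $\phi$ from dipping below $\mu$ in the interior, and there is no way to close your step (iii) by a continuity argument alone.

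The paper fills this gap with a different idea that your proposal is missing: Lemma~\ref{lem:concetration-langevin} shows $H_n(X_T;Y)$ concentrates at scale $n^{-1/2}$ around its mean $\EE H_n(X_T;Y)$, exploiting that $Y\mapsto X_T$ is uniformly Lipschitz via the Gronwall estimate~\eqref{eq:well-posed} together with Gaussian concentration of measure. Crucially, the joint law of $(Y_\tau,X_T^\tau)$ is invariant in $\tau$, so this mean is the same for every interpolation index. Hence if even one $H_n(X_T^{\tau_i};Y_{\tau_i})\ge E_p(\cS)-\eps$ (with probability larger than the concentration tail), the common mean must be $\ge E_p(\cS)-2\eps$, and then concentration forces \emph{all} $H_n(X_T^{\tau_i};Y_{\tau_i})\ge E_p(\cS)-3\eps$ simultaneously with high probability. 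That ``one implies all'' step via a $\tau$-invariant mean is precisely what lets the OGP contradiction bite at the crossing time, and it replaces (and cannot be replaced by) your modulus-of-continuity propagation. Your step (ii) then becomes unnecessary, as the paper does not need an a-priori lower bound on the success probability at all --- the argument is a direct bound on $\PP(\tilde B_n)$ rather than a proof by contradiction.
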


\noindent The proof can be found in Section~\ref{sec:pf-langevin}. To our knowledge, this is the first proof that neither Langevin dynamics nor gradient descent reaches the ground state started from uniform at random start. {We note furthermore, that the above applies even to $T \leq c' \log n$ for some $c'>0$ sufficiently small.}

There has been a tremendous amount of attention paid to the Langevin dynamics of spherical $p$-spin glass models. It is impossible here to provide a complete list of references but we point the reader to the surveys \cite{BCKM98,Cug03,Gui07,jagannath2019dynamics}. 
To date, much of the analysis of the dynamics in the \emph{non-activated} regime considered here ($n\to \infty$ and then $t\to\infty$) has concentrated on the  Crisanti--Horner--Sommers--Cugiandolo--Kurchan (CHSCK)  equations approach \cite{crisanti1993sphericalp,CugKur93}.
This approach centers around the analysis 
of a system of integro-differential equations which are satisfied by the scaling limit of natural observables  of the underlying system. While this property of the scaling limit has now been shown rigorously \cite{BADG01,BADG06}, there is limited rigorous understanding of the solutions to the CHSCK equations beyond the case when $p=2$.
A far richer picture is expected here related to the phenomenon of \emph{aging} \cite{Gui07,BA02}. 

More recently a new, differential inequality--based approach to understanding this regime was introduced in \cite{BGJ20}, which provides upper and lower bounds on the energy level reached for a given initial data. That being said, this upper bound is nontrivial only for $\sigma$ sufficiently large.

We end by noting that overlap gap--like properties, namely ``free energy barriers'', have been used to develop spectral gap estimates for Langevin dynamics which control the corresponding $L^2$-mixing time \cite{gheissari2019spectral,arous2018spectral}.  In \cite{arous2018spectral}, it was shown that exponentially-small spectral gaps are connected to the existence of free energy barriers for the overlap, which at very low temperatures can be shown to be equivalent to a variant of the overlap gap property in this setting.  To our knowledge, however, this work is the first approach to connect the behavior of Langevin dynamics in the non-activated regime ($n\to\infty$ and then $t\to\infty$) that utilizes the overlap distribution. Finally we note here that the overlap gap property has been connected to the spectral gap for local, reversible dynamics of  Ising spin glass models in \cite{arous2018spectral} as well as to gradient descent and approximate message passing schemes in \cite{gamarnik2019overlap}.

\subsection{The Maximum Independent Set Problem and Low-Depth Boolean Circuits}

In this section we state our main result regarding superpolynomial size lower bounds on Boolean circuits
with bounded depth for finding large independent sets in graphs. (The conference version of this paper~\cite{gamarnik2020lowFOCS} also included lower bounds against low-degree polynomials for the independent set problem, but we omit these here and instead refer the reader to the subsequent sharper results in~\cite{wein2020optimal}.) While this result is a statement
on the limits of the power of such circuits to find independent sets in any graph, and thus is of interest
from the worst-case perspective, our main technique relies on studying independent
sets in sparse \emph{random} graphs and we begin by providing the necessary background. 

The problem of finding a large independent set in a sparse random graph is defined as follows. 
We are given the adjacency matrix of an $n$-vertex graph, represented as $Y \in \{0,1\}^m$ where $m = \binom{n}{2}$. 
We write $Y \sim \G(n,q_n)$ to denote an \ER graph on $n$ nodes with edge probability $q_n$, 
i.e., every possible edge occurs independently with probability $q_n$. In the following let $d_n = n q_n$.
We are interested in the regime where $d_n$ is order 1 as  $n\to\infty$ but sufficiently large.
A subset of nodes $I\subseteq [n]$ is 
an \emph{independent set} if it spans no edges, i.e., for 
every $i,j \in I$, $(i,j)$ is not an edge. Letting $\cI(Y)$ denote the set of all 
independent sets of the graph $Y$, consider the optimization problem
\begin{equation}\label{eq:max-indep}
\max_{I \in \cI(Y)} |I|,
\end{equation}
where $Y \sim \G(n,q_n)$.

As $n \to \infty$, if $d = n q_n$ is constant, the rescaled optimum value of~\eqref{eq:max-indep} 
is known to converge to some limit with high probability (actually modulo exponentially small in $n$ probability): 
\begin{align*}
    \frac{1}{n}\, {\max_{I \in \cI(Y)} |I|}\to\alpha_d,
\end{align*}
as shown in~\cite{BayatiGamarnikTetali}. 
The limit $\alpha_d$ is known to have the following asymptotic behavior as $d\to\infty$:
\begin{align*}
    \alpha_d=(1+o_d(1)){2\log d\over d},
\end{align*}
as is known since the work of Frieze~\cite{FriezeIndependentSet}. Here $o_d(1)$ denotes a function of $d$ converging
to zero as $d\to\infty$. More precisely, if we let $d_n = n q_n$  and define $\phi_{n,d}=(\log d_n/d_n)n$ for short, then Frieze's bound states that
for every $\epsilon>0$ there exists $\gamma>0$ and large enough $d_0$ such that for all $d_0\leq d_n =O(1)$,
\begin{equation}\label{eq:Max-IS-ER}
\pr\left({{\max_{I \in \cI(Y)} |I|}\over 2\phi_{n,d}}\in (1-\epsilon,1+\epsilon)\right)
\ge 1-\exp(-\gamma n),
\end{equation}
for all large enough $n$.

The fact that the probabilistic guarantee above is of exponential type is of great importance in 
the derivation of our main result. As was explained in the introduction, 
the lack of such guarantees for independent sets in \emph{dense} random graphs is what prevents
us from establishing similar bounds for such graphs.

The best known polynomial-time algorithm for this problem is achieved by a straightforward greedy algorithm~\cite{karp} which constructs a $1/2$-optimal independent set, i.e., an independent set of size $\phi_{n,d}$ asymptotically as $n \to \infty$ and then $d \to \infty$. It is believed that no polynomial-time algorithm can obtain an independent set of size
$(1+\epsilon)\phi_{n,d}$ for any fixed $\epsilon$, again in the doubly asymptotic regime $n\to\infty$, $d\to\infty$.
Our main result below is a step in this direction which shows that such an algorithm cannot be a polynomial-size circuit with a certain bound on its depth.

We thus now turn to the discussion of Boolean circuits.  The  definitions of these can be found in most
standard textbooks on algorithms and computation~\cite{arora2009computational,sipser,alon2004probabilistic}. 
Boolean circuits are functions $C:\{0,1\}^m\to \{0,1\}^n$, for some positive integers $m$ and $n$,
obtained by 
considering a directed graph with $m$ input nodes which have in-degree zero, and $n$ output nodes which have out-degree zero. 
Each intermediate node corresponds to one of three standard Boolean operations 
$\vee, \wedge$ or $\neg$. The fan-in (i.e., the number of inputs for each and/or gate) is unbounded.  
The size of the circuit, denoted by $s(C)$, is the number of nodes in the associated graph and its
depth is the length of its longest directed path. It is known that any circuit can be modified
so that all of the negation gates are at the input layer at the cost of at most doubling the size (see \cite[Chapter 13]{arora2009computational}); 
for technical convenience, it is common in the literature to take the convention that circuits are defined in this latter form, and we adopt this convention here as well. 
In our case we set $m={n\choose 2}$ with 
the input space $\{0,1\}^m$ associated with the adjacency matrices of a graph on $n$ nodes. The value $1$ indicates
the presence of an edge, and the value $0$ indicates the absence of an edge in the underlying graph.
Given  a Boolean circuit $C$ with $m$ input nodes (associated with edges) and $n$ output nodes, and given a graph 
$y\in \{0,1\}^m$, 
we denote by $C(y)$ the binary vector in $\{0,1\}^n$ produced when $C$ takes input $y$. 
The vector $C(y)$ is also associated with a subset of $[n]$ defined by coordinates with value $1$. 
The cardinality of this set (the number of ones) is denoted by $|C(y)|$.

As stated in the introduction, we are interested in Boolean circuits with bounds on their depth which 
provide some approximation guarantees for solving combinatorial optimization problems, the maximum
independent set problem in our case.
Fix any $n$-dependent positive integer-valued sequence $p(n)$ and a constant $0<\rho<1$. 
We denote by $\mathcal{C}(n,p(n),\rho)$ the family of all $m={n\choose 2}$-input,  $n$-output Boolean circuits $C$ which 
satisfy the following properties:

\begin{enumerate}

\item[(a)]
The depth of $C$ is at most $p(n)$.

\item[(b)]
For every graph $y\in \{0,1\}^{m}$, the output $C(G)$ is an independent set in  $y$ 
which satisfies  
\begin{align*}
|C(y)|\ge \rho \max_{I\in \mathcal{I}(y)}|I|.
\end{align*}
\end{enumerate}
In other words, this is a family of circuits which, given a graph as an input,
is required to produce an independent set in this graph with value 
that is at least a multiplicative constant $\rho$ away 
from optimality. We note that this family is non-empty, provided $p(n)$ grows with $n$, as it is possible to implement an exhaustive search type algorithm that finds a largest independent set as a Boolean circuit with depth bound that does not depend on $n$. Our main result below
states that if the depth $p(n)$ is at most roughly $\log n/\log\log n$, any circuit in the family must have super-polynomial size.

\begin{theorem}\label{theorem:Main-ind-set}
Fix any $\rho>1/2$ and $\alpha>0$. 
Let
\begin{align*}
p(n)={\log n\over (1+\alpha)\log\log n}.
\end{align*}
Then for all sufficiently large $n$, and all $C\in \mathcal{C}(n,p(n),\rho)$, 
the size $s(C)$ of the circuit $C$ satisfies
\begin{align*}
s(C)\ge n^{(\log n)^{\alpha/3}}.
\end{align*}
\end{theorem}

The proof of Theorem~\ref{theorem:Main-ind-set} is given in Section~\ref{sec:pf-main}. 
While, as stated, the result pertains to the power of circuits to find large independent sets in arbitrary graphs, the proof is based 
on establishing the failure of circuits to find half-optimum independent sets for sparse \ER graphs 
with large (but finite) average degree (more precisely for $\G(n,d/\g{n})$ where here and in the following, $\g{n}=2^{\lceil\log_2 n\rceil}$).
As in our previous results,  the proof is based on the barriers created by the presence of a multi-OGP.

\begin{remark}
We note that our result essentially refutes all algorithms 
which satisfy item (b) above (namely, they produce $\rho$-optimal solutions for  $\rho>1/2$) 
and whose total influence is bounded by $\sum_i I_i(f)<c n^2$ for some $c$ sufficiently small.
\end{remark}

\subsection{The Overlap Gap and multi-Overlap Gap Property}

As discussed in the introduction, the preceding results will follow due to certain geometric properties of the super-level sets of the objectives. The main property is called the \emph{overlap gap property (OGP)}. Let us begin by defining this formally in a general setting.

\begin{definition}\label{definition:OGP}
We say that a family of real-valued functions $\mathcal{F}$ with common domain $\mathcal{X}\subset \R^n$ satisfies the
(ensemble)-\emph{overlap gap property (OGP)} for an overlap $R:\cX\times \cX\to [0,1]$ with parameters $\mu \in \RR$ and $0\leq\nu_1<\nu_2\leq 1$ if for every $f_1,f_2\in\mathcal{F}$ and every $x_1,x_2\in\cX$ satisfying
$f_k(x_k)\geq \mu$ for $k=1,2$, we have that
$  R(x,y) \in [0,\nu_1]\cup [\nu_2,1].$
\end{definition}
\noindent For ease of notation, when this holds, we simply say that $\mathcal{F}$ satisfies the $(\mu,\nu_1,\nu_2)$-OGP  for $R$ on $\cX$. Furthermore, as it is often clear from context, we omit the dependence of the above on $R$.

The word ``ensemble'' here is in reference to there being a family $\mathcal{F}$ of objective functions.
In many cases the OGP holds for a singleton $\mathcal{F}$, but the proofs often require construction
of a rich enough family (ensemble) of these, as is the case throughout this paper. We will drop the reference to ``ensemble'' from this point on.

While the definition above might be satisfied for trivial reasons and thus not be informative, it will be used in this paper in the setting where $\|x\|_2^2\le n$ for every $x\in \mathcal{X}$, {$R(x_1,x_2)=|\langle x_1,x_2\rangle|/n$},  and with parameters chosen so that with high probability $\mu<\sup_{x\in \mathcal{X}}H(x)$ for every $H\in\mathcal{F}$. Thus, in particular $R(x_1,x_2)\le 1$ for every $x_1,x_2\in \mathcal{X}$, and $\mu$ measures some proximity from optimal values for each objective function $H$. The definition says informally that for every two $\mu$-optimal solutions with respect to any two choices of objective functions, their normalized inner product is either at least $\nu_2$ or at most $\nu_1$. 

In the following, we require one other property of functions, namely separation of their superlevel sets.
\begin{definition}\label{definition:well-separated}
We say that two real-valued functions $f,g$ with common domain $\cX$ are $\nu$-separated above $\mu$ with respect to the overlap $R:\cX\times \cX \to [0,1]$ if for any $x,y \in \cX$ with $f(x)\geq \mu$ and $g(y)\geq \mu$, we have that $R(x,y) \leq \nu$.
\end{definition}
\noindent This property can be thought of a strengthening of OGP for two distinct functions. In particular, the parameter $\nu$ will typically equal the parameter $\nu_1$ in the definition of OGP. 

The definitions above will serve as a basis for our results regarding spherical and ising p-spin optimization problems.
Let us now turn to stating the precise results regarding these properties in the settings we consider here. It can be shown that the overlap gap property holds for $p$-spin glass Hamiltonians in both the spherical and Ising settings with respect to the overlap $R(x,y) =\frac{1}{n}\abs{\g{x,y}}$. More precisely, let $Y \in (\RR^n)^{\otimes p}$ be i.i.d.\ $\mathcal{N}(0,1)$ and let $Y'$ denote an independent copy of $Y$. 
Consider the corresponding family of real-valued functions
\begin{equation}\label{eq:interpolated-family-p-spin}
\cA(Y,Y') =\{\cos(\tau) H_n(\cdot\,;Y)+\sin(\tau)H_n(\cdot\,;Y') \,:\, \tau \in [0,\pi/2]\},
\end{equation}
with $H_n$ defined as in~\eqref{eq:p-spin-def}.  We then have the following, which will follow by combining bounds from \cite{ChenSen17,AuffChen18}.  
The claim below  for the case $\Sigma_n$ can be found as  \cite[Theorem 3.4]{gamarnik2019overlap}. 
The proof for $\mathcal{S}_n$ is given in Section~\ref{sec:pf-ogp-pspin}. Recall from Section~\ref{sec:p-spin} the meaning of $E_p(\mathcal{S})$ and $E_p(\Sigma)$ (ground state energy).

\begin{theorem} \label{thm:pspin-ogp}
Let the  overlap be defined as $R(x,y) =\frac{1}{n}\abs{\g{x,y}}$, 
and let $Y$ and $Y'$ be independent $p$-tensors with i.i.d.\ $\mathcal{N}(0,1)$ entries. 
For every even $p\geq 4$ and for either $\mathcal{X}_n=\mathcal{S}_n$ or $\mathcal{X}_n=\Sigma_n$ (spherical and Ising models), 
there exist $0<\mu<E_p(\mathcal{X})$,  $0\leq\nu_1<\nu_2\leq 1$, and $\gamma>0$ such that the following holds
with probability at least $1-\exp(-\gamma n)$ for all large enough $n$: 

\begin{enumerate}
\item The family  $\mathcal{F}=\cA(Y,Y')$ satisfies the $(\mu,\nu_1,\nu_2)$-OGP. 
\item $H_n(\cdot\,;Y)$ and $H_n(\cdot\,;Y')$ are $\nu_1$-separated above  $\mu$ with respect to $R$.
\end{enumerate}

\end{theorem}

Let us now turn to the maximum independent set problem. 
In this setting  we will need to rely on a more complex version
of the OGP, called multi-OGP, which involves overlaps between more than two solutions. 
 We will not attempt to provide a general abstract definition of the multi-OGP, and instead we state explicitly
the property  we need as a theorem. 
We will see, however, that the multi-OGP described below, when applied to pairs of solutions, boils down to the definition of the OGP given above.

Before stating the multi-OGP let us first recall the Bernoulli encoding of the underlying graph. For discretization reasons, rather than working with the graph  $\G(n,d/n)$,
it is more convenient to work instead with the graph $Y\sim \G(n,q_n)$ with $q_n = d/\g{n}$.
Here, as before, 
$\g{n} = 2^{\lceil\log_2 n\rceil}$ where $\lceil\cdot\rceil$ denotes the ceiling function and $d$ is fixed but large. (Notice that since 
we are proving a \emph{deterministic} refutation result for Boolean circuits we are free to choose the random graph sequence.) Similarly, we may assume that $d$ is an integer.  We do so throughout the rest of the paper.
We view any realization of the graph $Y$ as an $m$-bit string $Y\in \{0,1\}^m$ where $m={n\choose 2}$ as follows.
The distribution of $Y$ is i.i.d.\ Bernoulli, $(\Be(q_n))^{\otimes m}$, 
with success probability $q_n$. 
Our first step is to convert
the input model to the one associated with an unbiased Bernoulli distribution. This will be done using
standard methods for generating any distribution from uniform random seeds.

Consider a sequence of random instances $Y_t$, indexed by integers $t\ge 0$, constructed as follows. First, $Y_0$ is generated from $(\Be(q_n))^{\otimes m}$. Once $Y_t$ is defined, we let $Y_{t+1}$ be obtained from $Y_t$ by resampling 
coordinate $\ell$ of $Y_t$ where $\ell = 1+(t \bmod m)$ from $\Be(q_n)$. 
In other words, $Y_1,\ldots,Y_m$ are obtained from $Y_0$ by resampling coordinates $1,2,\ldots,m$ in that order (where in the graph setting, we have fixed an arbitrary order for the edges).
Then we repeat the process: $Y_{m+1}$ is obtained from $Y_m$ by resampling coordinate $1$, $Y_{m+2}$ from $Y_{m+1}$ by resampling coordinate $2$, etc. Note that $Y_t\sim (\Be(q_n))^{\otimes m}$ for each $t$.
In our case, $Y_t\sim \G(n,q_n)$. It is in the context of this sequence of graphs $Y_t,\, t\ge 0$ that the multi-OGP 
was established earlier in~\cite{wein2020optimal}. 

In this paper, however, for technical reasons we need a slightly modified
``stretched'' version of the multi-OGP established in~\cite{wein2020optimal}. 
Specifically we generate random $\lceil\log_2 n\rceil$-bit 
strings $U_{ij}=U_{ij}(1),\ldots,U_{ij}(\lceil\log_2 n\rceil)$, independently for $1\le i<j\le n$
and uniformly at random in $\{0,1\}^{\lceil\log_2 n\rceil}$. 
Each $U_{ij}$ represents a number in the range $0,1,\ldots,\g{n}-1$ via the standard binary encoding. 
Let $U=(U_{ij})_{1\le i<j\le n}$.
Consider the associated $\{0,1\}-$valued matrix $Y=(Y_{ij})_{1\le i<j\le n}$ where $Y_{ij}=1$ 
if the number represented by $U_{ij}$ is at most $d-1$ and $Y_{ij}=0$ otherwise. That is, $Y_{ij}=1$ iff $\sum_{1\le \ell\le \lceil\log_2 n\rceil}2^{\ell-1} U_{ij}(\ell)\le d-1$. 
Clearly $Y$ is distributed as $\G(n,q_n)$. Let $M={n\choose 2}\lceil\log_2 n\rceil$. 
The associated deterministic function mapping 
$\{0,1\}^M$ to $\{0,1\}^{n\choose 2}$ is denoted by $\Phi$.
That is, $\Phi(U)=Y$. 

Consider now the following sequence  $(U_t)_{t\geq 0}$: 
$U_0$ is generated uniformly at random from $\{0,1\}^M$. Assuming $U_0,\ldots,U_{t-1}$ have been constructed,
$U_t$ is obtained from $U_{t-1}$ by resampling coordinate $\ell$ of $U_{t-1}$ where $\ell = 1+(t \bmod M)$.
Each $U_t$ is distributed as $U$ described above. As a result, each $Y_t=\Phi(U_t)$ is a random graph 
$\G(n,q_n)$. 
If we think of arranging the coordinates of $U$ such that coordinates of $U_{ij}$ appear contiguously, 
we see that the process $(U_t, t\ge 0)$ is indeed a stretched version of $(Y_t, t\ge 0)$ 
where for periods of length $\lceil\log_2 n\rceil$,
a particular edge $(i,j)$ in graph $Y_t$ is toggled according the realizations
of the bits $U_{ij}(1),\ldots,U_{ij}(\lceil\log_2 n\rceil)$, leaving all other edges of the graph $Y_{t}$ intact. 

We now state the modified version of the multi-OGP.  In the following, choose $d_n = n q_n =n d/\g{n}$, 
and recall $\phi_{n,d} = (\log d_n/d_n)n$.

\begin{theorem}[multi-Overlap-Gap-Property~\cite{wein2020optimal}]\label{thm:multi-OGP}
For every $\epsilon>0$, $K\ge 1+5/\epsilon^2$ and $\alpha>0$, there exist $\gamma>0$ and $d_0$ large enough 
such that for all $d\ge d_0$ the following holds with probability at least $1-\exp(-\gamma n)$ 
for all large enough $n$:
there does not exist a sequence of times $t_1, \ldots, t_K$ with $0 \le t_k \le n^\alpha$ 
and sets $I_1,\ldots,I_K\subset [n]$ such that 
\begin{enumerate}
\item[(i)] each $I_k$ is an independent set in the graph $Y_{t_k}=\Phi(U_{t_k})$,
with size $|I_k| \ge (1+\epsilon)\phi_{n,d}$, and

\item[(ii)] for $k = 2,3,\ldots, K$,
\begin{align*}
|I_k\setminus (\cup_{1\le \ell<k} I_\ell)|\in \left[{\epsilon \over 4}\phi_{n,d},{\epsilon \over 2}\phi_{n,d}\right].
\end{align*}

\end{enumerate}

\end{theorem}
The proof of Theorem~\ref{thm:multi-OGP} is the same as that of its earlier version in~\cite{wein2020optimal}
and is based on the first moment argument. The only difference is that the union bound is over a larger 
set of graphs, due to stretching, 
but the multiplicative increase is only $\lceil\log_2 n\rceil$, while the tail bounds are exponential in $n$. 
Thus the result is obtained at the expense of slightly larger prefactor $\gamma$ in the exponent.

If the statement in the theorem above were hypothetically true for $K=2$ and any $\epsilon > 0$,  
the claim would say that the intersection of any two independent sets larger than
half optimum in any two graphs in the family $\{Y_t=\Phi(U_t)\}_{0 \le t \le n^\alpha}$ 
is either at least $(\epsilon/2)\phi_{n,d}$
or at most $(\epsilon/4)\phi_{n,d}$. Namely, it would say that the model exhibits the $(\mu,\nu_1,\nu_2)$-OGP
with $\mu=(1+\epsilon)\phi_{n,d}$, $\nu_1={\epsilon \over 4}\phi_{n,d}$, $\nu_2={\epsilon \over 2}\phi_{n,d}$.
Unfortunately, this appears to be false. While this was not formally verified in the prior literature, 
the method of proof, which is based on the expectation (first moment) argument, does not reveal the presence of such
a gap. It is in fact very likely that it does not exist, and this might be provable by applying the second moment 
method, though we do not pursue this in this paper. However, a such pairwise version of  
OGP does take place if the value $(1+\epsilon)\phi_{n,d}$
is replaced by $(1+1/\sqrt{2}+\epsilon)\phi_{n,d}$, as was shown originally in~\cite{gamarnik2014limits}.
Namely, the OGP is a special case of the multi-OGP when $K=2$, with the right choice of value $\mu$.

This (namely $K=2$) version of the OGP was also employed in the earlier (conference) 
version of this paper~\cite{gamarnik2020lowFOCS}
as a method of ruling out low-degree polynomials as potential algorithms to find independent sets 
of size roughly $(1+1/\sqrt{2})\phi_{n,d}$.  (We note that there, we worked with $\G(n,d/n)$ as opposed to $\G(n,q_n)$ though that does not change the arguments substantially.)
However, in order to improve this to the sharp  
algorithmic threshold $\phi_{n,d}$ we
need to resort to the multi-OGP stated above. 

We now turn to the separation property---the analogue of Definition~\ref{definition:well-separated}.
In fact, we will need a slightly more general version given below, the proof of which is again found in~\cite{wein2020optimal}.

\begin{lemma}\label{lemma:separation-ind-sets}
Fix $\epsilon>0$ and $c>0$. There exists $\gamma>0$ and large enough $d_0$
such that for all $d\ge d_0$,
the following holds for all large enough $n$: for any fixed set $C\subset [n]$ with $|C| \le c\phi_{n,d}$, with probability at least $1-\exp(-\gamma n)$, every independent set $I$ in $\G(n,q_n)$ has $|I\cap C|\le \epsilon\phi_{n,d}$.
\end{lemma}

Intuitively, this lemma states that any  set of nodes with size order $\phi_{n,d}$ which is fixed prior 
to revealing the graph,
has intersection size at most  $\epsilon\phi_{n,d}$ with every independent set in this random graph.

\section{Proofs for $p$-Spin Model}\label{sec:pf-pspin}

\subsection{Low-Degree Polynomials are Stable}

In this section we prove a noise stability--type result for polynomials of Gaussians, which will be a key ingredient in our proofs. Throughout this section, let $d\geq 1$ and let $Y\in \RR^d$ be a vector with i.i.d.\ standard Gaussian entries. Denote the standard Gaussian measure on $\R^d$ by $\Gamma^d$. For two standard Gaussian random vectors defined on the same probability space, we write $X\sim_\rho Y$ if their covariance satisfies $\mathrm{Cov}(X,Y) = \rho\, I$ for some $\rho \in [0,1]$, where $I$ denotes the identity matrix. Throughout this section, all polynomials have non-random coefficients. The goal of this section is to prove the following stability result.
\begin{theorem}
\label{thm:hyp-stable}
Let $0\leq \rho\leq 1$.
Let $X,Y$ be a pair of standard Gaussian random vectors on $\R^d$ such that $X\sim_\rho Y$. Let $P$ denote the joint law of $X, Y$. Let $f: \R^d \to \RR^k$ be a (deterministic) polynomial of degree at most $D$ with $\EE \norm{f(X)}_2^2 = 1$.  For any $t \ge (6e)^D$,
\[ P(\|f(X) - f(Y)\|_2^2 \ge 2t(1-\rho^D)) \le \exp\left(-\frac{D}{3e} t^{1/D}\right). \]
\end{theorem}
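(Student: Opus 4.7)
The plan is to combine a Hermite--basis computation of $\EE\|f(X)-f(Y)\|_2^2$ with a vector-valued form of Gaussian hypercontractivity to obtain the tail bound. As a first step I would expand $f$ in the multivariate orthonormal Hermite basis $\{\tilde H_\alpha\}_{\alpha\in\NN^d}$ of $L^2(\Gamma^d)$, writing $f=\sum_{|\alpha|\le D} c_\alpha\tilde H_\alpha$ with coefficients $c_\alpha\in\RR^k$; the normalization $\EE\|f(X)\|_2^2=1$ then reads $\sum_{|\alpha|\le D}\|c_\alpha\|_2^2=1$. The crucial Gaussian Fourier identity I would invoke is that for $X\sim_\rho Y$,
\[
\EE\bigl[\tilde H_\alpha(X)\tilde H_\beta(Y)\bigr]=\rho^{|\alpha|}\delta_{\alpha\beta},
\]
equivalently, that $\tilde H_\alpha$ is an eigenfunction of the Ornstein--Uhlenbeck semigroup with eigenvalue $\rho^{|\alpha|}$. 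Together with the normalization, this yields
\[
\EE\|f(X)-f(Y)\|_2^2=2\sum_\alpha(1-\rho^{|\alpha|})\|c_\alpha\|_2^2\le 2(1-\rho^D),
\]
since $\rho\mapsto \rho^k$ is nonincreasing on $[0,1]$ and $|\alpha|\le D$.

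To upgrade this mean bound to a tail estimate, I would represent $Y=\rho X+\sqrt{1-\rho^2}\,Z$ with $Z$ an independent standard Gaussian on $\RR^d$, so that $g:=f(X)-f(Y)$ is a polynomial of degree at most $D$ in the standard Gaussian vector $(X,Z)\in\RR^{2d}$. Each scalar component $g_i$ is then a degree-$D$ polynomial in standard Gaussians, so Nelson's Gaussian hypercontractivity gives $\|g_i\|_p\le(p-1)^{D/2}\|g_i\|_2$ for every $p\ge 2$. I would then pass from the scalar estimate to the vector-valued statement via Minkowski's inequality in $L^{p/2}$:
\[
\bigl\|\,\|g\|_2^2\,\bigr\|_{p/2}=\Bigl\|\sum_i g_i^2\Bigr\|_{p/2}\le\sum_i\|g_i\|_p^2\le(p-1)^D\sum_i\EE g_i^2=(p-1)^D\,\EE\|g\|_2^2,
\]
which on raising to the $p/2$-th power yields $\EE(\|g\|_2^2)^{p/2}\le(p-1)^{Dp/2}\bigl(\EE\|g\|_2^2\bigr)^{p/2}$.

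The final step is Markov's inequality combined with the mean bound from the first step:
\[
P\bigl(\|g\|_2^2\ge 2t(1-\rho^D)\bigr)\le\frac{\EE(\|g\|_2^2)^{p/2}}{(2t(1-\rho^D))^{p/2}}\le\left(\frac{(p-1)^D}{t}\right)^{\!p/2},
\]
which I would then optimize over $p\ge 2$. The natural choice is $p-1\asymp t^{1/D}/e$, which makes $(p-1)^D/t$ of order $e^{-D}$ and produces a bound of the form $\exp(-\Theta(D\,t^{1/D}))$; adjusting the constant yields exactly $\exp(-\tfrac{D}{3e}t^{1/D})$, and the hypothesis $t\ge(6e)^D$ is precisely what ensures that the chosen $p$ satisfies $p\ge 2$ with sufficient slack to absorb the numerical constants cleanly.

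The principal technical obstacle, as I see it, is the transition from scalar to vector-valued hypercontractivity in the second step. Nelson's inequality is a scalar statement, and one must take care that the aggregation $\|g\|_2^2=\sum_i g_i^2$ via Minkowski produces a prefactor depending only on $p$ and $D$, and not on the output dimension $k$. This aggregation, combined with the Hermite-basis identity that injects the crucial factor $(1-\rho^D)$ into the mean, is what makes the resulting deviation estimate genuinely small when $\rho$ is near $1$ and is what gives the stability theorem its force in the later interpolation arguments.
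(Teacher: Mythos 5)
Your proof is correct and follows the paper's architecture: write $Y = \rho X + \sqrt{1-\rho^2}\,Z$ so that $g := f(X) - f(Y)$ is a degree-$D$ polynomial in the standard Gaussian vector $(X,Z)$; bound $\EE\|g\|_2^2 \le 2(1-\rho^D)$ via the Hermite / Ornstein--Uhlenbeck eigenvalue identity; then obtain a high-moment bound on $\|g\|_2^2$ from hypercontractivity and close with Markov. The one genuine divergence is in how the vector-valued moment bound is established, which you correctly identify as the crux. You apply scalar hypercontractivity coordinate-wise and aggregate through Minkowski's inequality in $L^{p/2}$, obtaining in a single pass
\[
\EE\bigl(\|g\|_2^2\bigr)^{p/2} \le (p-1)^{Dp/2}\bigl(\EE\|g\|_2^2\bigr)^{p/2}.
\]
The paper proceeds differently: it first derives the fourth-moment estimate $\EE\|g\|_2^4 \le 9^D\bigl(\EE\|g\|_2^2\bigr)^2$ by Cauchy--Schwarz together with scalar hypercontractivity, and then applies scalar hypercontractivity a second time to the degree-$2D$ scalar polynomial $\|g\|_2^2$, giving $\EE\bigl(\|g\|_2^2\bigr)^q \le [3(q-1)]^{qD}\bigl(\EE\|g\|_2^2\bigr)^q$. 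Your Minkowski route is shorter and, identifying $p = 2q$, yields the prefactor $(2q-1)^{qD}$, which matches the paper's at $q=2$ and is strictly smaller for $q>2$. Both versions produce the stated tail after optimizing the moment parameter, and the hypothesis $t \ge (6e)^D$ guarantees $p \ge 2$ with room to spare under either parametrization.
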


We begin by recalling the following standard consequence of hypercontractivity; see Theorem~5.10 and Remark~5.11 of \cite{janson-gaussian} or \cite[Sec.\ 3.2]{LedouxTalagrand}.
\begin{proposition*}[Hypercontractivity for polynomials]
If $f: \R^d \to \RR$ is a degree-$D$ polynomial and $q \in [2,\infty)$ then
\begin{equation}\label{eq:hyp-moment}
\Ex\left[|f(Y)|^q\right] \le (q-1)^{qD/2} \Ex[f(Y)^2]^{q/2}. 
\end{equation} 
\end{proposition*}

\noindent Let us now note the following useful corollary of this result for vector-valued polynomials.

\begin{lemma}
\label{lem:2-norm-moment}
If $f: \RR^d \to \RR^k$ is a degree-$D$ polynomial and $q \in [2,\infty)$ then
\[ 
\Ex[\|f(Y)\|_2^{2q}] \le [3(q-1)]^{qD} \Ex[\|f(Y)\|_2^2]^q.
\]
\end{lemma}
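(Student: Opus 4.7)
The plan is to reduce the vector-valued statement to the scalar hypercontractivity bound stated just above the lemma by applying Minkowski's inequality in $L^q$. Write $f = (f_1, \ldots, f_k)$, where each $f_i$ is a scalar polynomial of degree at most $D$. Then $\|f(Y)\|_2^2 = \sum_{i=1}^k f_i(Y)^2$, and the goal is to bound the $L^q$-norm of this non-negative random variable.

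First I would apply Minkowski's inequality (triangle inequality in $L^q(\Gamma^d)$ for $q \ge 2$) to get
\[
\Ex[\|f(Y)\|_2^{2q}]^{1/q} = \bigl\| \textstyle\sum_i f_i(Y)^2 \bigr\|_{L^q} \le \sum_{i=1}^k \bigl\| f_i(Y)^2 \bigr\|_{L^q} = \sum_{i=1}^k \Ex[f_i(Y)^{2q}]^{1/q}.
\]
Next I would apply the scalar hypercontractivity estimate \eqref{eq:hyp-moment} to each $f_i$ with exponent $2q$ in place of $q$, which yields
\[
\Ex[f_i(Y)^{2q}] \le (2q-1)^{qD}\, \Ex[f_i(Y)^2]^q,
\]
so that $\Ex[f_i(Y)^{2q}]^{1/q} \le (2q-1)^D \Ex[f_i(Y)^2]$. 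Summing over $i$ and using $\sum_i \Ex[f_i(Y)^2] = \Ex[\|f(Y)\|_2^2]$ gives
\[
\Ex[\|f(Y)\|_2^{2q}]^{1/q} \le (2q-1)^D \, \Ex[\|f(Y)\|_2^2].
\]
Raising to the $q$-th power yields the bound with constant $(2q-1)^{qD}$. To match the stated form, I would observe that for $q \ge 2$ one has $2q - 1 \le 3(q-1)$ (equality at $q=2$), so $(2q-1)^{qD} \le [3(q-1)]^{qD}$ and the claimed inequality follows.

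There is no real obstacle here: the only mildly delicate point is recognizing that one should apply scalar hypercontractivity at moment exponent $2q$ rather than $q$, so as to extract the factor $\Ex[f_i(Y)^2]$ (to the first power) that can be summed back to $\Ex[\|f(Y)\|_2^2]$. A naive attempt to apply hypercontractivity directly to the degree-$2D$ polynomial $g(Y) = \|f(Y)\|_2^2$ would leave $\Ex[g(Y)^2]$ on the right-hand side, which is not what we want; Minkowski is what lets us avoid this.
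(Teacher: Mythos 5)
Your proof is correct, and in fact yields a slightly sharper constant $(2q-1)^{qD}$ before you round it up to $[3(q-1)]^{qD}$; the two coincide at $q = 2$. Your route is genuinely different from the paper's. You decompose the vector into scalar coordinates via Minkowski's inequality in $L^q$ and then apply scalar hypercontractivity at exponent $2q$ to each $f_i$ individually, which hands you $\Ex[f_i(Y)^2]$ to the first power and lets you sum back to $\Ex\|f(Y)\|_2^2$. The paper instead treats $g(Y) = \|f(Y)\|_2^2$ as a single degree-$2D$ polynomial and applies hypercontractivity to $g$ at exponent $q$. That step does produce exactly the $\Ex[g(Y)^2] = \Ex[\|f(Y)\|_2^4]$ on the right-hand side that you flag as the obstacle, but it is not a dead end: the paper bounds $\Ex[\|f(Y)\|_2^4] \le 9^D \bigl(\Ex\|f(Y)\|_2^2\bigr)^2$ by expanding the square of the sum, applying Cauchy--Schwarz to the cross terms $\Ex[f_i^2 f_j^2]$, and invoking scalar hypercontractivity at exponent $4$ on each $f_i$. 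Both routes are short; yours buys a marginally better constant and bypasses the separate fourth-moment lemma, while the paper's keeps the vector structure intact and records the fourth-moment bound as a reusable display.
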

\begin{proof}
Let us begin by observing that by the Cauchy-Schwarz inequality and   \eqref{eq:hyp-moment}, 
\begin{align}
\EE[\|f(Y)\|_2^4] 
&\le \sum_i \EE[f_i(Y)^4] + 2 \sum_{i < j} \sqrt{\EE[f_i(Y)^4]\EE[f_j(Y)^4]}\nonumber \\
&\le \sum_i 9^D \EE[f_i(Y)^2]^2 + 2 \sum_{i < j} 9^D \EE[f_i(Y)^2] \EE[f_j(Y)^2]= 9^D \left(\E \norm{f(Y)}_2^2\right)^2.\label{eq:2-norm-moment-1}
\end{align}
On the other hand, since $\|f(Y)\|_2^2$ is a polynomial of degree at most $2D$, we may again apply \eqref{eq:hyp-moment} to obtain
\[
\EE[\|f(Y)\|_2^{2q}] \le (q-1)^{qD} \EE[\|f(Y)\|_2^4]^{q/2} \leq [3(q-1)]^{qD}\E[\norm{f(Y)}_2^2]^2
\]
as desired, where in the last line we used \eqref{eq:2-norm-moment-1}.
\end{proof}

\noindent With these results in hand we may now prove the following preliminary tail bound. (The case $k=1$ is a classical result; see e.g., Theorem~9.23 of~\cite{o-book}.)
\begin{proposition}
\label{prop:2-norm-tail}
If $f: \R^d \to \RR^k$ is a degree-$D$ polynomial, then for any $t \ge (6e)^D$,
\[ 
\Gamma^d(\|f(Y)\|_2^2 \ge t \,\EE[\|f(Y)\|_2^2]) \le \exp\left(-\frac{D}{3e} t^{1/D}\right). 
\]
\end{proposition}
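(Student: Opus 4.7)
The plan is to deduce this tail bound from the moment bound of Lemma~\ref{lem:2-norm-moment} by a standard polynomial-Chebyshev (Markov on the $q$-th moment) argument, with an optimized choice of $q$.

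First I would abbreviate $M := \E[\|f(Y)\|_2^2]$ and, for an arbitrary real $q \ge 2$ (to be chosen), apply Markov's inequality to the nonnegative random variable $\|f(Y)\|_2^{2q}$:
\[
\Gamma^d(\|f(Y)\|_2^2 \ge tM) \;=\; \Gamma^d(\|f(Y)\|_2^{2q} \ge (tM)^q) \;\le\; \frac{\E[\|f(Y)\|_2^{2q}]}{(tM)^q}.
\]
Then Lemma~\ref{lem:2-norm-moment} gives $\E[\|f(Y)\|_2^{2q}] \le [3(q-1)]^{qD} M^q \le (3q)^{qD} M^q$, so
\[
\Gamma^d(\|f(Y)\|_2^2 \ge tM) \;\le\; \left(\frac{(3q)^D}{t}\right)^{\!q}.
\]

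Next I would optimize the right-hand side in $q$. Choosing $q = t^{1/D}/(3e)$ makes $(3q)^D = (t^{1/D}/e)^D = t/e^D$, so $(3q)^D/t = e^{-D}$, and therefore the bound collapses to
\[
\left(e^{-D}\right)^{q} \;=\; \exp(-qD) \;=\; \exp\!\left(-\tfrac{D}{3e}\, t^{1/D}\right),
\]
which is exactly the desired inequality. The hypothesis $t \ge (6e)^D$ is precisely what is needed to guarantee $q = t^{1/D}/(3e) \ge 2$, so that the hypercontractive moment bound of Lemma~\ref{lem:2-norm-moment} (which requires $q \in [2,\infty)$) is indeed applicable.

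There is no real obstacle here; the entire content is Markov plus a one-line optimization, with the precise constant $6e$ in the threshold on $t$ emerging from the $q \ge 2$ constraint and the constant $3$ inside $[3(q-1)]^{qD}$ in the moment bound. The proof is a short computation once Lemma~\ref{lem:2-norm-moment} is in hand.
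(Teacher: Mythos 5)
Your proof is correct and is essentially identical to the paper's: Markov's inequality on $\|f(Y)\|_2^{2q}$, the moment bound from Lemma~\ref{lem:2-norm-moment}, and the same optimizing choice $q = t^{1/D}/(3e)$, with the constraint $t \ge (6e)^D$ serving exactly to guarantee $q \ge 2$.
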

\noindent (Recall that $\Gamma^d(\cdot)$ denotes probability under standard Gaussian measure.)
\begin{proof}
Using Lemma~\ref{lem:2-norm-moment}, for any $q \in [2,\infty)$,
\begin{align*}
\Gamma^d(\|f(Y)\|_2^2 \ge t) &= \Gamma^d(\|f(Y)\|_2^{2q} \ge t^q) \le \EE[\|f(Y)\|_2^{2q}]t^{-q}\\
& \le [3(q-1)]^{qD} \EE[\|f(Y)\|_2^2]^q t^{-q} 
\le (3q)^{qD}\, \EE[\|f(Y)\|_2^2]^q\, t^{-q}
\end{align*}
and so, letting $q = t^{1/D}/(3e) \ge 2$,
\[ \Gamma^d(\|f(Y)\|_2^2 \ge t \,\EE[\|f(Y)\|_2^2]) \le [(3q)^D/t]^q = \exp(-Dq) = \exp(-D t^{1/D}/(3e)).\qedhere \]
\end{proof}

\noindent It will be helpful to recall the \emph{noise operator}, $T_\rho:L^2(\Gamma^d)\to L^2(\Gamma^d)$, defined by
\[
T_\rho f(x) = \E f(\rho x+\sqrt{1-\rho^2}Y)
\]
where $\rho \in [0,1]$. Recall that for $t\geq 0$, $P_t := T_{e^{-t}}$ is the classical Ornstein-Uhlenbeck semigroup. In particular,
if $(h_\ell)$ are the Hermite polynomials on $\R$ normalized to be an orthonormal basis for $L^2(\Gamma^1)$, then the eigenfunctions of $T_\rho$
are given by products of Hermite polynomials \cite{LedouxTalagrand}.
In particular, for any $\psi(x)$ of the form $\psi(x) = h_{\ell_1}(x_1)\cdots h_{\ell_d}(x_d)$.  
we have
\begin{equation}\label{eq:hermite-eigenval}
T_\rho \psi (x) = \rho^D \psi(x)
\end{equation}
where $D=\sum \ell_j$. With this in hand we are now in position to prove the following inequality. 
 \begin{lemma}\label{lem:ex-noise}
If $f: \RR^d \to \RR^k$ is a degree-$D$ polynomial with $\EE \|f(Y)\|_2^2 = 1$, then for any $\rho \in [0,1]$, if $X\sim_\rho Y$,
\[ 
\Ex \|f(X) - f(Y)\|_2^2 \le 2(1 - \rho^D). 
\] 
\end{lemma}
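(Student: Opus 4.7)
The plan is to prove this by Hermite expansion and Parseval's identity. First, I would write each component polynomial $f_i$ in the orthonormal Hermite basis of $L^2(\Gamma^d)$: for a multi-index $\alpha = (\alpha_1,\ldots,\alpha_d) \in \mathbb{Z}_{\ge 0}^d$ with $|\alpha| := \sum_j \alpha_j$, set $H_\alpha(x) = \prod_j h_{\alpha_j}(x_j)$, so that
\[
f_i(x) = \sum_{|\alpha| \le D} \hat{f}_i(\alpha)\, H_\alpha(x),
\]
since $f_i$ has degree at most $D$. By orthonormality under $\Gamma^d$, Parseval gives $\EE[f_i(Y)^2] = \sum_\alpha \hat{f}_i(\alpha)^2$.

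Next, I would use the joint law of $(X,Y)$ to compute the cross-term $\EE[f_i(X) f_i(Y)]$. The key identity is that, because $X \sim_\rho Y$, conditioning on $X$ the vector $Y$ is distributed as $\rho X + \sqrt{1-\rho^2}\, Z$ with $Z$ standard Gaussian independent of $X$; this means $\EE[H_\alpha(Y) \mid X] = (T_\rho H_\alpha)(X) = \rho^{|\alpha|} H_\alpha(X)$ by \eqref{eq:hermite-eigenval}. Hence $\EE[H_\alpha(X) H_\beta(Y)] = \rho^{|\alpha|} \delta_{\alpha,\beta}$, and expanding the square yields
\[
\EE[(f_i(X) - f_i(Y))^2] = 2\sum_{|\alpha|\le D} (1-\rho^{|\alpha|})\, \hat{f}_i(\alpha)^2.
\]

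Now comes the one substantive observation: for $\rho \in [0,1]$ the map $k \mapsto \rho^k$ is nonincreasing, so $\rho^{|\alpha|} \ge \rho^D$ whenever $|\alpha| \le D$, and therefore $1 - \rho^{|\alpha|} \le 1 - \rho^D$ for every term in the sum. Plugging this back in gives $\EE[(f_i(X) - f_i(Y))^2] \le 2(1-\rho^D)\, \EE[f_i(Y)^2]$. Summing over $i \in [k]$ and using $\EE\|f(Y)\|_2^2 = 1$ yields the claimed bound
\[
\EE \|f(X) - f(Y)\|_2^2 \le 2(1-\rho^D)\sum_{i=1}^k \EE[f_i(Y)^2] = 2(1-\rho^D).
\]

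No genuine obstacle appears; the only care needed is to justify the Hermite-eigenfunction identity \eqref{eq:hermite-eigenval} in the bivariate form $\EE[H_\alpha(X) H_\beta(Y)] = \rho^{|\alpha|}\delta_{\alpha,\beta}$, which is standard once one writes $Y \mid X$ explicitly, and to notice that the degree constraint $|\alpha|\le D$ is exactly what lets $\rho^D$ come out as a uniform lower bound on $\rho^{|\alpha|}$.
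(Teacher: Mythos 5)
Your proof is correct and follows essentially the same route as the paper: both use the Hermite orthonormal basis, the eigenfunction identity for the noise operator, and the observation that $\rho^{|\alpha|} \ge \rho^D$ for $|\alpha| \le D$. The only difference is presentational — you write the bound term-by-term in Fourier coefficients $\hat f_i(\alpha)$, while the paper phrases the same inequality more compactly as $\rho^D \le \EE\langle T_\rho f(Y), f(Y)\rangle \le 1$.
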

\begin{proof}
Let $X_\rho$ be given by
\[
X_\rho = \rho Y + \sqrt{1-\rho^2} Y',
\]
where $Y'$ is an independent copy of $Y$. Observe that $(X_\rho,Y)$ is equal in law to $(X,Y)$. In this case, we see that
\begin{align*}
    \Ex\|f(X) - f(Y)\|_2^2 
    = 2 - 2 \EE \langle f(X),  f(Y) \rangle 
    = 2 - 2 \EE \langle f(X_\rho),  f(Y) \rangle
    = 2 - 2 \EE \langle T_\rho f(Y), f(Y) \rangle.
\end{align*}
Consider the collection of products of real valued Hermite polynomials of degree at most $D$,
\[
\mathcal{H}_D =\{\psi:\R^d\to\R \,:\, \psi(x) = h_{\ell_1}(x_1)\cdots h_{\ell_d}(x_d)\; s.t.\, \sum \ell_i \leq D\}.
\]
Observe that $\mathcal{H}_D$ is an orthonormal system in $L^2(\Gamma^d)$ and that the collection of real-valued polynomials $p:\R^d\to\R$ of degree at most $D$ is contained in its closed linear span.  As such, since $\rho^D\leq \rho^s$ for $0\leq s\leq D$, we see that for any $1\leq i \leq d$,
\[
\rho^D \E f_i(Y)^2 \leq \E\, T_\rho f_i(Y) f_i(Y) \leq \E f_i(Y)^2
\]
by \eqref{eq:hermite-eigenval}. Summing in $i$ yields 
\[
\rho^D\leq \E \langle T_\rho f(Y),f(Y)\rangle \leq 1.
\]
Combining this with the preceding bound yields the desired inequality.
\end{proof}

\noindent We are now in position to prove the main theorem of this section.
\begin{proof}[Proof of Theorem~\ref{thm:hyp-stable}]
Let $Y'$ be an independent copy of $Y$. Then 
if we let $\tilde{Y} = (Y,Y')$, this is a standard Gaussian vector on $\R^{2d}$. Furthermore if we let
\[
h(\tilde{Y})= f(Y) - f(\rho Y + \sqrt{1-\rho^2}Y'),
\]
then $h$ is a polynomial of degree at most $D$ in $\tilde{Y}$ and, by Lemma~\ref{lem:ex-noise}, 
\[
\EE \|h(\tilde Y)\|_2^2=\E\norm{f(X)-f(Y)}_2^2  \le 2(1-\rho^D).
\]
The result now follows from Proposition~\ref{prop:2-norm-tail}.
\end{proof}

\subsection{Failure of Low-Degree Algorithms}\label{sec:pf-lowdeg-pspin}

In this section we prove our main results on low-degree hardness for the spherical and Ising $p$-spin models (Theorems~\ref{thm:spherical-lowdeg} and \ref{thm:ising-lowdeg}). The main content of this section is to show that the OGP and separation properties imply failure of stable algorithms, following an interpolation argument similar to~\cite{gamarnik2019overlap}. The main results then follow by combining this with the stability of low-degree polynomials (Theorem~\ref{thm:hyp-stable}) and the fact that OGP and separation are known to hold (Theorem~\ref{thm:pspin-ogp}).

\paragraph{The spherical case.}

We begin by observing the following elementary fact: when two vectors of norm at least $\gamma$ are normalized onto the unit sphere, the distance between them can only increase by a factor of $\gamma^{-1}$.
\begin{lemma}\label{lem:norm-bound}
If $\|x\|_2 = \|y\|_2 = 1$ and $a \ge \gamma$, $b \ge \gamma$ then $\|x - y\|_2 \le \gamma^{-1} \|ax - by\|_2$.
\end{lemma}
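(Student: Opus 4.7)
The plan is a short direct computation based on the polarization of $\|ax-by\|_2^2$. By symmetry in the roles of $(x,a)$ and $(y,b)$, I may assume without loss of generality that $a \le b$, so that $b - a \ge 0$. The first step is to write
\[
ax - by \;=\; a(x-y) \;-\; (b-a)\,y,
\]
and then expand the squared norm as
\[
\|ax-by\|_2^2 \;=\; a^2\|x-y\|_2^2 \;+\; (b-a)^2\|y\|_2^2 \;-\; 2a(b-a)\,\langle x-y,\, y\rangle.
\]

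The key step is to check that the cross term is non-negative. Using $\|y\|_2 = 1$ and $\|x\|_2 = 1$, one has $\langle x-y, y\rangle = \langle x,y\rangle - 1 \le 0$ by Cauchy--Schwarz, while $a(b-a) \ge 0$ by the WLOG assumption, so the $-2a(b-a)\langle x-y,y\rangle$ term is $\ge 0$. Dropping the non-negative $(b-a)^2$ term as well, one obtains
\[
\|ax-by\|_2^2 \;\ge\; a^2 \|x-y\|_2^2 \;\ge\; \gamma^2 \|x-y\|_2^2,
\]
where the last step uses $a \ge \gamma$. Taking square roots yields the claim.

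There is no real obstacle here beyond recognizing that one should split $ax - by$ into a term parallel to $x - y$ plus a correction along $y$, and that the sign of $\langle x,y\rangle - 1$ makes the resulting cross term work in our favor. The WLOG reduction $a \le b$ is what makes the sign of $a(b-a)$ cooperate with the sign of $\langle x,y\rangle - 1$; without it one would have to argue on both sides separately. Everything else is a one-line expansion.
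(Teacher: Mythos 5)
Your proof is correct. You decompose $ax - by = a(x-y) - (b-a)y$, expand, and show the cross term $-2a(b-a)\langle x-y,y\rangle$ is nonnegative via Cauchy--Schwarz together with the WLOG reduction $a\le b$; after discarding both nonnegative correction terms you obtain $\|ax-by\|_2^2 \ge a^2\|x-y\|_2^2 \ge \gamma^2\|x-y\|_2^2$. The paper instead uses the symmetric algebraic identity $\|ax-by\|_2^2 = a^2 + b^2 - 2ab\langle x,y\rangle = (a-b)^2 + ab\,\|x-y\|_2^2$, drops $(a-b)^2 \ge 0$, and bounds $ab \ge \gamma^2$. The two arguments differ in shape: the paper's identity makes the nonnegative remainder appear for free, with no case split and no appeal to Cauchy--Schwarz, and it yields the slightly sharper constant $ab$ in place of your $\min(a,b)^2$. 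Your route is less symmetric and requires the WLOG step to get the sign of the cross term to cooperate, but it lands in the same place. If you want to streamline, note that $a^2\|x-y\|_2^2 + (b-a)^2 - 2a(b-a)\langle x-y,y\rangle$ actually simplifies identically to $ab\|x-y\|_2^2 + (a-b)^2$ once you substitute $\langle x-y,y\rangle=\langle x,y\rangle-1$, which recovers the paper's identity and removes the need for the WLOG and the sign argument.
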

\begin{proof}
 We have
 \[ \|ax - by\|_2^2 = a^2 + b^2 - 2ab \langle x,y \rangle = (a-b)^2 + ab \|x - y\|_2^2 \ge \gamma^2 \|x - y\|_2^2.\qedhere \]
\end{proof}

\noindent Throughout the following, it will be convenient to define the following interpolated family of tensors. Consider $(Y_\tau)_{\tau \in [0,\pi/2]}$ defined by
\begin{equation}\label{eq:Y-tau-def}
Y_\tau = \cos(\tau) Y+ \sin(\tau) Y'.
\end{equation}
Note that by linearity of inner products, we may equivalently write $\cA(Y,Y')$ from \eqref{eq:interpolated-family-p-spin} as
\[
\cA(Y,Y') =\{ H_n(x;Y_\tau) \,:\, \tau \in [0,\pi/2]\}.
\]

\noindent The following result shows that together, the OGP and separation properties imply failure of low-degree polynomials for the spherical $p$-spin.

\begin{theorem}\label{thm:spherical-ogp-lowdeg}
For any $0 \le \nu_1 < \nu_2 \le 1$, there exists a constant $\delta^* > 0$ such that the following holds. Let $p, n, D \in \NN$ and $\mu \in \RR$.
Suppose that $Y,Y'$ are independent $p$-tensors with i.i.d.\ standard Gaussian entries and let $\cA(Y,Y')$ be as in \eqref{eq:interpolated-family-p-spin}. Suppose further that with probability at least $3/4$ over $Y,Y'$, we have that $\cA(Y,Y')$ has the $(\mu,\nu_1,\nu_2)$-OGP on domain $\cS_n$ {with overlap $R=|\langle \cdot,\cdot\rangle|/n$}, and that $H_n(\cdot\,,Y)$ and $H_n(\cdot\,,Y')$ are $\nu_1$ separated above $\mu$. Then for any $\delta \le \min\{\delta^*,\frac{1}{4} \exp(-2D)\}$ and any $\gamma \ge (2/3)^D$, there is no random degree-$D$ polynomial that $(\mu, \delta, \gamma)$-optimizes \eqref{eq:p-spin-def} on $\cS_n$.
\end{theorem}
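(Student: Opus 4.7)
My plan is to argue by contradiction, discretizing the interpolation path of Gaussian tensors and deriving a violation of the OGP from the stability of $f$. Suppose $f$ is a random degree-$D$ polynomial that $(\mu,\delta,\gamma)$-optimizes on $\cS_n$. I will fix an integer $K$ and a stability parameter $t\ge(6e)^D$ (both to be chosen at the end) and set $\tau_k = k\pi/(2K)$, $Y_k = \cos(\tau_k)Y+\sin(\tau_k)Y'$, and $x_k = g_f(Y_k,\omega)$ for $0\le k\le K$. Each $Y_k$ is marginally a standard Gaussian tensor, consecutive pairs $(Y_k,Y_{k+1})$ are $\rho$-correlated with $\rho=\cos(\pi/(2K))$, and $Y_0=Y$, $Y_K=Y'$.

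Next I will define a good event $\mathcal{E}$ as the intersection of four sub-events, bounding each complement: (i) $\cA(Y,Y')$ has the $(\mu,\nu_1,\nu_2)$-OGP and $H_n(\cdot;Y),H_n(\cdot;Y')$ are $\nu_1$-separated above $\mu$ (complement $\le 1/4$ by hypothesis); (ii) $H_n(x_k;Y_k)\ge\mu$ and $\|f(Y_k,\omega)\|_2\ge\gamma\sqrt n$ for every $0\le k\le K$ (complement $\le(K+1)\delta$ by a union bound, since each $Y_k$ is marginally distributed as $Y$); (iii) $\E_Y\|f(Y,\omega)\|_2^2 \le 4n$ (complement $\le 1/4$ by Markov, using $\E_{Y,\omega}\|f\|_2^2=n$); and (iv) $\|f(Y_k,\omega)-f(Y_{k+1},\omega)\|_2^2 \le 2t(1-\rho^D)\,\E_Y\|f(Y,\omega)\|_2^2$ for every $0\le k<K$. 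For (iv) I will condition on $\omega$, making $(Y,Y')\mapsto f(Y_k,\omega)-f(Y_{k+1},\omega)$ a deterministic vector-valued degree-$\le D$ polynomial in the joint standard Gaussian $(Y,Y')$; its $L^2$ norm is controlled by Lemma~\ref{lem:ex-noise} (after rescaling by $\sqrt{\E_Y\|f(Y,\omega)\|_2^2}$), and Proposition~\ref{prop:2-norm-tail} delivers a per-step tail $\exp(-Dt^{1/D}/(3e))$, for a total complement of $K\exp(-Dt^{1/D}/(3e))$.

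On $\mathcal{E}$, combining (ii)--(iv) with Lemma~\ref{lem:norm-bound} (applied with $a=\|f(Y_k,\omega)\|_2/\sqrt n,\,b=\|f(Y_{k+1},\omega)\|_2/\sqrt n\ge\gamma$) yields
\[
\|x_k - x_{k+1}\|_2^2 \;\le\; \gamma^{-2}\,\|f(Y_k,\omega)-f(Y_{k+1},\omega)\|_2^2 \;\le\; 8n\,t\,\gamma^{-2}(1-\rho^D).
\]
Setting $r_k := |\langle x_0,x_k\rangle|/n\in[0,1]$, Cauchy--Schwarz gives $|r_{k+1}-r_k|\le\|x_k-x_{k+1}\|_2/\sqrt n$. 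By (i), $r_0=1\ge\nu_2$, $r_K\le\nu_1$ (by separation applied at the endpoints $Y,Y'$), and every $r_k$ lies in $[0,\nu_1]\cup[\nu_2,1]$ (OGP along the path). Hence at some step $|r_{k+1}-r_k|\ge\nu_2-\nu_1$, contradicting the displayed bound whenever $8t\gamma^{-2}(1-\rho^D)<(\nu_2-\nu_1)^2$. Using $1-\rho^D\le D\pi^2/(8K^2)$ and $\gamma^{-2}\le(3/2)^{2D}$, this reduces to the single constraint $K > \pi\sqrt{tD}\,(3/2)^D/(\nu_2-\nu_1)$.

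The hard part will be the final parameter choice. I will set $t^{1/D}=\max\{6e,\,3e\log(4K)/D\}$, which makes the stability union bound in (iv) at most $1/4$; combined with the algorithm union bound $(K+1)\delta\le 1/4$ from (ii), a positive-probability good event remains. For $D$ beyond a threshold $D_0(\nu_1,\nu_2)$, the stability lower bound forces $K\sim\sqrt D\,(27e/2)^{D/2}/(\nu_2-\nu_1)\sim\sqrt D\,e^{1.80\,D}/(\nu_2-\nu_1)$; since $27e/2<e^4$, this $K$ fits within $e^{2D}/16$, so the hypothesis $\delta\le(1/4)e^{-2D}$ suffices. For $D\le D_0$, the admissible $K$ is a constant depending only on $\nu_1,\nu_2$, and the requirement $\delta\le 1/(4(K+1))$ is absorbed into the universal constant $\delta^*$. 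The real obstacle is this three-way balance between the tail parameter $t$, the normalization blow-up $\gamma^{-2}$ in Lemma~\ref{lem:norm-bound}, and the discretization count $K$: the hypothesis $\gamma\ge(2/3)^D$ is precisely what keeps the product $(3/2)^{2D}(6e)^D = (27e/2)^D$ small enough relative to $e^{2D}$ for the argument to close.
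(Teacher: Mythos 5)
Your proposal is correct and follows essentially the same approach as the paper: a discretized interpolation path of $\rho$-correlated Gaussian tensors, the hypercontractive stability estimate (Theorem~\ref{thm:hyp-stable}, applied via Lemma~\ref{lem:ex-noise} and Proposition~\ref{prop:2-norm-tail}), the normalization Lemma~\ref{lem:norm-bound}, and the OGP/separation hypotheses, closed by balancing the discretization count $K$ against the tail parameter $t$ and the $\gamma^{-2}$ blow-up---the same three-way tension the paper resolves via $(3/2)\sqrt{6e}<e^{2}$. The only cosmetic differences are that you keep $\omega$ random and condition on it when invoking stability (the paper instead fixes a good $\omega^{*}$ up front by Markov), you split into regimes $D\le D_{0}$ and $D>D_{0}$ where the paper more cleanly substitutes $\tilde D=\max\{D,D^{*}\}$, and your four failure bounds should carry a little slack (as written the four $1/4$'s sum to exactly $1$; tightening, say, the Markov threshold from $4n$ to $5n$, or ensuring $(K+1)\delta<1/4$ strictly, restores positive probability of the good event).
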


\begin{proof}
Let $Y,Y'$ be as in the statement of the theorem, and let $P=\PP_Y \otimes \PP_\omega$ denote the joint law of $(Y,\omega).$
Assume on the contrary that $f$ is a random degree $D$ polynomial which $(\mu,\delta,\gamma)$-optimizes $H_n(\cdot\,;Y)$. We first reduce to the case where $f$ is deterministic. 

Let $A(Y,\omega)$ denote the ``failure'' event
\[
A(Y,\omega)= \{H_n(g_f(Y,\omega);Y) < \mu \;\vee\; \|f(Y,\omega)\|_2 < \gamma \sqrt{n}\}.
\]
Since $\EE \|f(Y,\omega)\|_2^2 = n$ and $\PP(A(Y,\omega)) \le \delta$, we have by Markov's inequality, 
\[
\PP_\omega\{\EE_Y \|f(Y,\omega)\|_2^2 \ge 3n\} \le 1/3
\quad \text{ and }\quad 
\PP_\omega(\PP_Y(A(Y,\omega)) \ge 3\delta) \le 1/3.
\]
This means that there exists an $\omega^* \in \Omega$ such that $\EE_Y \|f(Y,\omega^*)\|_2^2 \le 3n$ and $\PP_Y\{A(Y,\omega^*)\} \le 3\delta$. Fix this choice of $\omega = \omega^*$ so that $f(\cdot) = f(\cdot,\omega^*)$ becomes a deterministic function.

Let $Y,Y' \in (\RR^n)^\tp$ be independently i.i.d.\ $\mathcal{N}(0,1)$, let $Y_\tau$ be as in \eqref{eq:Y-tau-def}, and $\cA(Y,Y')$ as in \eqref{eq:interpolated-family-p-spin}.
For some $L \in \NN$ to be chosen later, divide the interval $[0,\pi/2]$ into $L$ equal sub-intervals: $0 = \tau_0 < \tau_1 < \cdots < \tau_L = \pi/2$, and let $x_\ell = g_f(Y_{\tau_\ell})$. We claim that with positive probability (over $Y,Y'$), all of the following events occur simultaneously and that this leads to a contradiction:
\begin{enumerate}
    \item [(i)] The family $\cA(Y,Y')$ has the  $(\mu,\nu_1,\nu_2)$-OGP on $\cS_n$ and $H_n(\cdot\,;Y)$ and $H_n(\cdot\,; Y')$ are $\nu_1$-separated above $\mu$.
    \item [(ii)] For all $\ell \in \{0,1,\ldots,L\}$, $f$ succeeds on input $Y_{\tau_\ell}$, i.e., the event $A(Y_{\tau_\ell},\omega^*)^c$ holds.
    \item[(iii)] For all $\ell \in \{0,1,\ldots,L-1\}$, $\|f(Y_{\tau_\ell}) - f(Y_{\tau_{\ell+1}})\|_2^2 < \gamma^2 cn$ for some $c = c(\nu_1,\nu_2) > 0$ to be chosen later.
\end{enumerate}
First, let us see why (i)-(iii) imply a contradiction. Combining (i) and (ii) gives $|\frac{1}{n} \langle x_0,x_\ell \rangle| \in [0,\nu_1] \cup [\nu_2,1]$ for all $\ell$, and $|\frac{1}{n} \langle x_0,x_L \rangle| \in [0,\nu_1]$. Since we also have $|\frac{1}{n} \langle x_0,x_0 \rangle| = 1$, there must exist an $\ell$ that crosses the OGP gap in the sense that
\[ 
\nu_2 - \nu_1 \le \frac{1}{n}\Big|\abs{\g{x_0,x_\ell}} - \abs{\g{x_0,x_{\ell+1}}}\Big| \leq
\frac{1}{n}|\langle x_0,x_\ell \rangle - \langle x_0,x_{\ell+1} \rangle| \leq  \frac{1}{\sqrt n} \|x_\ell - x_{\ell+1}\|_2.
\]
Since $\|f(Y_{\tau_\ell})\|_2, \|f(Y_{\tau_{\ell+1}})\|_2 \ge \gamma \sqrt{n}$ by (ii), Lemma~\ref{lem:norm-bound} gives
\[ \nu_2 - \nu_1 \le \frac{1}{\sqrt n} \|x_\ell - x_{\ell+1}\|_2 \le \frac{1}{\gamma \sqrt{n}} \|f(Y_{\tau_\ell}) - f(Y_{\tau_{\ell+1}})\|_2, \]
which contradicts (iii) provided we choose $c \le (\nu_2-\nu_1)^2$.

It remains to show that (i)-(iii) occur simultaneously with positive probability. By assumption, (i) fails with probability at most $1/4$, so it is sufficient to show that (ii) and (iii) each fail with probability at most $1/3$. By a union bound, (ii) fails with probability at most $3\delta (L+1)$, which is at most $1/3$ provided
\begin{equation}\label{eq:L-cond-1}
L \le \frac{1}{9\delta} - 1.
\end{equation}
For (iii), we will apply Theorem~\ref{thm:hyp-stable} with some $\tilde D \ge D$ (since we are allowed to use any upper bound on the degree) and $t = (6e)^{\tilde D}$. For any $\ell$ we have $Y_{\tau_\ell} \sim_\rho Y_{\tau_{\ell+1}}$ with $\rho = \cos\left(\frac{\pi}{2L}\right)$. Using $\EE_Y \|f(Y)\|_2^2 \le 3n$,
\begin{equation}\label{eq:ffd}
\PP( \|f(Y_{\tau_\ell}) - f(Y_{\tau_{\ell+1}})\|_2^2 \ge 6n(6e)^{\tilde D}(1-\rho^{\tilde D}) ) \le \exp(-2 \tilde{D}).
\end{equation}
Since
\[ 1-\rho^{\tilde D} = 1-\cos^{\tilde D}\left(\frac{\pi}{2L}\right) \le 1 - \left(1 - \frac{1}{2}\left(\frac{\pi}{2L}\right)^2\right)^{\tilde D} \le \frac{\tilde D}{2}\left(\frac{\pi}{2L}\right)^2, \]
equation~\eqref{eq:ffd} implies
\[ \PP( \|f(Y_{\tau_\ell}) - f(Y_{\tau_{\ell+1}})\|_2^2 \ge \gamma^2 cn ) \le \exp(-2\tilde{D}) \]
provided
\begin{equation}\label{eq:L-cond-2}
L \ge \frac{\pi}{2\gamma} \sqrt{\frac{3\tilde{D}}{c}}(6e)^{\tilde D/2}.
\end{equation}
Thus, (iii) fails with probability at most $L \exp(-2\tilde{D})$, which is at most $1/3$ (as desired) provided
\begin{equation}\label{eq:L-cond-3}
L \le \frac{1}{3} \exp(2\tilde{D}).
\end{equation}
To complete the proof, we need to choose integers $\tilde D \ge D$ and $L$ satisfying~\eqref{eq:L-cond-1}, \eqref{eq:L-cond-2}, \eqref{eq:L-cond-3}, i.e.,
\begin{equation}\label{eq:L-final}
\frac{\pi}{2\gamma} \sqrt{\frac{3\tilde{D}}{c}}(\sqrt{6e})^{\tilde D} \le L \le \min\left\{\frac{1}{9\delta}-1, \frac{1}{3}(e^2)^{\tilde D}\right\}.
\end{equation}
Require $\delta \le \frac{1}{4} \exp(-2\tilde{D})$ so that the second term in the $\min\{\cdots\}$ is smaller (when $\tilde{D}$ is sufficiently large). Since $\gamma \ge (2/3)^D \ge (2/3)^{\tilde D}$ and $\frac{3}{2}\sqrt{6e} < e^2$, there now exists an $L \in \NN$ satisfying~\eqref{eq:L-final} provided that $\tilde D$ exceeds some constant $D^* = D^*(c)$. Set $\tilde D = \max\{D,D^*\}$ and $\delta^* = \frac{1}{4} \exp(-2D^*)$ to complete the proof.
\end{proof}

\noindent Our main result on low-degree hardness of the spherical $p$-spin now follows by combining the above with the fact that OGP and separation hold in a neighborhood of the optimum.

\begin{proof}[Proof of Theorem~\ref{thm:spherical-lowdeg}]
This result follows by combining Theorem~\ref{thm:spherical-ogp-lowdeg} with Theorem~\ref{thm:pspin-ogp}.
\end{proof}

\paragraph{The Ising case.} We now turn to the corresponding result for the Ising $p$-spin model, which again shows that together, OGP and separation imply failure of low-degree polynomials. 

\begin{theorem}\label{thm:ising-ogp-lowdeg}
For any $0 \le \nu_1 < \nu_2 \le 1$ there exist constants $\delta^* > 0$ and $\eta > 0$ such that the following holds. Let $p, n, D \in \NN$ and $\mu \in \RR$. Suppose that $Y,Y'$ are independent $p$-tensors with i.i.d.\ standard Gaussian entries and let $\cA(Y,Y')$ be as in \eqref{eq:interpolated-family-p-spin}. Suppose further that with probability at least $3/4$ over $Y,Y'$, we have that $\cA(Y,Y')$ has the $(\mu,\nu_1,\nu_2)$-OGP on domain $\Sigma_n$ with overlap $R=|\langle \cdot,\cdot\rangle|/n$, and that $H_n(\cdot\,;Y)$ and $H_n(\cdot\,;Y')$ are $\nu_1$ separated above $\mu$. Then for any $\delta \le \min\{\delta^*,\frac{1}{4} \exp(-2D)\}$ and any $\gamma \ge (2/3)^D$, there is no random degree-$D$ polynomial that $(\mu, \delta, \gamma, \eta)$-optimizes \eqref{eq:p-spin-def} on $\Sigma_n$.
\end{theorem}
\begin{proof}
The proof is nearly identical to that of Theorem~\ref{thm:spherical-ogp-lowdeg} above, so we only explain the differences. We now define $A(Y,\omega)$ to be the failure event 
\[A(Y,\omega) = \{H_n(\sgn(f(Y,\omega));Y) < \mu \;\vee\; |\{k \in [n] \;:\; |f_k(Y,\omega)| \ge \gamma\}| < (1 - \eta)n\},
\]
and define $x_\ell = \sgn(f(Y_{\tau_\ell}))$. The only part of the proof we need to modify is the proof that (i)-(iii) imply a contradiction, including the choice of $c$. As above, combining (i) and (ii) gives the existence of an $\ell$ for which $\nu_2 - \nu_1 \le \frac{1}{\sqrt n} \|x_\ell - x_{\ell+1}\|_2$, i.e., $\frac{1}{4}\|x_\ell - x_{\ell+1}\|_2^2 \ge \frac{1}{4}(\nu_2 - \nu_1)^2 n$, implying that $x_\ell$ and $x_{\ell+1}$ differ in at least $\Delta := \frac{1}{4}(\nu_2 - \nu_1)^2 n$ coordinates. Let $\eta = \Delta/(2n) = \frac{1}{8}(\nu_2 - \nu_1)^2$ so that there must be at least $\Delta/2$ coordinates $i$ for which $|f_i(Y_{\tau_\ell}) - f_i(Y_{\tau_{\ell+1}})| \ge \gamma$. This implies $\|f(Y_{\tau_\ell}) - f(Y_{\tau_{\ell+1}})\|_2^2 \ge \gamma^2 \cdot \frac{\Delta}{2} = \frac{1}{8}\gamma^2 (\nu_2 - \nu_1)^2 n$, which contradicts (iii) provided we choose $c \le \frac{1}{8}(\nu_2 - \nu_1)^2$.
\end{proof}

\begin{proof}[Proof of Theorem~\ref{thm:ising-lowdeg}]
This result follows by combining Theorems~\ref{thm:ising-ogp-lowdeg} and \ref{thm:pspin-ogp}.
\end{proof}

\subsection{Stability of Langevin and Gradient Flows}
Let $U\in C^\infty(\cS_n)$ be some smooth function and  for any $\sigma\geq0$ we can consider \emph{Langevin Dynamics with potential $U$ and  variance $\sigma$} to be the strong solution of the stochastic differential equation
\[
\begin{cases}
dX_t = \sigma dB_t - \nabla U dt\\
X_0 \sim \nu,
\end{cases}
\]
where $B_t$ is spherical Brownian motion, $\nabla$ is the spherical gradient, and $\nu\in\mathcal{M}_1(\cS_n)$ is some probability measure on the sphere called \emph{the initial data}. Note that in the case $\sigma=0$ this is simply gradient flow for $U$.

We recall here the following basic fact about the well-posedness of such equations, namely their continuous dependence on the function $U$. In the following, for a vector-valued function $F: \cS_n\to T\cS_n$, we let $\norm{F}_\infty$ denote the essential supremum of the norm of $F$ induced by the canonical metric. (Here $T\cS_n$ denotes the tangent bundle to $\cS_n$.)
\begin{lemma*}
Let $U,V\in C^\infty(\cS_n)$ and  $\sigma\geq 0$.  Fix $\nu\in \mathcal{M}_1(\cS_n)$. Let  $X^U_t$  and $X^V_t$ denote the corresponding solutions to Langevin dynamics with potentials $U$ and $V$ respectively and with the same variance $\sigma$ with respect to the same Brownian motion $B_t$. Suppose further that their initial data are the same. Then there is a universal $C>0$ such that for any $t>0$
\begin{equation}\label{eq:well-posed}
\sup_{s\leq t}\norm{X^U_s - X^V_s}_2 \leq C t e^{Ct \norm{\nabla^2 U}_\infty \vee \norm{\nabla^2 V}_\infty}\norm{\nabla U-\nabla V}_\infty \quad \text{ a.s.,}
\end{equation}
where $\norm{\cdot}_2$ denotes Euclidean distance in the canonical embedding of $\cS_n\subseteq \RR^n$.
\end{lemma*}
\noindent The proof of this result is a standard consequence of Gronwall's inequality and can be seen, e.g., in \cite{varadhan,Tes12}.

 In this section, for a $p$-tensor $A$ we will write $A(x_1,\cdots,x_p)$ to denote the action of $A$ on $p$ vectors, i.e., $A(x_1,\ldots, x_p)= \g{A,x_1\otimes\cdots\otimes x_p}.$ Viewing this as a multilinear operator, we denote the operator norm by
\[
\norm{A}_{\op} = \sup_{\norm{x_1}_2=\cdots=\norm{x_p}_2=1} A(x_1,\ldots,x_p),
\]
and for a function $f$ we let $\norm{\nabla^k f}_\infty = \sup_x \abs{\nabla^k f}_{\op}(x)$. Let $Y$ denote a $p$-tensor with standard Gaussian entries. Then, by a standard $\eps$-net argument 
(see, e.g., \cite[Lemma 3.7]{BGJ20}), we have
\begin{equation}\label{eq:gaussian-op-norm}
\begin{aligned}
\E \norm{Y}_{\op} \leq C \sqrt{n}\qquad \text{ and }\qquad
\norm{Y}_{\op} \leq C' \sqrt n \quad  \text{ w.p. } 1-\exp(-\Omega(n)),
\end{aligned}
\end{equation}
for some $C,C'>0$. 
As a consequence of the above, we note the following. 

\begin{lemma}\label{lem:langevin-stability-main}
Let $\delta = n^{-\alpha}$ for some {fixed } $\alpha>0$ and let $\{\tau_i\}$ denote a partition of $[0,\pi/2]$ with $|\tau_{i+1}-\tau_i| \leq \delta$ with $\lceil\delta^{-1}\rceil+1$ elements.  Let $(X^{\tau_i})_i$  denote the family of strong solutions to Langevin dynamics with variance $\sigma\geq0$,  potentials $H_n(\cdot\,;Y_{\tau_i})$ and initial data, $\nu\in\mathcal{M}_1(\cS_n)$. We have that there is a $C>0$ independent of $n$ such that for any $T>0$ 
\[
\sup_{i}\sup_{s\leq T}\norm{X_s^{\tau_i}-X_s^{\tau_{i+1}}}_2 \leq C T e^{C T/n} n^{-\alpha+1/2}
\]
with probability at least $1-e^{-\Omega(n)}$.
\end{lemma}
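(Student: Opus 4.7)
The plan is to apply the well-posedness estimate \eqref{eq:well-posed} to each consecutive pair $(X^{\tau_i},X^{\tau_{i+1}})$ and control both quantities on its right-hand side: the sup-norm of $\nabla H_n(\,\cdot\,;Y_{\tau_i})$, which governs the exponential factor, and the sup-norm of the difference $\nabla H_n(\,\cdot\,;Y_{\tau_i})-\nabla H_n(\,\cdot\,;Y_{\tau_{i+1}})$, which should scale like $\delta$. Using the linearity $Z\mapsto\nabla H_n(\,\cdot\,;Z)$, both reductions pass through operator-norm bounds for $Y$, $Y'$, and the difference $Y_{\tau_i}-Y_{\tau_{i+1}}$.

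The probabilistic step is a standard $\varepsilon$-net plus Gaussian concentration argument: for a $p$-tensor with i.i.d.\ $\mathcal{N}(0,1)$ entries, one has $\norm{Y}_{\op}\le C_p\sqrt n$ with probability $\ge 1-e^{-\Omega(n)}$. On the event $\{\norm{Y}_{\op}\vee\norm{Y'}_{\op}\le C_p\sqrt n\}$, the definition $Y_\tau=\cos(\tau)Y+\sin(\tau)Y'$ from \eqref{eq:Y-tau-def} and the triangle inequality give $\norm{Y_\tau}_{\op}\le 2C_p\sqrt n$ uniformly in $\tau$, while writing $Y_{\tau_i}-Y_{\tau_{i+1}}=(\cos\tau_i-\cos\tau_{i+1})Y+(\sin\tau_i-\sin\tau_{i+1})Y'$ and using the $1$-Lipschitzness of $\cos$ and $\sin$ yields $\norm{Y_{\tau_i}-Y_{\tau_{i+1}}}_{\op}\le 2C_p\delta\sqrt n$.

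To convert operator-norm bounds into gradient bounds, I note that for any $p$-tensor $Z$, any $x\in\cS_n$, and any unit tangent vector $v$, the Euclidean directional derivative $\g{\nabla H_n(x;Z),v}$ equals $(p/n^{(p+1)/2})\,Z(x,\ldots,x,v)$ (after summing over the $p$ positions of $v$), which has magnitude at most $(p/n^{(p+1)/2})\norm{Z}_{\op}\,n^{(p-1)/2}=p\,\norm{Z}_{\op}/n$; the spherical gradient is the tangential projection of the Euclidean one and obeys the same estimate. Thus on the good event, $\norm{\nabla H_n(\,\cdot\,;Y_{\tau_i})}_\infty\le 2pC_p/\sqrt n=O(1)$ and, by linearity of $\nabla H_n$ in its tensor argument, $\norm{\nabla H_n(\,\cdot\,;Y_{\tau_i})-\nabla H_n(\,\cdot\,;Y_{\tau_{i+1}})}_\infty\le 2pC_p\delta/\sqrt n=O(\delta)$ simultaneously for all $i$.

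Plugging these two estimates into \eqref{eq:well-posed} produces $\sup_{s\le T}\norm{X_s^{\tau_i}-X_s^{\tau_{i+1}}}_2\le CTe^{CT}\,n^{-\alpha}$ for a constant $C$ depending only on $p$, and since both gradient bounds hold deterministically on the single high-probability event, no further union bound over the $\lceil\delta^{-1}\rceil$ indices is needed; the overall failure probability remains $e^{-\Omega(n)}$. The only nontrivial technical input is the Gaussian operator-norm concentration for $p$-tensors, which I regard as the main step; the remainder is linearity of $\nabla H_n$ in $Y$ and a direct application of the quoted Gronwall-type stability estimate \eqref{eq:well-posed}.
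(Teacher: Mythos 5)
Your proposal is correct and follows essentially the same strategy as the paper's proof: apply the Gronwall-type well-posedness estimate \eqref{eq:well-posed} to each consecutive pair, and control both sup-norms on its right-hand side via operator-norm bounds for the tensors $Y_\tau$ and their successive differences, exploiting the linearity of $Z\mapsto\nabla H_n(\cdot\,;Z)$ and the $\varepsilon$-net concentration bound $\norm{Y}_{\op}=O(\sqrt n)$. One small but genuine improvement in your argument deserves mention: the paper treats each difference $Y_{\tau_i}-Y_{\tau_{i+1}}$ as a fresh Gaussian tensor of small variance, runs a separate concentration estimate for its operator norm, and then union-bounds over the $\lceil n^\alpha\rceil+1$ indices (giving probability $1-O(n^\alpha e^{-cn})$); you instead observe that $Y_{\tau_i}-Y_{\tau_{i+1}}=(\cos\tau_i-\cos\tau_{i+1})Y+(\sin\tau_i-\sin\tau_{i+1})Y'$ is a deterministic linear combination with coefficients of size $O(\delta)$, so on the single event $\{\norm{Y}_{\op}\vee\norm{Y'}_{\op}\le C_p\sqrt n\}$ the bound $\norm{Y_{\tau_i}-Y_{\tau_{i+1}}}_{\op}\le 2C_p\delta\sqrt n$ holds deterministically and simultaneously for all $i$, eliminating the union bound while still yielding $1-e^{-\Omega(n)}$. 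Both arguments are valid; yours is marginally cleaner in the probabilistic bookkeeping, and the resulting constant depends only on $p$ and the universal constant in \eqref{eq:well-posed}, as required.
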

\begin{proof}
Evidently, the proof will follow by \eqref{eq:well-posed} upon controlling the gradient and Hessian of $H_n(\cdot\,;Y)$.
 To this end,  we see that
\[
\nabla H_n(x;Y_\tau) = \frac{1}{n^{\frac{p+1}{2}}}(Y_{\tau}(\pi_x,x,\ldots,x)+\cdots+Y_{\tau}(x,\ldots,x,\pi_x))
\]
where $\pi_x$ denotes the projection onto $T_x \cS_n$. In particular,
\[
\norm{\nabla H_n(x;Y_{\tau})}_2 \leq \frac{p}{n} \norm{Y_{\tau}}_{\op} \leq \frac{p}{n}(\norm{Y}_{\op}+\norm{Y'}_{\op}).
\]
By a standard epsilon-net argument (see, e.g., \cite[Lemma 3.7]{BGJ20}), we have by applying \eqref{eq:gaussian-op-norm} and a union bound,
\[
\sup_{0\leq\tau\leq \pi/2} \norm{\nabla H_n(\cdot\,;Y_\tau)}_\infty \leq 2C/\sqrt{n},
\]
and, by similar reasoning,
\[
\sup_{0\leq\tau\leq \pi/2} \norm{\nabla^2 H_n(\cdot\,;Y_\tau)}_\infty \leq 2C/n
\]
for some $C>0$ with probability $1-\exp(-\Omega(n))$.

On the other hand, in law we have that
$Y_{\tau_i}-Y_{\tau_{i+1}} = Z$ satisfies 
\[
Z \sim Y\sqrt{(\cos(\tau_i)-\cos(\tau_{i+1}))^2+(\sin(\tau_i)-\sin(\tau_{i+1}))^2}.
\]
Since both cosine and sine are 1-Lipschitz, we see that the entries of $Z$ are i.i.d.\ and have variance at most $\delta$. 
Consequently, by the same epsilon-net argument, we have with probability $1-O(n^\alpha e^{-c n})$,
\[
\max_{i}\norm{\nabla H_n(\cdot\,;Y_{\tau_i})-\nabla H_n(\cdot\,;Y_{\tau_{i+1}})}_2\leq C \delta/\sqrt{n}
\]
as desired.
\end{proof}

\subsection{Failure of Langevin Dynamics}\label{sec:pf-langevin}

We begin by noting the following useful concentration argument.
\begin{lemma}\label{lem:kirszbraun-conc}
 Let $f_d:\R^d\to\R$ be a sequence of Borel functions and $E_d\subseteq \R^d$ a sequence of Borel sets, and let $Z$ be a standard Gaussian vector in $\R^d$.
Suppose that there are some $\alpha,\beta>0$ and $K,C,c>0$ such that: $\abs{f_d(0)},\E f_d(Z)^2 = O(d^\alpha)$, 
$f_d$ is $K$-Lipschitz on $E_d$,
and $P(Z\in E_d^c)\leq C\exp(-cd^\beta).$
Then there are some $C',c'>0$ such that for any $t>0$, we have
\[
P(\abs{f_d(Z)-\E f_d(Z)}>t)\leq C'\exp\{-c'd(t^2/K)\wedge d^\beta\},
\]
for $d$ sufficiently large.
\end{lemma}
\begin{proof}
In the following, the constants $C,c>0$ may change from line to line.
Let
\[
g_d(x) = \inf_{y\in E_d} \{ f_d(y) + K\norm{x-y}_2\}.
\]
Since $f_d$ is $K$-Lipschitz, so is $g_d$. (This can be checked directly or by Kirszbraun's extension theorem.)
By Gaussian concentration for Lipschitz functions (see, e.g., \cite{vershynin2018high}), we then have 
\[
P(\abs{g_d(Z)-\E g_d(Z)}>t)\leq Ce^{-c d t^2/K}.
\]
By construction, we have $\abs{g_d(x)}\leq \abs{f(0)}+ K \norm{x}$, so that by assumption,
\[
\E g_d(Z)^2 \leq \abs{f_d(0)}+\E\norm{x}= O(d^\alpha).
\]
As such, we have that 
\[
\E\abs{f_d(Z)-g_d(Z)}\leq \sqrt{\E f_d^2 + \E g_d^2}\cdot P(E_d^c) = O(d^\alpha \exp(-cd)) = e^{-\Omega(d)}\leq t/2.
\]
for $d$ large enough.
Combining these bounds, we obtain
\begin{align*}
P(\abs{f-\E f} \geq t)&\leq P(\abs{g-\E f}\geq t, E_d) + P(E_d^c) \leq P(\abs{g-\E g}\geq t/2) + P(E^c)\leq C(e^{-cdt^2/K}+e^{-cd}),
\end{align*}
so that
\[
P(\abs{f-\E f} \geq t)\leq C e^{-cdt^2/K\wedge 1}.
\]
as desired
\end{proof}

\begin{lemma}\label{lem:concetration-langevin}
Let $T \leq L N $ for some $L>0$ and $\sigma\geq0$. Let $X_T$ denote the solution of Langevin dynamics with potential $H_n$, variance $\sigma$, and initial data $\nu\in\cM_1(\cS_n)$, and let $Q_\nu$  denote its law conditionally on $Y$. Then we have that there are some $c,c'>0$ such that for every $\eps>0$ and $n$ sufficiently large,
\[
\PP_Y\otimes Q_\nu( \abs{H_n(X_T;Y)-\EE H_n(X_T;Y)} \geq \eps) \leq \exp\left(-c n {\frac{\eps^2}{{Le^{c'L}}}}\right).
\]
\end{lemma}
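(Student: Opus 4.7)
The strategy is standard Gaussian concentration. We first condition on the initial datum $X_0 \sim \nu$, which is independent of $Y$ and of the driving Brownian motion; by rotational invariance of the law of $Y$ and of spherical Brownian motion, the conditional expectation $\EE\bigl[H_n(X_T;Y)\bigm| X_0=x_0\bigr]$ is independent of $x_0\in\cS_n$ and equals the unconditional mean, so it suffices to prove the deviation estimate for each fixed starting point $x_0$. Represent the spherical Brownian motion $(B_s)_{s\leq T}$ via its Karhunen--Lo\`eve expansion as a measurable function of a sequence $\xi=(\xi_i)_{i\geq1}$ of i.i.d.\ standard Gaussians, so that $X_T = X_T(Y,\xi,x_0)$ is a measurable functional of the jointly Gaussian pair $(Y,\xi)$. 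The plan is to show that $\Phi(Y,\xi) := H_n(X_T(Y,\xi,x_0);Y)$ is Lipschitz in $(Y,\xi)$ in the natural Euclidean/Hilbert metric with Lipschitz constant $O(e^{CT}/\sqrt{n})$ on a high-probability event, and then invoke the Borell--Tsirelson--Ibragimov--Sudakov Gaussian concentration inequality.

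The key Lipschitz bound is established on the good event $E_n = \{\norm{Y}_{\op}\leq C\sqrt{n}\}$, which by the $\eps$-net argument used in the proof of Lemma~\ref{lem:langevin-stability-main} satisfies $\PP(E_n^c)\leq e^{-\Omega(n)}$. Split
\[
|\Phi(Y,\xi)-\Phi(Y',\xi')| \leq |H_n(X_T^{Y,\xi};Y)-H_n(X_T^{Y',\xi'};Y)| + |H_n(X_T^{Y',\xi'};Y)-H_n(X_T^{Y',\xi'};Y')|.
\]
The second term is bounded by $n^{-1/2}\norm{Y-Y'}_F$, since for every $x\in\cS_n$ the map $Y\mapsto H_n(x;Y)$ is $n^{-1/2}$-Lipschitz in Frobenius norm by Cauchy--Schwarz. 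For the first term, the continuous dependence estimate \eqref{eq:well-posed} applied to the potentials $H_n(\cdot;Y)$ and $H_n(\cdot;Y')$, together with the analogous Gronwall estimate controlling the effect of changing the driving noise $\xi\mapsto\xi'$, yields on $E_n$
\[
\norm{X_T^{Y,\xi}-X_T^{Y',\xi'}}_2 \leq CTe^{CT}\left(\norm{\nabla H_n(\cdot;Y)-\nabla H_n(\cdot;Y')}_\infty + \sigma\sup_{s\leq T}\norm{B_s^\xi-B_s^{\xi'}}_2\right),
\]
using the bound $\sup_Y\norm{\nabla H_n(\cdot;Y)}_\infty \leq 2C$ already established there. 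Since $\norm{\nabla H_n(\cdot;Y)-\nabla H_n(\cdot;Y')}_\infty\leq C'n^{-1/2}\norm{Y-Y'}_F$ by the same multilinear computation, and $H_n(\cdot;Y')$ is $O(1)$-Lipschitz on $\cS_n$ on $E_n$, the two contributions combine into a single Lipschitz constant of order $e^{CT}/\sqrt{n}$ on $E_n$.

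To apply Gaussian concentration globally on the (infinite-dimensional) ambient Gaussian space, one truncates: let $\tilde\Phi$ be a Lipschitz extension (via Kirszbraun's theorem) to the full ambient space of the restriction of $\Phi$ to $E_n$, so that $\tilde\Phi$ is $O(e^{CT}/\sqrt{n})$-Lipschitz globally. Borell's inequality then yields
\[
\PP(|\tilde\Phi-\EE\tilde\Phi|\geq\eps/2) \leq \exp(-c\eps^2 n)
\]
for some $c = c(T,\sigma) > 0$. Finally, $\PP(E_n^c)\leq e^{-\Omega(n)}$ together with the crude bound $|H_n(x;Y)|\leq\norm{Y}_{\op}/\sqrt{n}$ and Gaussian concentration of $\norm{Y}_{\op}$ imply that $\EE\Phi$ and $\EE\tilde\Phi$ differ by $e^{-\Omega(n)}$, and that the events $\{|\Phi-\EE\Phi|\geq\eps\}$ and $\{|\tilde\Phi-\EE\tilde\Phi|\geq\eps/2\}$ differ by an event of probability at most $e^{-\Omega(n)}$, yielding the claim after absorbing constants. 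The main technical obstacle is precisely this truncation step: the raw Lipschitz constant of $\Phi$ is not globally $O(1/\sqrt{n})$ since it grows with $\norm{Y}_{\op}$, so one must carefully replace $\Phi$ (or equivalently the potential $H_n(\cdot;Y)$) by a cutoff version before Gaussian isoperimetry can be used cleanly.
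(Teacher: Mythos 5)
Your argument follows the same route as the paper---Gronwall/well-posedness gives Lipschitz continuity of the solution map, and Gaussian concentration of Lipschitz functions finishes---but you supply two pieces that the paper's three-line proof leaves implicit and that are genuinely needed. First, the Lipschitz constant coming from \eqref{eq:well-posed} contains a factor $e^{CT(\|\nabla H_n(\cdot;Y)\|_\infty\vee\|\nabla H_n(\cdot;Y')\|_\infty)}$ that is controlled uniformly in $n$ only on the high-probability event $\{\|Y\|_{\op}=O(\sqrt n)\}$; your cutoff-and-extend step (a McShane extension suffices, since $\Phi$ is scalar-valued) is precisely what is needed to apply Borell--TIS when the Lipschitz bound is only conditional, and the paper does not address this. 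Second, the paper only records Lipschitzness of $Y\mapsto X_T$, but to concentrate $H_n(X_T;Y)$ around its \emph{unconditional} mean one must also concentrate over the driving noise and over $X_0$---conditionally on $(B,X_0)$ the $Y$-expectation is in general not constant. Your device of treating the noise coordinates $\xi$ jointly with $Y$, and removing the $X_0$ dependence by rotational invariance of $\PP_Y$ and of spherical Brownian motion, is the correct resolution, and is strictly more information than the paper records. Two small caveats on execution: spherical Brownian motion is not a Gaussian process, so ``its Karhunen--Lo\`eve expansion'' should be read as that of the ambient Euclidean Brownian motion from which the spherical one is pathwise constructed; and the bound of $\sup_{s\le T}\|B^\xi_s-B^{\xi'}_s\|$ by $\|\xi-\xi'\|_{\ell^2}$ should be read via the Cameron--Martin embedding $\sup_{t\le T}|h(t)|\le\sqrt T\,\|h\|_{H^1_0}$---this is the only Lipschitz property that Gaussian isoperimetry actually requires, and it is exactly what the Gronwall estimate produces for perturbations of the driving path by a Cameron--Martin direction. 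With that reading, your Lipschitz constant $O(e^{CT}/\sqrt n)$ on the good event is of the right order (the dominant contribution is the direct term $\|Y-Y'\|_F/\sqrt n$), and the stated rate $\exp(-c\eps^2 n)$ follows.
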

\begin{proof}
We follow here the same proof as in \cite{gamarnik2019overlap}, modified for the continuous-time setting and making the time-dependence explicit. Recall as above that for any two tensors $Y$ and $Y'$, we have
\[
\norm{\nabla H_n(\cdot\,;Y)-\nabla H_n(\cdot\,;Y')}_2\leq \frac{1}{n}\norm{Y-Y'}_{\op} \leq \frac{1}{n}\norm{Y-Y'}_2,
\]
where here for a tensor $A$, $\norm{A}_2$ denotes the square root of the sum of the squares of its entries,
Recall further that that $\norm{\nabla^2 H_n(\cdot\,;Y)}_{\infty} \leq \norm{Y}_{\op}\, n^{-3/2}\leq C/n$ for some $C>0$
with probability $1-\exp(-\Omega(n))$, where we have used here \eqref{eq:gaussian-op-norm}.
Thus if we let $d=n^{p}$, and consider the event $E_d = \{\norm{Y}_{\op} \geq C \sqrt{n}\}$, we have that $P(E_d)= e^{-\Omega{(d^{1/p})}}$. Furthermore we have by \eqref{eq:well-posed} that the map $F (Y)= X_T$ is $C(T/\sqrt{n})e^{T/n}$-lipschitz.
On this event, we then have for $T \leq L N$, that
\[
\abs{H_n(F(Y);Y)-H_n(F(Y');Y')}\leq \frac{\norm{Y}_{\op}}{n}\norm{F(Y)-F(Y')}_2 +\frac{1}{\sqrt{n}}\norm{Y-Y'}_{\op}
\leq (CLe^{L}+\frac{1}{\sqrt{n}})\norm{Y-Y'}_2. 
\]
Thus the map $f_d(Y) = H_n(F(Y);Y)$ is $CLe^{L}$-lipschitz on $E_d$. Furthermore $f_d(0)=0$ and 
$$\E f_d(x)^2 \leq \E(\max_{x\in\cS_n} H_n(x;Y))^2 \leq \E\left(\frac{1}{\sqrt{n}}\norm{Y}_{\op}\right) \leq C$$
by \eqref{eq:gaussian-op-norm}.
The result then follows by Lemma~\ref{lem:kirszbraun-conc} and the fact that $n=d^{1/p}< d$.
\end{proof}

\begin{proof}[Proof of Theorem~\ref{thm:langevin-main}]
In the following, we let $P=\PP\otimes Q_\mu$. Recall the family $(Y_\tau)$ from \eqref{eq:Y-tau-def} and $\cA(Y,Y')$ from \eqref{eq:interpolated-family-p-spin}. Let  $\delta=n^{-\alpha}$ for some $\alpha>0$ and define $(\tau_i)$ as in Lemma~\ref{lem:langevin-stability-main}.
Fix an $\eps>0$ and let $G$ denote the event that the overlap gap property holds for $\cA(Y,Y')$ with parameters $(E_p(\cS)-\eps,\nu_1,\nu_2)$
as well as $\nu_1$-separation of $H_n(\cdot\,;Y_0)$ and $H_n(\cdot\,;Y_1)$ above level $E_p(\cS)-\eps$. By Theorem~\ref{thm:pspin-ogp}, this holds for every $\eps>0$ sufficiently small with probability $1-\exp(-\Omega(n))$.

Let $X^{\tau_i}$ denote the  solutions to Langevin dynamics
corresponding to the potentials $H_n(\cdot\,;Y_{\tau_i})$.
Let $B_n$ and $\tilde{B}_n$ denote  the bad events
\begin{align*}
\tilde{B}_n &= \{\exists i \,:\, H_n(X_T^{\tau_i};Y_{\tau_i}) \geq E_p(\cS)-\eps \}\\
B_n &= \{ H_n(X_T^{\tau_i};Y_{\tau_i}) \geq E_p(\cS)- 3\eps \;\, \forall i\}.
\end{align*}
Let $E_i(\eps)$ denote the complement of the event bounded in Lemma~\ref{lem:concetration-langevin} applied to $X^{\tau_i}_T$, and let $E(\eps)= \cap E_i(\eps)$ which has probability at least $1-\exp(-\Omega(n))$. Note that on $\tilde{B}_n\cap E(\eps)$, we have that $\E H_n(X^{\tau_i}_T;Y_{\tau_i})\geq E_p(\cS)-2\eps$ for some $i$. As the expectation is non-random and independent of $i$, this holds for all $i$. Consequently, $\tilde{B}_n\cap E(\eps)\subset B_n$. Thus we have $P(\tilde{B}_n)\leq P(B_n) + \exp(-\Omega(n))$. 

Suppose now that the events $B_n$ and $G$ have non-empty intersection. Let us work on this intersection.
By $\nu_1$-separation, recalling the overlap function $R(x,y)=\abs{\frac{1}{n}\g{x,y}}$, we have that 
\[
R(X^0_T,X^1_T) \leq \nu_1
\]
whereas $R(X^0,X^0)=1$. On the other hand, by Lemma~\ref{lem:langevin-stability-main},  it follows that
\[
\abs{R(X_T^0,X_T^{\tau_i})-R(X_T^0,X_T^{\tau_{i+1}})}\leq C T e^{C T/n}n^{-\alpha}.
\]
Thus we see that for $n$ sufficiently large, there must be some (random) $j$ such that 
\[
\nu_1 < \abs{R(X_T^0,X_T^{\tau_j})} <\nu_2.
\]
This contradicts the overlap gap property. Thus $B_n\subseteq G^c$.
Consequently, we have that 
\[
P(\tilde{B}_n)\leq P(B_n) +e^{-\Omega(n)} \leq P(G^c)+e^{-\Omega(n)}=e^{-\Omega(n)}.
\]
Observing that $\tilde{B}_n^c$ is contained in the event we are trying to bound yields the desired result by monotonicity of probabilities.
\end{proof}

\subsection{Proof of Overlap Gap Property}
\label{sec:pf-ogp-pspin}

\begin{proof}[Proof of Theorem~\ref{thm:pspin-ogp}]
We begin with the spherical setting. Let us view $H_n(x;Y)$ as a Gaussian process on $\cS_n$. It was shown in \cite[Theorem 3]{ChenSen17} that  for any $\tau,\eps>0$ with 
$\tau \leq \pi/2$ there are
$C,c,\tilde \mu>0$ such that with probability at least $1-Ce^{-c n}$, 
\[
\max_{R(x,y)>\eps}  H_n(x;Y_\tau) + H_n(x,Y)
<\max_{x\in\cS_n} H_n(x;Y_\tau)+\max_{x\in\cS_n} H_n(x;Y)-  \tilde \mu,
\]
so that if both $u,v$ satisfy
\begin{equation*}
    \begin{aligned}
        \max_{x\in\cS_n} H_n(x;Y_\tau) - \tilde \mu /2 &\leq H_n(u,Y_\tau)\\
        \max_{x\in\cS_n} H_n(x;Y) - \tilde \mu /2 &\leq H_n(v;Y)
    \end{aligned}
\end{equation*}
then it must be that $|R(u,v)|<\eps$. (The result is stated there for $\tau <\pi/2$, but can be extended to the easier case of $\tau = \pi/2$. See Remark~\ref{rem:tau} below.) One can then replace the maximum on the right-hand side of the above upon recalling that by the Borell--TIS inequality, 
\[
\PP( \abs{ \max_{x\in \cS_n} H_n(x;Y) -\E \max_{x\in\cS_n} H_n(x;Y)} \geq \eps)\leq C\exp(-cn\eps^2)
\]
for some $C,c>0$. In particular, upon recalling that $\E\max_{\cS_n} H_n(x;Y)\to E_p(\cS)$ \cite{JagTob17}, for $n$ sufficiently large  we obtain
\begin{equation}\label{eq:disorder-overlap}
    \begin{aligned}
        E_p(\cS) - \tilde \mu /4 &\leq H_n(u,Y_\tau)\\
        E_p(\cS) - \tilde \mu /4 &\leq H_n(v;Y).
    \end{aligned}
\end{equation}
On the other hand, as shown in \cite[Theorem 6]{AuffChen18}, \eqref{eq:disorder-overlap} holds with $\tau =0$ as well, except now we have that the inner products of the near-maximal $u,v$ must satisfy $R(u,v)\in[0,\nu_1]\cup[\nu_2,1]$ for some $0\leq\nu_1<\nu_2\leq1$. By combining these results we can obtain the overlap gap 
property with parameters $(E_p(\cS)-\tilde\mu/4,\nu_1,\nu_2)$ 
by applying the discretization argument from in 
\cite{gamarnik2019overlap}.  Note that, 
\eqref{eq:disorder-overlap} in the case $\tau = \pi/2$ implies 
$\eps$-separation below level $E_p(\cS)-\tilde \mu/4$. As $\eps$ was arbitrarily small we can take $\eps=\nu_1$.

After recalling that $\E \max_{\Sigma_n} H_n(x;Y)\to E_p(\Sigma)$ \cite{Ton02}, we see that the second result is a restatement of \cite[Theorem 3.4]{gamarnik2019overlap} after applying Borell's inequality as  in \eqref{eq:disorder-overlap}.
\end{proof}
\begin{remark}\label{rem:tau}
While the result of \cite[Theorem 3]{ChenSen17} is only stated for $0<\tau<\pi/2$, it easily extends to the case $\tau = \pi/2$ by differentiating in the Lagrange multiplier term $\lambda$ in the ``RSB bound'' from \cite[Eq.\ 59]{ChenSen17}. 
For the reader's convenience, we sketch this change. We follow here the notation of \cite{ChenSen17}. By comparing to \cite[Eq.\ 78]{ChenSen17}, one sees that $E(0,u,\lambda)$ from \cite[Eq.\ 61]{ChenSen17} satisfies $E(0,u,0)= 2 E_p(\cS) \, (= 2GS )$. On the other hand for $u>0$ we have $\partial_\lambda E(0,u,0) = - u<0$, from which it follows that $\min_\lambda T(0,u,\lambda)<2 E_p(\cS)$ as desired. The case $u<0$ follows by symmetry.
\end{remark}

\section{Proofs for the Maximum Independent Set Problem}

\subsection{Random Walks Avoiding Marked Edges}
\label{sec:rand-walk}

This section is devoted to proving a key fact about random walks on the hypercube (Proposition~\ref{prop:no-jump-events}), which may be of independent interest. We remark that this result can be generalized to the $p$-biased hypercube (see the conference version of this paper~\cite{gamarnik2020lowFOCS}, and~\cite{wein2020optimal}), but we will not need this here.

Consider the hypercube graph on vertex set  $\{\pm 1\}^m$, where vertices $y$ and $y'$ are connected by an edge when they differ in precisely
one coordinate. Denote the edge set by $E_m$. In the following, we call a $\{0,1\}$-valued function on the edge set, 
$\phi:E_m\to \{0,1\}$, a \emph{marking} of the edges $E_m$, or simply a marking. Here we view the set of edges 
where $\phi = 1$ as ``marked''. 
We extend $\phi$ to self-loops $(y,y), y\in \{\pm 1\}^m$ by $\phi(y,y)=0$ for all $y$.

We now consider an interpolated family of samples from $\{\pm 1\}^m$ similar to the one underlying
Theorem~\ref{thm:multi-OGP}.
Fix an integer $r \ge 1$ and consider a process $Y_0,Y_1,\ldots,Y_{r}$, constructed as follows.
$Y_0$ is generated uniformly at random from $\{\pm 1\}^m$. Once $Y_0,\ldots,Y_{t}$ are defined,
$Y_{t+1}$ is obtained from $Y_{t}$ by resampling coordinate $\ell$ of $Y_{t}$ where $\ell = 1+(t \bmod m)$. Put simply, $Y_1,\ldots,Y_m$ are obtained by resampling the coordinates $1,\ldots,m$ 
one-by-one, but $Y_{m+1}$ is obtained from $Y_m$ by resampling coordinate $1$, $Y_{m+2}$ is 
obtained from $Y_{m+1}$ by resampling coordinate $2$, and so on.  
Finally, in the following, let 
\[
\cE_t = \{ \phi(Y_{t-1},Y_t) = 1\} 
\]
denote the event that $(Y_{t-1},Y_t)$ is an  edge (that is, $Y_{t-1}\ne Y_t$) 
and that this edge is marked.

\begin{proposition}\label{prop:no-jump-events}
For any marking function $\phi$ and any integer $r \ge 1$,  the following holds:
\begin{align*}
\pr\left(\cap_{t\in [r]}\,\mathcal{E}_t^c\right)\ge \exp\left(-2\log 2\sum_{t\in [r]}\pr\left(\mathcal{E}_t\right)\right).
\end{align*}
\end{proposition}

\begin{proof}
For every $y\in \{\pm 1\}^m$, we let $p_s(y)=\pr(\cap_{t \leq s}\,\mathcal{E}_t^c\,|\,Y_0=y)$. 
The main step in proving the claim is the following
relation.

\begin{lemma}\label{lemma:qx-bound}
For any marking function $\phi$ and any $r\geq 1$,  the following bound holds:
\begin{equation}\label{eq:qx-bound}
-\E \log p_r(Y) \le 2\log 2\sum_{t\leq r}\pr(\mathcal{E}_t),
\end{equation}
where $Y\in \{\pm 1\}^{m}$ is uniformly random. 
\end{lemma}
\noindent From this lemma, the proof of the proposition is immediate: by Jensen's inequality
\begin{align*}
\log \pr\left(\cap_{t\in [r]}\,\mathcal{E}_t^c\right) &= \log \E_Y[p_r(Y)] \ge \E_Y[\log p_r(Y)] \ge -2\log 2\sum_{t\in [r]}\pr(\mathcal{E}_t).\qedhere
\end{align*}
\end{proof}

\noindent We now focus on establishing Lemma~\ref{lemma:qx-bound}. 

\begin{proof}[Proof of Lemma~\ref{lemma:qx-bound}] 
The proof is by induction on $r$. Consider the base case $r=1$, 
for which we claim that the inequality~\eqref{eq:qx-bound} is in fact an equality. 
Indeed, conditioning on $Y_0=y$, suppose first that 
changing the value of $y_1$ (the first coordinate of $y$) to $-y_1$ 
does not traverse a marked edge. Namely, for $y'=(-y_1,y_2,\ldots,y_m)$, the edge $(y,y')$ is not marked.
In this case, $p_r(y)=\pr(\mathcal{E}_1^c\,|\,Y_0=y)=1$, which means both $\log p_r(y)$ and $\pr(\mathcal{E}_1\,|\,Y_0=y)$ are equal to $0$,
i.e.,~\eqref{eq:qx-bound} holds with equality when conditioned on $Y_0=y$.

On the other hand, if the switch from $y_1$ to $-y_1$
does traverse a marked edge, then $p_r(y)=\pr(\mathcal{E}_1^c\,|\,Y_0=y)=1/2$ and $\pr(\mathcal{E}_1\,|\,Y_0=y)=1/2$ 
(as the switch occurs with probability $1/2$). Thus 
\begin{align*}
\log p_r(y)&=\log(1/2) = -2\log 2 \,\pr(\mathcal{E}_1\,|\,Y_0=y),
\end{align*}
and the identity verifies again. The claim for the base case $r=1$ now follows by averaging over $Y_0$.

We now prove the induction step.
Assume the assertion holds for $r'\le r-1$. 
We now establish it for $r$. For any  $y\in \{\pm 1\}^{m}$ let $p_{1,r}(y)$ 
be the probability that the 
path $Y_1,Y_2,\ldots,Y_r$ (note that it starts from $Y_1$ not $Y_0$) 
does not contain any marked edges $(Y_{t-1},Y_{t}),\, 2\le t\le r$, 
when  conditioned on $Y_1=y$. Observe that by our inductive assumption we have for uniformly random $Y \in \{\pm 1\}^m$,
\begin{align}\label{eq:1-to-m}
-\E \log p_{1,r}(Y) \le 2\log 2 \sum_{2\le t\le r}\pr(\mathcal{E}_t).
\end{align}

For any $y\in \{\pm 1\}^{m}$, let $y_{\pm}$ be obtained from $y$ by forcing coordinate
$1$ to be $\pm 1$ (respectively). For $Y \in \{\pm 1\}^m$ uniformly random, we immediately have
\begin{align}\label{eq:qm-expanded}
\E \log p_r(Y)= \EE \left[(1/2)\log p_r(Y_+)+(1/2)\log p_r(Y_-)\right].
\end{align}
Let $\mathcal{F}$ be the event that flipping coordinate $1$ of $Y$ traverses a marked edge, i.e., the event that $(Y_+, Y_-)$ is marked. 
Note that this
event is measurable (determined) by the realization of $Y$. Also note that $\pr(\mathcal{E}_1)=(1/2)\pr(\mathcal{F})$, since $Y_0$ has the same distribution as $Y$, and the likelihood of a swap (i.e., $Y_1 \ne Y_0$) is $1/2$.

We claim
\begin{align}\label{eq:J0+}
p_r(Y_+)=(1/2)p_{1,r}(Y_{+})+(1/2)p_{1,r}(Y_{-})\mb{1}(\mathcal{F}^c).
\end{align}
We justify this identity as follows. Conditioned on $Y_0 = Y_+$, we have that with probability $1/2$, $Y_1 = Y_+$. In this case, no marked edge is encountered at the first step, and the probability that no marked edges are encountered during the remaining $r-1$ steps is $p_{1,r}(Y_{1,+})$. On the other hand, with probability $1/2$ we have $Y_1 = Y_-$. In this case no marked edges are encountered provided that $\mathcal{F}^c$ occurs (so that the edge $(Y_+,Y_-)$ traversed on the first step is unmarked) and furthermore no marked edges are 
encountered during the remaining $r-1$ resamplings, the probability of which is $p_{1,r}(Y_{-})$. 

Similarly, 
\begin{align}\label{eq:J0-}
p_r(Y_{-})=(1/2)p_{1,r}(Y_{-})+(1/2)p_{1,r}(Y_{+})\mb{1}(\mathcal{F}^c).
\end{align}
If the event $\mathcal{F}^c$ occurs, the right-hand sides of the
expressions (\ref{eq:J0+}) and (\ref{eq:J0-}) are identical, so 
on this event we obtain by the concavity of $\log$
\begin{align*}
(1/2)\log p_r(Y_{+})+(1/2)\log p_r(Y_{-})&=\log\left((1/2)p_{1,r}(Y_{+})+(1/2)p_{1,r}(Y_{-})\right) \\
&\ge (1/2)\log\left(p_{1,r}(Y_{+})\right)+(1/2)\log\left(p_{1,r}(Y_{-})\right).
\end{align*}
On the other hand, if the event $\mathcal{F}$ occurs then
\begin{align*}
\log p_r(Y_{+})=\log\left((1/2)p_{1,r}(Y_{+})\right),
\end{align*}
and
\begin{align*}
\log p_r(Y_{-})=\log\left((1/2)p_{1,r}(Y_{-})\right).
\end{align*}
In this case,
\begin{align*}
(1/2)\log p_r(Y_{+})+(1/2)\log p_r(Y_{-})&=
(1/2)\log\left((1/2)p_{1,r}(Y_{+})\right)+(1/2)\log\left((1/2)p_{1,r}(Y_{-})\right) \\
&=-\log 2+(1/2)\log\left(p_{1,r}(Y_{+})\right)+(1/2)\log\left(p_{1,r}(Y_{-})\right).
\end{align*}
Combining, we obtain
\begin{align*}
&(1/2)\log p_r(Y_{+})+(1/2)\log p_r(Y_{-}) \ge -(\log 2)\mb{1}(\mathcal{F})+(1/2)\log\left(p_{1,r}(Y_{+})\right)+(1/2)\log\left(p_{1,r}(Y_{-})\right).
\end{align*}

\noindent Applying (\ref{eq:qm-expanded}) we obtain
\begin{align*}
-\E \log p_r(Y) &\le (\log 2)\pr(\mathcal{F})-(1/2)\E \log p_{1,r}(Y_{+})-(1/2)\E \log p_{1,r}(Y_{-}) \\
&=(\log 2)\pr(\mathcal{F})-\E \log p_{1,m}(Y).
\end{align*}
Recalling $\pr(\mathcal{F})=2\pr(\mathcal{E}_1)$ and applying (\ref{eq:1-to-m})
we obtain the claim.
\end{proof}

\subsection{Total Influence and the Likelihood of No Large Jumps}

{In this section we will specialize the result from the previous section to a particular choice of ``marked'' edges. } We begin by recalling several notions from Boolean/Fourier analysis on $\{\pm 1\}^m$; 
see e.g.~\cite{o-book} for a reference. In our setting, we will have $m={n\choose 2}$.
Consider the standard Fourier expansion of functions on $\{\pm 1\}^{m}$ associated with the
uniform measure on $\{\pm 1\}^{m}$. The basis for this expansion is composed of monomials of the form
$y_S := \prod_{t\in S}y_t,\, S\subset [m]$. For every function $g:\{\pm 1\}^{m}\to \R$, the associated
Fourier coefficients are 
\begin{align*}
\hat g_S=\E[g(Y) Y_S], \qquad S\subset [m],
\end{align*}
where the expectation is with respect to the uniform measure on $Y\in  \{\pm 1\}^{m}$. 
Then the Fourier expansion of $g$ is 
$g(y)=\sum_{S\subset [m]} \hat g_S y_S$, and the Parseval (or Walsh) identity states that $\sum_S \hat g_S^2=\E[g(Y)^2]$. 
For $i \in [m]$, let $L_i$ denote the Laplacian operator:
\begin{align}\label{eq:L-in-Fourier}
L_i g(y) =\sum_{S\ni i}\hat g_S y_S,
\end{align}
and let 
$I(g)$ be the total influence
\begin{align}\label{eq:I-in-L}
I(g)= \sum_{i\in  [m]} \E[(L_i g(Y))^2]  
=\sum_{S \subset [m]} |S|\, \hat g_S^2. 
\end{align}
Also note that
\begin{align}
\E[(L_i g(Y))^2]={1\over 2}\left(\E[(L_i g(Y_{-i,-1}))^2]+\E[(L_i g(Y_{-i,1}))^2]\right) \label{eq:L-i-conditioned}
\end{align}
where for $\ell=\pm 1$, $y_{-i,\ell}\in \{\pm1\}^{m}$ is obtained from $y$ by fixing
the $i$-th coordinate of $y$ to $\ell$. 

Fix any function $g:\{\pm 1\}^m\to \R^n$ and $\kappa\in (0,1)$. We write $g = (g_1,\ldots,g_n)$, 
with each $g_j:\{\pm 1\}^m\to \R$.
We call and edge $(y,y')\in E_n$ (as defined in Section~\ref{sec:rand-walk}) $\kappa$-marked if 
\begin{equation}\label{eq:def-marked}
\|g(y')-g(y)\|_2^2\ge \kappa\, \E[\|g(Y)\|_2^2],
\end{equation}
where the expectation is over uniformly random $Y \in \{\pm 1\}^m$. 
Thus, the event of encountering no marked
edges in the process $(Y_t)_{0 \le t \le Tm}$ (defined in Section~\ref{sec:rand-walk}) amounts to saying that all distances $\|g(Y_t)-g(Y_{t-1})\|_2^2$
are at most $\kappa\, \E[\|g(Y)\|_2^2]$. 
As before, we let $\mathcal{E}_t$ denote
the event that  $Y_{t-1}\ne Y_t$ and the edge $(Y_{t-1},Y_t)$ is marked.
Our next goal is bounding the likelihood of encountering no marked edges in terms of the total influence.

\begin{theorem}\label{thm:no-jumps-vs-influence}
For any $g\in \{\pm 1\}^m\to \R^n$, $\kappa\in (0,1)$, and integer $T\geq 1$, the probability of encountering no $\kappa$-marked edges in the 
sequence $Y_0,\ldots,Y_{Tm}$ satisfies
\begin{align*}
\pr\left(\cap_{t\in [Tm]}\,\mathcal{E}_t^c\right)\ge \exp\left(-{4(\log 2) T\over\kappa\, \E\left[\|g(Y)\|_2^2\right]} 
\sum_{j\in [n]}I(g_j)\right)
\end{align*}
where the expectation is over uniformly random $Y \in \{\pm 1\}^m$.
\end{theorem}

\begin{proof}
We have
\begin{align*}
\sum_{j\in [n]}I(g_j)&=\sum_{j\in [n]}\sum_{i\in [m]} \E[(L_i g(Y))^2]
={1\over 2}\sum_{j\in [n]}\sum_{i\in [m]}
\left(\E[(L_{i} g_j(Y_{-{i},-1}))^2]+\E[(L_{i} g_j(Y_{-{i},1}))^2]\right).
\end{align*}
Using the inequality $a^2+b^2\ge (1/2)(a-b)^2$, the right-hand side is at least
\begin{align*}
\sum_{j\in [n]}\sum_{i\in [m]}
{1\over 4}\left(\E[(L_{i} g_j(Y_{-{i},-1})-L_{i} g_j(Y_{-{i},1}))^2]\right),
\end{align*}
Using  (\ref{eq:L-in-Fourier}),
\begin{align*}
L_{i} g_j(Y_{-{i},-1})-L_{i} g_j(Y_{-{i},1})=g_j(Y_{-{i},-1})-g_j(Y_{-{i},1}),
\end{align*}
implying
\begin{align*}
\sum_{j\in [n]}I(g_j)&\ge {1\over 4}\sum_{j\in [n]}\sum_{i\in [m]}
\E\left[\left(g_j(Y_{-{i},-1})-g_j(Y_{-{i},1})\right)^2\right] \\
&={1\over 4}\sum_{i\in [m]}
\E\left[\|g(Y_{-{i},-1})-g(Y_{-{i},1})\|_2^2\right] \\
&\ge {1\over 2}\kappa\,\E[\|g(Y))\|_2^2]\sum_{i\in [m]}\pr\left(\mathcal{E}_i\right),
\end{align*}
where in the last step we used the definition of ``$\kappa$-marked''~\eqref{eq:def-marked}, the definition of $\mathcal{E}_t$, and the fact that 
the event $Y_t\ne Y_{t-1}$ occurs with
probability $1/2$; in particular, the edge $(Y_{-i,-1},Y_{-i,1})$ is marked with probability $2 \pr(\mathcal{E}_i)$.  
Next we observe that 
\begin{align*}
\sum_{t\in [Tm]}\pr\left(\mathcal{E}_t\right)=T\sum_{i\in [m]}\pr\left(\mathcal{E}_i\right).
\end{align*}
Applying Proposition~\ref{prop:no-jump-events}, we obtain the claim.
\end{proof}

\subsection{Encoding Bit Representation of Graphs into Circuits}
Recall the bit representation $U$ of the random graph $Y\sim \G(n,q_n)$  where $q_n = d/\g{n}$ 
through the function $\Phi(U)=Y$. 
Let $b_0,b_1,\ldots,b_{d-1}$ be the $\lceil\log_2 n\rceil$-bit binary encoding of the numbers $0,1,\ldots,d-1$,
so that each $Y_{ij}=1$ if $U_{ij}$ is one of $b_0,\ldots,b_{d-1}$ and $Y_{ij}=0$ otherwise.
We expand each such $b_r$ as $b_r(1),\ldots,b_r(\lceil\log_2 n\rceil)$.

Our next step is verifying that the function $\Phi$ can be represented by a depth-3 polynomial-size Boolean circuit
in a rather straightforward way.

\begin{lemma}\label{lemma:uniform-to-bias-circuit}
The function $\Phi$ can be represented by a depth-$3$ polynomial-size (in $n$) Boolean circuit.
\end{lemma}

\begin{proof}
We begin by making simplifying observations regarding general Boolean circuits. 
Suppose that we need to encode a function of the form $x \mapsto {\bf 1}(x=c)$, where $c$ (either zero or one) is a fixed constant. Namely, the function outputs one when $x=c$ and zero otherwise.
Then more specifically, when $c=0$, this function takes value $1$ when $x=0$ and value $0$ when $x=1$.
This is simply $\neg x$. In the case $c=1$, the function ${\bf 1}(x=c)$ is simply $x$. It will be convenient to
encode gates with functions of the form ${\bf 1}(x=c)$ without expanding them by cases $c=0$ and $c=1$.

We describe the required circuit $C$ by layers and gates connecting the layers. The input
layer consists of $M={n\choose 2}\lceil\log_2 n\rceil$  nodes associated with ${n\choose 2}$ segments, each length $\lceil\log_2 n\rceil$.
For $1 \le i < j \le n$, we denote the nodes associated with the  $(i,j)$-the segment by $u_{ij}(1),\ldots,u_{ij}(\lceil\log_2 n\rceil)$. 
With some abuse of notation we think of $u_{ij}(\cdot)$ also as Boolean input variables.
Note that $M$ is also the number of input variables.

The number of output nodes is ${n\choose 2}$ and the nodes are denoted by $y_{ij},\, 1\le i<j\le n$. 
Each output variable $y_{ij}$ is computed by the following Boolean function of the input variables:

\begin{align*}
\bigvee_{0\le r\le d-1}
\left(\bigwedge_{\ell \in \lceil\log_2 n\rceil}\mb{1} \left(u_{ij}(\ell)=b_r(\ell)\right)\right).
\end{align*}
The resulting output in the output layer is then exactly $y=\Phi(u)$ for every $u\in \{0,1\}^{m}$. This circuit has depth 3 because there are 4 layers of nodes: input, output, and two intermediate (one to compute the ``and'' and one for the ``or''). The total number of nodes is (broken down by layer) $M + d (\lceil\log_2 n\rceil) \binom{n}{2} + d \binom{n}{2} + \binom{n}{2}$, which is polynomial in $n$ (for any fixed $d$).
\end{proof}

\subsection{Proof of Theorem~\ref{theorem:Main-ind-set}}
\label{sec:pf-main}

Fix any circuit $C\in \mathcal{C}(n,p(n),\rho)$. We compose this circuit with the circuit $\Phi$
constructed in Lemma~\ref{lemma:uniform-to-bias-circuit} to obtain a new circuit $C\circ \Phi$ with depth at most $p(n)+3$ acting on strings
in $\{0,1\}^{M}$. Recall the notation $M={n\choose 2}\lceil\log_2 n\rceil$. We study the size $s(C\circ \Phi)$  of this combined
circuit $C\circ \Phi$. It suffices to establish the following lower bound on $s(C\circ \Phi)$:
\begin{align}
s(C {\circ \Phi})\ge n^{(\log n)^{\alpha/2}}, \label{eq:alpha-over-2}
\end{align} 
as the  size of the circuit $\Phi$ is only polynomial in $n$.

We assume without loss of generality that the input space for $C$ is $\{\pm 1\}^{M}$,
as opposed to $\{0,1\}^M$, 
for consistency with the Fourier-analytic notation earlier. 
The notion of edges in $\{\pm 1\}^{M}$
is the same as before: $(u,u')$ is an edge if $u$ and $u'$ differ in exactly one coordinate. Fix 
$\kappa>0$, the actual value of which will be set below.
Consider now the following marking.
An edge $(u,u')$ is defined to be \emph{$\kappa$-marked}  if
\begin{align*}
\|C\circ \Phi(u)-C\circ \Phi(u')\|_2^2 &\ge \kappa\, \E\left[\|C\circ \Phi(U)\|_2^2\right].
\end{align*}

Fix a positive integer constant $T$ and consider the sequence  $U=U_0,U_1,\ldots,U_{TM}$, constructed as earlier
for random graphs but now for the uniform distribution on $\{\pm 1\}^M$ instead. 
We recall it for convenience: $U_0$ is generated uniformly at random from $\{\pm 1\}^M$. Once $U_0,\ldots,U_{t-1}$ are defined,
$U_t$ is obtained from $U_{t-1}$ by resampling coordinate $\ell$ of $U_{t-1}$ where $\ell = 1+(t \bmod M)$. A key step to establishing our main result is the following implication of Theorem~\ref{thm:no-jumps-vs-influence}.

\begin{theorem}\label{theorem:no-bad}
For every $d$ there exists $c=c(d)>0$ such that the following holds. For any $\rho,\kappa,T$, and  $\eta>0$,
the probability that the sequence $(U_t)_{0\le t\le TM}$ encounters no $\kappa$-marked  edges (that is, no pair
$(U_{t-1},U_t),\, t\in [TM]$ is a $\kappa$-marked edge)
is at least
\begin{align}\label{eq:rate}
\exp\left(-\frac{c T}{\rho \kappa }\left(\log s(C\circ \Phi)\right)^{(1+\eta)p(n)}\right),
\end{align}
for all large enough $n$ and any $C\in\cC(n,p(n),\rho)$.
\end{theorem}

Note that this probability converges to zero since $p(n)\ge 1$. For our purposes, we will need the rate 
of convergence in this bound to be no faster than exponential. In particular, this requirement will control our depth bound.
Before proving this result, let's first see how it implies Theorem~\ref{theorem:Main-ind-set}.

\begin{proof}[Proof of Theorem~\ref{theorem:Main-ind-set}]  
Fix $\rho>1/2$ and let $\epsilon>0$ be such that $\rho=(1+\epsilon)/(2(1-\epsilon))$. Fix 
 any circuit $C\in \mathcal{C}(n,p(n),\rho)$. 
Denote the output produced by the circuit, $C\circ \Phi$, acting on 
$U_t$ by $I_t, 0\le t\le TM$, each viewed either as a vector in  $\{0,1\}^n$ or as a subset of $[n]$.
Recall Theorem~\ref{thm:multi-OGP} and
fix $T=\lceil 1+5/\eps^2\rceil$ as in that theorem. By Frieze's bound \eqref{eq:Max-IS-ER}, there exists $d'_1,\gamma_1>0$ such that for all $d\ge d'_1$ and all large enough $n$, with probability at least $1-(TM+1)\exp(-\gamma_1 n)$, the maximal objective values associated with instances $U_t$ (or more precisely, the ones associated with graphs $Y_t=\Phi(U_t)$) are  all at least
$(1-\eps)2\phi_{n,d}$ and at most $(1+\eps)2\phi_{n,d}$.
Applying Lemma~\ref{lemma:separation-ind-sets} for 
\begin{align}\label{eq:a}
c=2T(1+\epsilon), 
\end{align}
there exists $d'_2,\gamma_2>0$ such that for all $d\ge d'_2$ 
\begin{align}\label{eq:intersection-union}
\max_{t\le (T-1)M,R\subset \{0,\ldots,t\},|R|=T} |I_{t+M}\cap \left(\cup_{r\in R}I_r\right)|
\le (\epsilon/4)\phi_{n,d}.
\end{align}
with probability at least $1-\exp(-\gamma_2 n)$ for all large enough $n$. This holds 
since the instance
$U_{t+M}$ is  independent from the instances $U_0,\ldots,U_t$ for every $t\le TM-M$.

Finally, by  Theorem~\ref{thm:multi-OGP},  there exists a large enough $d'_3$ and $\gamma_3$  such that for 
all large enough $n$, 
with probability at least $1-\exp(-\gamma_3 n)$ the m-e-OGP holds for all $d\ge d'_3$ 
for our choice of $\epsilon$ and $T=K$ and the sequence $U_t, 0\le t\le TM$. 
(Note that we use the theorem in the regime where the length of the sequence is $O(n^2\log n)$, whereas it holds for any
sequence of length $n^{O(1)}$.)

Take $d_0=\max(d'_1,d'_2,d'_3)$ and $\gamma<\min(\gamma_1,\gamma_2,\gamma_3)$
so that for $d\geq d_0$, all of these three events hold  
with probability at least $1-\exp(-\gamma n)$ for all large enough $n$. 
We have used here that $T$ is a constant and $M$ is only polynomially large in $n$.
Call this event $\mathcal{A}$ 
so that when $d\ge d_0$, we have that
$\pr(\mathcal{A})\ge 1-\exp(-\gamma n)$ for all large enough $n$. 

Let $\kappa = \eps/{12}$ and denote the event described in Theorem~\ref{theorem:no-bad}---namely the event that 
no pair $(U_{t-1},U_{t}), t\in [TM]$ is a $\kappa$-marked edge---by $\mathcal{B}$.
Choose $\eta > 0$ small enough so that 
\begin{align}\label{eq:eta}
{1+\alpha\over 1+\eta}> 1+{\alpha/2}.
\end{align}
By Theorem~\ref{theorem:no-bad} applied to this choice of $\eta$,
since $\rho,c,\kappa,T$ are constants, we have
\begin{align*}
\pr(\mathcal{B}) \ge \exp\left(-\left(\log s(C\circ \Phi)\right)^{(1+\eta) p(n)}\right),
\end{align*}
for all large enough $n$.

We claim that $\mathcal{B}\subset \mathcal{A}^c$. Assuming this claim, we obtain for any {$\eta>0$},
\begin{align*}
\exp\left(-\left(\log s(C\circ \Phi)\right)^{(1+\eta) p(n)}\right) \le \pr(\mathcal{B}) 
\le \pr(\mathcal{A}^c) \le \exp(-\gamma n),
\end{align*}
for all large enough $n$, implying
\begin{align*}
(1+\eta) p(n)\log\log s(C\circ \Phi)\ge  \log n+\log \gamma.
\end{align*}
In light of the assumed depth bound 
\begin{align*}
p(n)\le {\log n\over (1+\alpha)\log\log n},
\end{align*}
and using  (\ref{eq:eta})
we obtain a superpolynomial (\ref{eq:alpha-over-2}) as claimed.

It remains to prove the claim $\mathcal{B}\subset \mathcal{A}^c$. The proof is by contradiction. 
Suppose that $\mathcal{B}\cap \mathcal{A}$ is non-empty.
Note that on $\mathcal{A}$, the largest independent is of
size at least $2(1-\epsilon)\phi_{n,d}$ and at most $2(1+\epsilon)\phi_{n,d}$
for every graph $\Phi(U_t)$.
By our assumptions  on the  family $\mathcal{C}$, 
the outputs $I_t$ of the circuit $C\circ \Phi$ acting on $U_t$ are independent sets 
with size at least $\rho \cdot 2(1-\epsilon)\phi_{n,d}=(1+\epsilon)\phi_{n,d}$. (We have used here our
definition  of $\epsilon$.)

On the event $\mathcal{A}$ we have
$\|C\circ \Phi(U)\|_2^2 \leq 2(1+\epsilon)\phi_{n,d}$ for $d\ge d_0$ and $n$ large enough. Since $\pr(\mathcal{A})\ge 1-\exp(-\gamma n)$ and $\|C\circ \Phi(U)\|_2^2 \leq n$, this implies that $\E[\|C\circ \Phi(U)\|_2^2] \le 3 \phi_{n,d}$ for $d\ge d_0$ and $n$ large enough. 
Since, by the  event $\mathcal{B}$, each pair $(U_{t-1},U_t)$ is not a $\kappa$-marked edge, we then have by
our choice of $\kappa=\epsilon/{12}$,
\begin{align} \label{eq:at-most-kappa}
|I_{t-1} \bigtriangleup I_t| = \|I_t-I_{t-1}\|_2^2 &\le \kappa \E[\|C\circ \Phi(U)\|_2^2] \le 3 \kappa \phi_{n,d}
\le (\epsilon/4)\phi_{n,d}.
\end{align}

Now we construct a collection of independent sets $J_1,\ldots,J_T$ violating the event $\mathcal{A}$,
thus obtaining a contradiction.
To this end,  let $J_1=I_0$. By \eqref{eq:intersection-union} applied for $R=\{0\}$ which holds since the event
 $\mathcal{A}$ holds, we have $|I_M\cap J_1|\le (\epsilon/4)\phi_{n,d}$.
Let $0\le s\le M$ be the smallest index for which $|I_s\cap J_1|\le (\epsilon/4)\phi_{n,d}$, which
exists by the above and the fact $|I_0\cap J_1|=|I_0|=|J_1|\ge (1+\epsilon)\phi_{n,d}$. 
Clearly $s\ge 1$. Then $|I_{s-1}\cap J_1|> (\epsilon/4)\phi_{n,d}$.
Combining this with (\ref{eq:at-most-kappa}) applied to $t=s$, we have that
\begin{align*}
|I_{s-1}\cap J_1|&=|\left((I_{s-1}\setminus I_{s})\cap J_1\right)\cup \left((I_{s-1}\cap I_{s})\cap J_1\right)| \\
&\le |I_{s-1}\setminus I_{s}|+|I_{s}\cap J_1| \le (\epsilon/2)\phi_{n,d}.
\end{align*}
In conclusion,
\begin{align}\label{eq:in-gap}
|I_{s-1}\cap J_1|\in [(\epsilon/4)\phi_{n,d},(\epsilon/2)\phi_{n,d}].
\end{align}
Let $J_2=I_{s-1}$. 

Next we construct $J_3$. By (\ref{eq:intersection-union}) applied for $R=\{0,s-1\}$ we have
\begin{align*}
|I_{s-1+M}\cap (J_1\cup J_2)|\le (\epsilon/4)\phi_{n,d}.
\end{align*}
Repeating the earlier argument, we can find $s \le u \le s-1+M \le 2M$ such that (\ref{eq:in-gap}) 
holds with $u$ replacing $s$ and $J_1\cup J_2$ replacing $J_1$, and we set $J_3=J_{u-1}$. 
We continue along similar lines until the sequence $J_1,\ldots,J_T$ is constructed.

We have constructed a sequence of $T$ time indices in $0, \ldots, TM$ and  independent
sets $J_1,\ldots,J_T$ in graphs corresponding to these time indices 
each with size at least $(1+\epsilon)\phi_{n,d}$ satisfying
\begin{align*}
|J_t\setminus (\cup_{1\le \ell<t} J_\ell)|\in \left[{\epsilon \over 4}\phi_{n,d},{\epsilon \over 2}\phi_{n,d}\right]
\end{align*}
for $t \ge 2$. This violates the m-e-OGP and therefore the event $\mathcal{A}$. We conclude that $\mathcal{A}\cap\mathcal{B}=\emptyset$
as claimed. 
\end{proof}

\begin{proof}[Proof of Theorem~\ref{theorem:no-bad}]
Let $C_j\circ\Phi(u)$ be the $j$-th output of the vector $C\circ \Phi(u)$ with $j\in [n]$. 
Note that each $C_j\circ\Phi$ is itself a circuit of depth at most $p(n)+3$ and some size $s(C_j\circ\Phi)$. 
Thus by the influence version of the Linal--Mansour--Nisan theorem  \cite[Thm 4.30]{o-book}, 
there is a universal constant $\bar c>0$ 
 such that the influence of each $C_j$  satisfies
\[ 
I(C_j\circ\Phi) \leq ( \bar c \log s(C_j\circ \Phi))^{p(n)+2}\leq (\bar c\log s(C\circ\Phi))^{p(n)+2}.
\]
Summing this in $n$ and using the fact that without loss of generality $s(C\circ \Phi)\geq n$, we see that applying Theorem~\ref{thm:no-jumps-vs-influence} with $g_j=C_j, j\in [n]$ yields
\begin{align*}
\pr\left(\cap_{t\in [TM]}\mathcal{E}_t^c\right)\ge 
\exp\left(-{4(\log 2)T n\over\kappa \E\left[\|C\circ \Phi(U)\|_2^2\right]} 
\left(\log s(C\circ \Phi)\right)^{(1+\eta)p(n)}\right),
\end{align*}
for any $\eta>0$ and $n$ large enough.

Now, $\|C\circ \Phi(U)\|_2^2$ is the size of the independent set produced by the 
circuit with graph $\G(n,d/n)$ as an input (or more precisely with $U$ as an input) which, by assumption, is  a
constant factor of $\rho$ away from optimality. So by Frieze's bound \eqref{eq:Max-IS-ER} its expectation is at least $\rho \gamma n$ for some constant $\gamma$ which depends
on $d$.  Thus we have that for $n$ large enough,
\begin{align*}
\pr\left(\cap_{t\in [TM]}\mathcal{E}_t^c\right)\ge \exp\left(-{c T \over  \rho \kappa } 
\left(\log s(C\circ \Phi)\right)^{(1+\eta)p(n)}\right),
\end{align*}
for some $c$ which depends only on $d$.
\end{proof}

\appendix

\section{More on Low-Degree Algorithms}
\label{app:low-deg-alg}

Here we discuss some classes of algorithms that can be represented by, or approximated by, low-degree polynomials. We provide proof sketches of how to write these algorithms as polynomials and discuss what degree is required to do so. We consider polynomials $f: \RR^m \to \RR^n$ of the form $f(Y) = (f_1(Y),\ldots,f_n(Y))$ where each $f_j$ is a polynomial of degree (at most) $D = D(n)$. Here $m = m(n)$ is the dimension of the input. We allow $f$ to have random coefficients (although they must be independent from the input $Y$).

\paragraph{Spectral methods.}

Consider the following general class of spectral methods. For some $N = N(n)$, let $M = M(Y)$ be an $N \times N$ matrix whose entries are polynomials in $Y$ of degree at most $d = O(1)$. Then compute the leading eigenvector of $M$. The leading eigenvector can be approximated via the power iteration scheme: iterate $u^{t+1} \leftarrow M u^t$ starting from some initial vector $u^0$ (which may be random but does not depend on $Y$). After $t$ rounds of power iteration, the result $M^t u^0$ is a (random) polynomial in $Y$ of degree at most $td$.

One particularly simple random optimization problems is to maximize the quadratic form of a random matrix:
\begin{equation}\label{eq:max-eig}
\max_{\|x\|=1} x^\top Y x
\end{equation}
where $Y$ is a GOE matrix, i.e., symmetric $n \times n$ with $Y_{ij} = Y_{ji} \sim \mathcal{N}(0,1/n)$ and $Y_{ii} \sim \mathcal{N}(0,2/n)$. This is equivalent to the $p=2$ case of the spherical $p$-spin model. In the limit $n \to \infty$, the eigenvalues of $Y$ follow the semicircle law on $[-2,2]$, and in particular the optimal value of~\eqref{eq:max-eig} converges to $2$. In order to find an $x \in \RR^n$ achieving $x^\top Y x / \|x\|_2^2 \ge 2 - \varepsilon$ for a constant $\varepsilon > 0$, it suffices to run power iteration for a constant $C = C(\varepsilon)$ number of rounds: $x = Y^C u^0$ where $u^0 \sim \mathcal{N}(0,I)$. While this $x$ does not satisfy the constraint $\|x\|_2 = 1$ exactly, $\|x\|_2$ concentrates tightly around its expectation and so by rescaling we can arrange to have $\|x\|_2 = 1 \pm o(1)$ with high probability. (As we have done throughout this paper, we only ask that a low-degree polynomial outputs an \emph{approximately} valid solution in the appropriate problem-specific sense. This solution can then be ``rounded'' to a valid solution via some canonical procedure, which in this case is simply normalization: $x \leftarrow x/\|x\|_2$.) Thus, a near-optimal solution to~\eqref{eq:max-eig} can be obtained via a constant-degree polynomial.

\paragraph{Iterative methods and AMP.}

Suppose we start with a random initialization $u^0$ and carry out the iteration scheme $u^{t+1} \leftarrow F_t(u^0,\ldots,u^t;Y)$ where $F_t$ is a degree-$d$ polynomial in all of its inputs. Then $u^t$ can be written as a (random) polynomial in $Y$ of degree at most $d^t$. In contrast to the linear update step in power iteration, here the degree can grow exponentially in the number of iterations.

Approximate message passing (AMP) \cite{amp-cs} is a class of iterative methods that give state-of-the-art performance for a variety of statistical tasks, both for estimation problems with a planted signal and for (un-planted) random optimization problems. AMP iterations typically involve certain non-linear transformations, applied entrywise to the current state vector. By replacing each non-linear function with a polynomial of large constant degree, one can approximate each AMP iteration arbitrarily well by a constant-degree polynomial. The existing AMP algorithms for spin glass optimization problems~\cite{montanari-sk,EMS-opt} only need to run for a constant $C(\varepsilon)$ number of iterations in order to reach objective value $\mu^* - \varepsilon$ (for some problem-specific threshold $\mu^*$). As in the previous discussion, AMP is guaranteed to output a nearly-valid solution in the appropriate sense. Thus, the AMP algorithms for spin glass optimization are captured by constant-degree polynomials: the objective value $\mu^* - \varepsilon$ can be obtained by a polynomial of degree $D(\varepsilon)$.

\paragraph{Local algorithms on sparse graphs.}

Now consider the setting where the input is a random graph of constant average degree, e.g., a random $d$-regular graph or an \ER $\G(n,d/n)$ graph, where $d$ is a constant. The algorithm's output is a vector in $\RR^n$, which could be, for example, (an approximation of) the indicator vector for an independent set. We consider ``local algorithms'' in the sense of the \emph{factors of i.i.d.}\ model~\cite{LauerWormald}, defined as follows. For the algorithm's internal use, we attach a random variable $z_i$ to each vertex $i$; these are i.i.d.\ from some distribution of the algorithm designer's choosing. For a constant $r$, an algorithm is called $r$-\emph{local} if the output associated to each vertex depends only on the radius-$r$ neighborhood of that vertex in the graph, including both the graph topology and the labels $z_i$.

We can see that any $O(1)$-local algorithm can be approximated by a constant-degree random polynomial as follows. Suppose we have an $r$-local algorithm $L(Y)$, where $Y$ is the adjacency matrix of the graph. Let $\Delta$ be a constant. For each $i$ there is a random polynomial $f_i(Y)$ of constant degree $D(r,\Delta)$ such that $f_i(Y) = L_i(Y)$ whenever the radius-$r$ neighborhood of vertex $i$ contains at most $\Delta$ vertices. We can construct $f_i$ as follows. It is sufficient to consider fixed $\{z_i\}$, since these can be absorbed into the randomness of $f_i$. Thus, $L_i(Y)$ is determined by the radius-$r$ neighborhood of $i$, i.e., the subgraph consisting of edges reached within distance $r$ from $i$. Consider a fixed graph $\mathcal{N}$ (on the same vertex set as $Y$) spanning at most $\Delta$ vertices. The $\{0,1\}$-indicator that $\mathcal{N}$ is a subgraph of the radius-$r$ neighborhood of $i$ can be written as a constant-degree polynomial (where the degree is the number of edges in $\mathcal{N}$). We can now form $f_i$ by summing over all possibilities for $\mathcal{N}$ and using inclusion-exclusion. By choosing $\Delta$ large (compared to $d$), we can ensure that $f$ agrees with $L$ except on an arbitrarily small constant fraction of vertices. (Note that a small fraction of errors of this type are allowed by our lower bounds.) Thus, local algorithms of constant radius can be captured by constant-degree polynomials. The above proof sketch has been made formal in~\cite{wein2020optimal}.

\paragraph{Exhaustive search.}

One (computationally-inefficient) way to solve an optimization problem is by exhaustive search over all possible solutions. It will be instructive to investigate at what degree such an algorithm can be approximated by a polynomial. As an example, consider the Ising $p$-spin optimization problem
\begin{equation}\label{eq:max-ising}
\max_{x \in \Sigma_n} \langle Y,x^{\otimes p} \rangle
\end{equation}
where $\Sigma_n = \{+1,-1\}^n$ and $Y$ is a $p$-tensor with i.i.d.\ $\mathcal{N}(0,1)$ entries. In time $\exp(O(n))$, one can enumerate all possible solutions and solve~\eqref{eq:max-ising} exactly. To approximate this by a polynomial, consider
\[ f(Y) = \sum_{x \in \Sigma_n} x \langle Y,x^{\otimes p} \rangle^{2k} \]
for some power $k = k(n) \in \NN$. Note that if $k$ is large enough, the sum will be dominated by the term corresponding to the $x^*$ which maximizes~\eqref{eq:max-ising}, and so $f(Y)$ will be approximately equal to a multiple of $x^*$. For the spherical $p$-spin optimization problem, one can similarly replace the sum over $\Sigma_n$ by an integral over the sphere $\cS_n$. A heuristic calculation based on the known behavior of near-maximal states~\cite{extremal-pspin} indicates that $k$ should be chosen as $k = n^{1+o(1)}$ in order for $f(Y)$ to be close to the optimizer, in which case $f$ has degree $n^{1+o(1)}$ (we thank Eliran Subag and Jean-Christophe Mourrat for helpful discussions surrounding this point). This is consistent with a phenomenon that has been observed in various hypothesis testing settings (see~\cite{sam-thesis,lowdeg-notes,subexp-sparse}): the class of degree-$n^\delta$ polynomials is at least as powerful as all known $\exp(n^{\delta-o(1)})$-time algorithms.

\paragraph{Comparison to the hypothesis testing setting.}

Above, we have discussed how algorithms for random optimization problems can be represented as polynomials. The original motivation for this comes from the well-established theory of low-degree algorithms for hypothesis testing problems (see~\cite{sam-thesis,lowdeg-notes}). In the hypothesis testing setting, it is typical for state-of-the-art polynomial-time algorithms to take the form of spectral methods, i.e., the algorithm decides whether the input $Y$ was drawn from the null or planted distribution by thresholding the leading eigenvalue of some matrix $M = M(Y)$ whose entries are constant-degree polynomials in $Y$. To approximate this by a polynomial, consider $f(Y) = \Tr(M^{2k}) = \sum_i \lambda_i^{2k}$ for some power $k = k(n) \in \NN$, where $\{\lambda_i\}$ denote the eigenvalues of $M$. If $k$ is large enough, $f(Y)$ is dominated by the leading eigenvalue. Typically, under the planted distribution there is an outlier eigenvalue that exceeds, by a constant factor, the largest (in magnitude) eigenvalue that occurs under the null distribution. As a result, $k$ should be chosen as $k = \Theta(\log n)$ in order for $f$ to consistently distinguish the planted and null distributions in the appropriate formal sense (see~\cite[Theorem~4.4]{lowdeg-notes} for details). For this reason, low-degree algorithms in the hypothesis testing setting typically require degree $O(\log n)$. In contrast, we have seen above that for random optimization problems with no planted signal, constant degree seems to often be sufficient.

\section*{Acknowledgments}

For helpful discussions, A.S.W.\ is grateful to L\'eo Miolane, Eliran Subag, Jean-Christophe Mourrat, and Ahmed El Alaoui.
The authors are very grateful to Benjamin Rossman for educating us on the state of the art bounds in circuit complexity.
We are also very grateful to Alexander Razborov for reading an earlier version of the paper and providing several comments regarding the presentation of our results on Boolean circuit.

\bibliographystyle{alpha}
\bibliography{main,bibliography}

\end{document}